\newtheorem*{rep@theorem}{\rep@title}
\newcommand{\newreptheorem}[2]{%
	\newenvironment{rep#1}[1]{%
		\def\rep@title{#2 \ref*{##1}}%
		\begin{rep@theorem}}%
		{\end{rep@theorem}}}
\newcommand{\eq}[1]{\hyperref[eq:#1]{(\ref*{eq:#1})}}
\renewcommand{\sec}[1]{\hyperref[sec:#1]{Section~\ref*{sec:#1}}}
\newcommand{\thm}[1]{\hyperref[thm:#1]{Theorem~\ref*{thm:#1}}}
\newcommand{\lem}[1]{\hyperref[lem:#1]{Lemma~\ref*{lem:#1}}}
\newcommand{\cor}[1]{\hyperref[cor:#1]{Corollary~\ref*{cor:#1}}}
\newcommand{\app}[1]{\hyperref[app:#1]{Appendix~\ref*{app:#1}}}
\newcommand{\tabl}[1]{\hyperref[tab:#1]{Table~\ref*{tab:#1}}}
\newcommand{\defin}[1]{\hyperref[def:#1]{Definition~\ref*{def:#1}}}
\newcommand{\fig}[1]{\hyperref[fig:#1]{Figure~\ref*{fig:#1}}}
\newcommand{\clm}[1]{\hyperref[clm:#1]{Claim~\ref*{clm:#1}}}
\newcommand{\conj}[1]{\hyperref[conj:#1]{Conjecture~\ref*{conj:#1}}}
\newcommand{\rem}[1]{\hyperref[rem:#1]{Remark~\ref*{rem:#1}}}
\newcommand{\para}[1]{\hyperref[para:#1]{Paragraph~\ref*{para:#1}}}
\newcommand{\exmp}[1]{\hyperref[exmp:#1]{Example~\ref*{exmp:#1}}}
\newcommand{\appx}[1]{\hyperref[appx:#1]{Appendix~\ref*{appx:#1}}}
\newcommand{\fct}[1]{\hyperref[fct:#1]{Fact~\ref*{fct:#1}}}
\newcommand{\alg}[1]{\hyperref[alg:#1]{Algorithm~\ref*{alg:#1}}}
\newcommand{\tab}[1]{\hyperref[tab:#1]{Table~\ref*{tab:#1}}}
\newcommand{\probl}[1]{\hyperref[probl:#1]{Problem~\ref*{probl:#1}}}
\newcommand{\assump}[1]{\hyperref[assump:#1]{Assumption~\ref*{assump:#1}}}
\def\proj#1{\ket{#1}\bra{#1}}
\newcommand{\thmthm}[2]{\hyperref[thm:#1]{Theorem~\ref*{thm:#1}} and~\hyperref[thm:#2]{\ref*{thm:#2}}}
\newcommand{\lemlem}[2]{\hyperref[lem:#1]{Lemma~\ref*{lem:#1}} and~\hyperref[lem:#2]{\ref*{lem:#2}}}
\newtheorem{theorem}{Theorem}[section]
\newtheorem{lemma}[theorem]{Lemma}
\newtheorem{corollary}[theorem]{Corollary}
\newtheorem{claim}[theorem]{Claim}
\newtheorem{definition}[theorem]{Definition}
\newtheorem{problemo}[theorem]{Problem}
\newtheorem{assumpt}[theorem]{Assumption}
\definecolor{darkgreen}{rgb}{0,.5,0}
\definecolor{darkred}{rgb}{.7,.3,.3}
\definecolor{deepblue}{rgb}{0,.1,.7}
\newif\iflongversion
\newif\ifshortversion
\newif\ifediting
\def\ket#1{{\lvert}#1\rangle}
\def\bra#1{{\langle}#1\rvert}
\renewcommand{\(}{\left(}
\renewcommand{\)}{\right)}
\def\abs#1{\left| #1 \right|}
\def\norm#1{\left\| #1 \right\|}
\def\proj#1{\ket{#1}\bra{#1}}
\def\w{{\sf w}}
\def\p{{\sf p}}
\def\alt{{\sf alt}}
\newcommand{\f}{\theta}
\theoremstyle{definition}
{

}
\mathchardef\mhyphen="2D
\newcommand{\brakett}[2]{\mbox{$\langle #1  | #2 \rangle$}}
\def\({\left(}
\def\){\right)}
\definecolor{greenn}{rgb}{0,0.8,0.2}
\definecolor{bluue}{rgb}{0.3,0,0.7}
\newcommand{\N}{\Gamma}
\newcommand{\E}{\overrightarrow{E}}
\title{Multidimensional Electrical Networks and their Application to Exponential Speedups for Graph Problems}
\author{Jianqiang Li }
\affiliation{Department of Computer Science and Engineering, Pennsylvania State University, PA, \& \\ Department of Computer Science, Rice University, Huston, TX} \thanks{}
\email{ jl567@rice.edu}
\author{Sebastian Zur}
\affiliation{CWI \& QuSoft, the Netherlands}
\email{saz@cwi.nl}
\begin{document}
	
\maketitle

\newcommand{\st}{$s$-$t$ }
\newcommand{\palt}{\p^{\sf alt} }
\newcommand{\falt}{\f^{\sf alt} }
\newcommand{\du}{{\sf w}}
\newcommand{\B}{{\cal B}}
\newcommand{\A}{{\cal A}}
\newcommand{\Balt}{{\cal B}^{\sf alt}}
\begin{abstract}

 
 %

Recently, Apers and Piddock [TQC '23] strengthened the natural connection between quantum walks and electrical networks by considering Kirchhoff's Law and Ohm's Law.
In this work, we develop a new multidimensional electrical network by defining Alternative Kirchhoff's Law and Alternative Ohm's Law based on the multidimensional quantum walk framework by Jeffery and Zur [STOC '23].  In analogy to the connection between the (edge-vertex) incidence matrix of a graph and Kirchhoff's Law and Ohm's Law in an electrical network, we rebuild the connection between the alternative incidence matrix and Alternative Kirchhoff's Law and Alternative Ohm's Law.
This multidimensional electrical network framework enables generating an alternative electrical flow over the edges on graphs, which has the potential to be applied to a broader range of graph problems, benefiting both quantum and classical algorithm design.
 

We first use this framework to generate quantum alternative electrical flow states and use it to find a marked vertex in one-dimensional random hierarchical graphs as defined by Balasubramanian, Li, and Harrow [arXiv '23].  In this work, they generalised the well known exponential quantum-classical separation of the welded tree graph by Childs, Cleve, Deotto, Farhi, Gutmann, and Spielman [STOC '03] to random hierarchical graphs. Our result partially recovers their results with an arguably simpler analysis. 


Furthermore, this framework also allows us to demonstrate an exponential quantum speedup for the pathfinding problem in a type of regular graph, which we name the welded tree circuit graph. The exponential quantum advantage is obtained by efficiently generating quantum alternative electrical flow states and then sampling from them to find an 
$s$-$t$ path in the welded tree circuit graph. By comparison, Li [arXiv '23] constructed a non-regular graph based on welded trees and used the degree information to achieve a similar speedup.

\end{abstract}


\section{Introduction}

The connection between random walks, electrical networks, and Laplacian linear systems is of great significance in multiple aspects of theoretical computer science. When the edges of a graph are seen as resistors, the effective resistance between two vertices in an electrical network plays an important role in analysing the behavior of random walks (or more general Markov chains) on undirected graphs, such as their hitting time and cover time \cite{doyle1984RandomWalksAndElectriNetw,lovasz1993random}. By restating Kirchhoff's Law and Ohm's Law in electrical networks (graphs) in terms of the graph's Laplacian matrix and incidence matrix, one can compute the effective resistance and electrical flow between two vertices by solving a Laplacian linear system \cite{vishnoi2013lx}. In addition, the effective resistance and electrical flow used in electrical networks have been used to design new graph algorithms, such as graph sparsification via effective resistances \cite{spielman2008graph}, computing max flow via electrical flow \cite{christiano2011electrical}, and many more ~\cite{madry2013navigating,gao2023fully}. Furthermore, the mixing time of a random walk (Markov chain) on a connected graph is related to the smallest nonzero eigenvalue of the Laplacian matrix of the underlying graph \cite{spielman19spectral}.   
%





The connection between (discrete) quantum walks, electrical networks, and Laplacian linear systems is less understood and has evolved separately. Specifically, the connection between quantum walks and electrical networks was first established by \cite{belovs2013ElectricWalks}, where this connection was used to derive and analyse a phase estimation algorithm with the goal of detecting the existence of a marked vertex in a graph. Not until recently, this connection has been strengthened by Apers and Piddock \cite{piddock2019electricfind,apers2022elfs}  by considering Kirchhoff's Law and Ohm's Law. They showed that for the quantum walk operator based on the electrical network framework, if the phase value returned by the phase estimation algorithm from \cite{belovs2013ElectricWalks} is ``$0$'', indicating that there is a marked vertex, then the resulting state is actually a quantum state representing the electrical flow between the starting vertex and the marked vertex.  Separately, \cite{wang2017efficient} provided a quantum algorithm based on the Quantum Linear System (QLS) \cite{gilyen2018QSingValTransf} algorithm for solving a Laplacian related linear system in the analysis of large electrical networks, such as computing their electrical flow and effective resistance between two vertices. However, there is currently little known about the connection between quantum walks and the eigenvalues of the Laplacian matrix of the underlying graph. 

There are multiple examples of quantum walk based algorithms that can solve certain graph problems exponentially faster than any classical algorithm with oracle access to the graph, but currently none of them are based on electrical networks. For example, continuous quantum walks have been used to exhibit an exponential quantum-classical separation to solve the welded tree problem \cite{childs2003ExpSpeedupQW}. A welded tree graph consists of two full binary trees of depth $n$ and the leaves of both trees are connected via two disjoint perfect matching. There are a total number of $2^{n+1}-2$ vertices. Given an adjacency list oracle $O_G$ to the welded tree graphs $G$ and the name of one of the roots $s$, the goal of the welded tree problem is to output the name of the other root $t$. This exponential separation result of the welded tree problem has recently been recovered (and improved) by applying the multidimensional quantum walk framework \cite{jeffery2023multidimensional}
or using the conceptually simpler coined quantum walk \cite{li2024recovering}.
In addition, using a more refined analysis of the continuous quantum walk algorithm, \cite{balasubramanian2023exponential} generalised this exponential separation result on the welded tree graph to certain random hierarchical graphs where the goal is once again to find some special vertex $t$ given an initial vertex $s$. The idea of finding a marked vertex in certain graphs has also been used to show the exponential advantage of adiabatic quantum computation with no sign problem in \cite{gilyen2021sub}, whose graph is also related to the welded tree graph. Another graph constructed from welded tree graphs has also allowed for an exponential quantum-classical separation for a type of graph property testing problem \cite{bendavid2020symmetries}. Using the continuous quantum walk algorithm for the welded tree problem as a subroutine, \cite{li2023exponential} showed that exponential separation extends beyond finding a marked vertex and that this separation can also be shown for finding a \st path on the welded tree path graphs, which is constructed from welded tree graphs. 


Among the three types of quantum walk design paradigms mentioned that achieve this exponential speedup (continuous quantum walks, coined quantum walks, and multidimensional quantum walks), the multidimensional quantum walk has the closest relationship with the quantum walk operator in an electrical network. However, it is far from obvious how one can combine the design paradigm of multidimensional quantum walks, with Kirchhoff's Law and Ohm's Law in an electrical network.

\subsection{The Multidimensional Electrical Network}
In this work, we take the first steps in establishing this connection, which we name the multidimensional electrical network framework, achieved through generalising Kirchhoff's Law as Alternative Kirchhoff's Law and Ohm's Law as Alternative Ohm's Law. Roughly speaking, for each vertex other than the source and sink, Kirchhoff's Law forces the electrical flow to be orthogonal to a single vector, whereas Alternative Kirchhoff's Law forces the electrical flow to be orthogonal to a potentially larger subspace. Moreover, instead of associating each vertex with a single (unique) potential value for each vertex as in Ohm's Law, in Alternative Ohm's Law we associate (possibly distinct, but unique) potential values $\p_{(u,v)}$, $\p_{(v,u)}$ to each edge $(u,v)$. 
The ideas behind the definition of Alternative Kirchhoff's Law and Alternative Ohm's Law in the multidimensional electrical network are derived from the quantum walk based on an electrical network \cite{belovs2013ElectricWalks,piddock2019electricfind,apers2022elfs} and the alternative neighbourhood technique introduced by the multidimensional quantum walk \cite{jeffery2023multidimensional}. 



\paragraph{Multidimensional quantum walks and alternative electrical flows}
Similarly as in the quantum walk frameworks that are based on electrical networks, we model a graph $G=(V,E,\w)$ as an electrical network with each edge being assigned a positive weight $\w_{u,v}$, i.e. its conductance. This weight assignment gives rise to a weighted superposition of neighbours in the neighbourhood $\N(u)$ of any vertex $u$, normalised by the quantity $\w_u$, which is known as the \textit{star state} of $u$:
$$\frac{1}{\sqrt{\du_u}} \sum_{v \in \N(u)}\sqrt{\w_{u,v}}\ket{u,v}.$$
This state can be thought of as a quantum encoding of the probability to move from a vertex $u$ to each neighbour $v$. Since we build on top of the multidimensional quantum walk framework, we adopt their definitions for technical reasons, which are slightly different from what is convention in the literature. We discuss this difference in depth in \sec{prelim}, but the most important difference is that the edge set $E$ is described by a directed edge set $\E$ and that the definition of a star state depends on the boolean variable $\Delta_{u,v}$, which is equal to $0$ if $(u,v) \in\E$ and $1$ if $(v,u) \in \E$:
\begin{equation}\label{eq:star-state-intro}
    \ket{\psi_u} := \frac{1}{\sqrt{\du_u}} \sum_{v\in \N(u)}(-1)^{\Delta_{u,v}}\sqrt{\w_{u,v}}\ket{u,v}.
\end{equation}

We study \st flows on our graph $G$ between any two of its vertices $s$ and $t$. An \st flow assigns an amount of value of flow along each edge of the graph $G$. Kirchhoff's Law states that any \st flow is conserved at every vertex $u \in V \backslash \{s,t\}$, meaning that the amount of flow entering any vertex $u$ is equal to the amount of flow exiting $u$. More specifically, we are interested in the \st \textit{electrical flow} $\f$, which is the ``smallest'' \st flow, meaning it has the smallest energy out of all valid \st flows satisfying Kirchhoff's Law:
$${\cal E} (\f) :=\sum_{(u,v)\in\E}\frac{\f_{u,v}^2}{\w_{u,v}}.$$
The energy of the \st electrical flow $\f$ is called the \textit{effective resistance} ${\cal R}_{s,t}$ and this flow gives rise to the normalised electrical flow state $\ket{\f}$:
$$ \ket{\f} := \frac{1}{\sqrt{2{\cal R}_{s,t}}}\sum_{(u,v) \in \E}\frac{\f_{u,v}}{\sqrt{\w_{u,v}}}\left(\ket{u,v} + \ket{v,u}\right).$$

By viewing $G$ as an electrical network, this electrical flow precisely captures the electrical dynamics of how one unit of current send from $s$ to $t$ would traverse the electrical network $G$. Kirchhoff's Law states that this \st electrical flow is conserved at every vertex $u \in V \backslash \{s,t\}$, meaning that the amount of flow $\f$ coming into $u$ is equal to the amount of flow exiting $u$. This law can be equivalently read in terms of $\ket{\psi_u}$ and $\ket{\f}$, in which case it states that for every vertex $u \in V \backslash \{s,t\}$ we require 
$$\brakett{\psi_u}{\theta} = 0.$$ 

The \textit{quantum walk operator} $U_{\cal AB}=(2\Pi_{\cal A}-I)(2\Pi_{\cal B}-I)$ in \cite{belovs2013ElectricWalks} consists of a reflection around two spaces: the antisymmetric subspace ${\cal A}$ and the span of (almost all) star states:
$${\cal B} := \mathrm{span}\{\ket{\psi_u}: u \in V \backslash \{s,t\}\}.$$ 

The \st electrical flow $\f$ is special with regards to this quantum walk operator $U_{\cal AB}$, as its flow state $\ket{\f}$ lives in the $+1$-eigenspace of $U$. Moreover, it can also be written as a linear combination of projected star states $(I - \Pi_{\cal A})\ket{\psi_u}$ for $u \in V$. The coefficients in this linear combination are precisely the potentials $\p_u$ given by the \textit{potential vector} $\p$ corresponding to the \st electrical flow $\f$ such that together they satisfy Ohm's Law: $\p_{u}-\p_{v} = \f_{u,v}/\w_{u,v}$. By combining all these properties, \cite{piddock2019electricfind,apers2022elfs} showed that with the use of phase estimation on the quantum walk operator $U_{\cal AB}$, one can approximate $\ket{\f}$ and measure it, allowing one to (approximately) sample from the \st electrical flow $\f$, where an edge is obtained with a probability equal to the relative amount of flow that traverses this edge.

The exact complexity of approximating the \st electrical flow state in an electrical network is given in \cor{qwflows}, but it depends (apart from the approximation error) on two factors: the effective resistance ${\cal R}_{s,t}$ and the norm of the potential vector $\ket{\p}$, defined as
$$  \ket{\p}= \sqrt{\frac{2}{{\cal R}_{s,t}}}\sum_{u\in V\backslash \{ s,t\}} \p_u\sqrt{\du_u}\ket{\psi_u}.$$


The multidimensional quantum walk framework generalises the previously known quantum walks based on electrical networks and works by running phase estimation on the modified quantum walk operator $U_{\A\Balt}$, which reflects around the larger space that contains ${\B}$:
$${\Balt} := \mathrm{span}\{\mathrm{span}(\Psi_{\star}(u)): u \in V \backslash \{s,t\}\}.$$

By associating each vertex $u$ with a subspace $\mathrm{span}(\Psi_{\star}(u))$, the multidimensional quantum walk has shown advantages in solving certain problems, such as the welded tree problem and the $k$-distinctness problem in \cite{jeffery2023multidimensional} faster than the quantum walk. The algorithm works by applying phase estimation of the multidimensional quantum walk operator $U_{\A\Balt}$ and checking whether the returned phase value is ``$0$''. Especially for the welded tree problem, this phase value being ``$0$'' indicating that the root $t$ is ``marked'', which allows them to infer the name of $t$, which recovers the exponential quantum-classical separation for finding a marked vertex in the welded tree graph using a continuous quantum walk \cite{childs2003ExpSpeedupQW}. 

The reason for why multidimensional quantum walks achieve this potentially exponential speedup, whereas other discrete quantum walks can achieve up to a quadratic speedup compared to the underlying classical random walk, has to do with cost of calling $U_{\cal AB}$. The cost of applying this quantum walk operator, and hence the cost of this phase estimation procedure, relies on the cost of generating the star state $\ket{\psi_u}$. In \cite{jeffery2023multidimensional}, the authors deal with the case where it might be computationally costly to generate $\ket{\psi_u}$, but where the algorithm knows that $\ket{\psi_u}$ is one of a small set of easily preparable states $\Psi_{\star}(u) = \{\ket{\psi_{u,1}},\ket{\psi_{u,2}},\dots\}$, known as the \textit{alternative neighbourhoods} for $u$. To give some intuition as to why this might be useful, let's look at the definition of $\ket{\psi_u}$ in \eq{star-state-intro} again. If different neighbours $v_1,v_2 \in \N(u)$ have different amplitudes in $\ket{\psi_u}$, perhaps due to different weights/conductances $\w_{u,v_1},\w_{u,v_2}$, or perhaps a different sign due to $\Delta_{u,v_1},\Delta_{u,v_2}$, than to generate the star state $\ket{\psi_u}$, it would be necessary to distinguish the neighbours $v_1,v_2$ from each other. This might be computationally costly if we can access $G$ via an adjacency list oracle, where upon querying $u$ we learn $\N(u)$. Such problems can however also arise in a much more nuanced fashion outside of the oracle model, as exhibited in the algorithm tackling the $k$-distinctness problem in \cite{jeffery2023multidimensional}. Therefore by generating a reflection around $\Psi_{\star}(u)$, instead of $\ket{\psi_u}$, the cost of each step of the phase estimation algorithm is potentially drastically reduced.

To extend the previously discussed relationship between electrical networks and quantum walks to this new quantum walk operator $U_{\A\Balt}$, we introduce the \st alternative electrical flow $\falt$. This is again the ``smallest'' \st flow in terms of energy that satisfies Alternative Kirchhoff's Law, which requires $\ket{\falt}$ to be orthogonal to all of $\Psi_{\star}(u)$ instead of only to $\ket{\psi_u} \in \Psi_{\star}(u)$, ensuring that $\ket{\falt}$ lives in the $+1$-eigenspace of $U_{\A\Balt}$. This $\ket{\falt}$, whose energy is the \textit{alternative effective resistance} ${\cal R}_{s,t}^{\alt}$, can also be written as a linear combination of projected alternative neighbourhoods $(I - \Pi_{\cal A})\ket{\psi_{u,i}}$ for $u \in V$ and $\ket{\psi_{u,i}} \in \Psi_{\star}(u)$. As we have seen in the non-alternative neighbourhood case, this gives rise to an alternative potential vector $\palt$ and alternative potential state $\ket{\palt}$, this time not acting on vertices, but on edges instead. This $\palt$ is related to $\falt$ through Alternative Ohm's Law, which states that $\palt_{u,v}-\palt_{v,u}=\falt_{u,v}/\w_{u,v}$ and $\ket{\palt} \in \Balt$. By a similar analysis as in regular electrical networks, we show that with phase estimation on the quantum walk operator $U_{\A\Balt}$ we can approximate $\ket{\falt}$, allowing one to (approximately) sample from the \st alternative electrical flow $\falt$.

\begin{theorem}\label{thm:intromqwflows}
  Let $\Psi_{\star}$ be a collection of alternative neighborhoods on a network $G = (V,E,\w)$ and let $U_{\A\Balt}$ be the quantum walk operator with respect to $\Psi_{\star}$ as defined in \cref{eq:walk-alt}. Then by performing phase estimation on the initial state $\ket{\psi_s^+}$ with the operator $U_{\A\Balt}$ and precision $O\left(\frac{\epsilon^2}{\sqrt{{\cal R}_{s,t}^{\alt} \du_s}\norm{\ket{\palt}}}\right)$, the phase estimation algorithm outputs ``$0$'' with probability $\Theta\left(\frac{1}{{\cal R}_{s,t}^{\alt} \du_s}\right)$, leaving a state $\ket{\f'}$ satisfying
  $$ \frac{1}{2}\norm{\proj{\f'} - \proj{\falt}}_1 \leq \epsilon. $$
\end{theorem}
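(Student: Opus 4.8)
The plan is to run the standard electrical-network-walk analysis, now for the alternative objects, organised around four steps. First I would record the two spectral facts that make phase estimation meaningful here. By Alternative Kirchhoff's Law $\ket{\falt}$ is orthogonal to every $\mathrm{span}(\Psi_{\star}(u))$ with $u\in V\setminus\{s,t\}$, so $\Pi_{\Balt}\ket{\falt}=0$; and since $\ket{\falt}$ is symmetric under edge reversal it is orthogonal to the antisymmetric space, $\Pi_{\A}\ket{\falt}=0$. Both reflections therefore act as $-I$ on $\ket{\falt}$, so $U_{\A\Balt}\ket{\falt}=\ket{\falt}$ and $\ket{\falt}$ sits in the phase-$0$ eigenspace. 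Dually, the source state $\ket{\psi_s^+}$ is supported on directed edges leaving $s$, none of which occur in the alternative neighbourhoods of the internal vertices, so $\Pi_{\Balt}\ket{\psi_s^+}=0$; this membership in $\ker\Pi_{\Balt}$ is exactly what the effective spectral gap lemma will need later.

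Second I would pin down the two quantities in the statement. Expanding $\braket{\psi_s^+}{\falt}$ with the definition of the star state in \eq{star-state-intro} and of $\ket{\falt}$, and using that $\falt$ is a unit $s$-$t$ flow, gives $|\braket{\psi_s^+}{\falt}|^2=\Theta(1/({\cal R}_{s,t}^{\alt}\du_s))$, which is the claimed probability of the outcome ``$0$''. I would then check that this overlap is the \emph{entire} phase-$0$ content of $\ket{\psi_s^+}$: the $+1$-eigenspace of $U_{\A\Balt}$ decomposes as $(\A\cap\Balt)\oplus(\A^{\perp}\cap\Balt^{\perp})$, the first summand is annihilated by $\Pi_{\Balt}\ket{\psi_s^+}=0$, and within the flow sector the minimal-energy characterisation of $\falt$ identifies the projection of $\ket{\psi_s^+}$ onto $\A^{\perp}\cap\Balt^{\perp}$ with $\braket{\falt}{\psi_s^+}\ket{\falt}$.

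Third, and this is the crux, I would control the leakage from small nonzero phases into the ``$0$'' outcome, which is where the potential enters. Using Alternative Ohm's Law I would expand the potential combination $\ket{\palt}\in\Balt$ and show that projecting it out of the antisymmetric space, $(I-\Pi_{\A})\ket{\palt}$, reproduces $\ket{\falt}$ up to boundary terms at $s$ and $t$, the boundary contribution being exactly $\ket{\psi_s^+}$ and its sink counterpart. Feeding this identity into the effective spectral gap lemma for $\ket{\psi_s^+}\in\ker\Pi_{\Balt}$ bounds the amplitude of $\ket{\psi_s^+}$ on eigenvectors of phase in $(0,\delta]$ by $O(\delta\,\norm{\ket{\palt}})$, and a sinc-tail estimate bounds the contribution of the phases above the cutoff $\delta$ by the same order.

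Finally I would balance the pieces. Writing the sub-normalised ``$0$''-outcome vector as $\braket{\falt}{\psi_s^+}\ket{\falt}+\ket{b}$ with $\norm{\ket{b}}=O(\delta\,\norm{\ket{\palt}})$, the normalised output $\ket{\f'}$ has infidelity with $\ket{\falt}$ of order $\norm{\ket{b}}^2/|\braket{\falt}{\psi_s^+}|^2=O(\delta^2\norm{\ket{\palt}}^2{\cal R}_{s,t}^{\alt}\du_s)$, hence trace distance $O(\delta\,\norm{\ket{\palt}}\sqrt{{\cal R}_{s,t}^{\alt}\du_s})$; taking the precision $\delta=O(\epsilon^2/(\sqrt{{\cal R}_{s,t}^{\alt}\du_s}\,\norm{\ket{\palt}}))$ as stated forces this below $\epsilon$ with margin to spare for the phase-estimation implementation error, while leaving the success probability $\Theta(1/({\cal R}_{s,t}^{\alt}\du_s))$ intact, and converts the fidelity bound into $\frac{1}{2}\norm{\proj{\f'}-\proj{\falt}}_1\le\epsilon$. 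I expect the third step to be the main obstacle: making the identity $(I-\Pi_{\A})\ket{\palt}=\ket{\falt}+\text{(boundary)}$ precise in the alternative-neighbourhood setting, so that the effective spectral gap lemma delivers a leakage amplitude genuinely linear in both $\delta$ and $\norm{\ket{\palt}}$, with the verification that no spurious phase-$0$ eigenvector overlaps $\ket{\psi_s^+}$ and the phase-estimation tail bookkeeping as the subsidiary technical points.
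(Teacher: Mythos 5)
Your plan follows the paper's own route---identify $\ket{\falt}$ as a $+1$-eigenvector via Alternative Kirchhoff's Law and symmetry, decompose $\ket{\psi_s^+}$ through the alternative potential, and feed this into the effective spectral gap / phase estimation machinery (\lem{effective-spectral-gap} and \lem{qwflows})---but it leans twice on a claim that is false: $\Pi_{\Balt}\ket{\psi_s^+}=0$. The state $\ket{\psi_s^+}=\frac{I+\SWAP}{\sqrt{2}}\ket{\psi_s}$ is supported on $\ket{s,v}$ \emph{and} $\ket{v,s}$ for $v\in\N(s)$, and each $\ket{v,s}$ occurs with nonzero amplitude in the star state $\ket{\psi_v}\in\Psi_\star(v)$ of each internal neighbour $v$ of $s$; concretely $\braket{\psi_v}{\psi_s^+}=-\w_{s,v}/\sqrt{2\du_v\du_s}\neq 0$, so $\ket{\psi_s^+}\notin\ker\Pi_{\Balt}$. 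Moreover, this is not the hypothesis the effective spectral gap lemma needs: in the form used here, the lemma is applied to a vector lying \emph{in} $\Balt$, namely $\ket{\palt}$, and it bounds $\norm{\Lambda_\epsilon(I-\Pi_{\A})\ket{\palt}}$; the initial state never has to lie in $\ker\Pi_{\Balt}$.

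Both places where you invoke the false claim can be repaired with ingredients you already have. For the phase-$0$ bookkeeping: the $\A\cap\Balt$ part of the $+1$-eigenspace is orthogonal to $\ket{\psi_s^+}$ simply because $\ket{\psi_s^+}$ is symmetric, i.e.\ lies in $\A^{\perp}$; your Thomson-principle argument inside $\A^{\perp}\cap\Balt^{\perp}$ (alternative circulations conserve flow at $s$, and $\falt$ is energy-orthogonal to them) is fine. For the leakage: apply \lem{effective-spectral-gap} to $\ket{\palt}\in\Balt$ and transfer the bound to $\ket{\psi_s^+}$ through the decomposition $\ket{\psi_s^+}=\frac{1}{\sqrt{{\cal R}_{s,t}^{\alt}\du_s}}\ket{\falt}-(I-\Pi_{\A})\frac{1}{\sqrt{{\cal R}_{s,t}^{\alt}\du_s}}\ket{\palt}$, which is exactly how \lem{qwflows} is set up. Two further points you gloss over: the ``sink counterpart'' boundary term does not appear, because the potential is normalised with $\palt_{t,v}=0$; and---more importantly---the existence of an alternative potential with $\Pi_{\Balt}\ket{\palt}=\ket{\palt}$ and $\palt_{s,v}={\cal R}_{s,t}^{\alt}$ on every edge at $s$ (needed so the boundary coefficient is exactly $\sqrt{{\cal R}_{s,t}^{\alt}\du_s}$, giving $p=1/({\cal R}_{s,t}^{\alt}\du_s)$) is nontrivial; it is \thm{potential-alt}, proven via the Moore--Penrose inverse of the alternative incidence matrix, and it is genuinely not automatic, since for some choices of $\Psi_\star$ not even an alternative flow exists (\fig{counterexam}). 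Finally, your claimed leakage \emph{amplitude} linear in $\delta$ above the cutoff needs more than a sinc-tail estimate (it requires a dyadic application of the gap lemma); the paper instead bounds the leakage \emph{probability} linearly in $\delta$, which yields trace distance $O\left(\sqrt{\delta\norm{\ket{\palt}}\sqrt{{\cal R}_{s,t}^{\alt}\du_s}}\right)$ and is the reason the stated precision carries $\epsilon^2$ rather than $\epsilon$; your parameter choice survives either way.
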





As can be seen in the above theorem, the complexity of generating the \st alternative electrical flow state in an multidimensional electrical network depends on the alternative versions of the effective resistance, ${\cal R}_{s,t}^{\alt}$, and (the norm of) the alternative potential vector $$
\ket{\palt}=\sqrt{\frac{2}{{\cal R}(\falt)}} \sum_{u \in V \setminus \{s\}}\sum_{v \in \N(u)}(-1)^{\Delta_{u,v}} \palt_{u,v} \sqrt{\w_{u,v}} \ket{u,v}.$$

With this new multidimensional electrical network framework, more specifically due to \thm{intromqwflows}, we show how our quantum algorithms can provide exponential quantum-classical separations for certain graph problems, such as finding a marked vertex and finding an \st path in some special graphs related to the welded tree graph. On the other hand, all previous known quantum algorithms based on electrical networks can provide only a quadratic speedup.

Our algorithms work by approximately sampling from the alternative electrical flow $\falt$, meaning we approximate the state $\ket{\falt}$ using \thm{intromqwflows} and then measure it to obtain an edge with probability scaling with the amount of alternative flow that traverses this edge. In both of these applications, the way that $\Psi_{\star}$ is extended is fairly natural, as it will be constructed from Fourier basis states.


\paragraph{Alternative incidence matrix and alternative electrical flow}
These new alternative electrical network laws and definitions may seem constructed in an ad-hoc fashion to fit with the analysis in \cite{piddock2019electricfind,apers2022elfs}, but we give proof that these are in fact natural definitions. It is well known in electrical network theory \cite{vishnoi2013lx} that both Kirchhoff's Law as well as Ohm's Law can be phrased as linear equations involving the edge-vertex incidence matrix $B$, whose entries contain the square root of the weights $\w_{u,v}$. These linear relations are useful to show important physical properties of the \st electrical flow $\f$ and its potential vector $\p$, such as their existence and the fact that $\p_s$ is equal to the energy of $\f$, also known as the effective resistance ${\cal R}_{s,t}$. By extending this incidence matrix $B$ in a natural fashion to also incorporate the alternative neighbourhoods $\Psi_{\star}$, we obtain the alternative incidence matrix $B_{\alt}$. For the actual technical definition of $B_{\alt}$, see \defin{incidence-alt}, but for now $B_{\alt}$ can be thought of as a generalised version of $B$. This matrix $B_{\alt}$ can then be substituted for $B$ in the linear equations that correspond to Kirchhoff Law and Ohm's Law, to recover Alternative Kirchhoff's Law and Alternative Ohm's Law respectively. As is done with regular electrical networks, we apply these linear equations to prove the existence of the \st alternative electrical flow $\falt$ and its alternative potential vector $\palt$, as well as the fact that the potential $\p_{s,u}$ along each edge adjacent to $s$ is equal to the alternative effective resistance ${\cal R}^{\alt}_{s,t}$. To summarise, we prove the following two results:

\begin{theorem}\label{thm:introelecflow-alt}
  Let $\falt$ be the \st alternative electrical flow in an electrical network $G = (V,E,\w)$ with respect to a collection of alternative neighbourhoods $\Psi_{\star}$. Let $B_{\alt}$ be the alternative incidence matrix of $G$. Then $W\falt$ is given by 
  \begin{equation*}
    W\falt = B_{\alt}^{T+}({\sf e}_s - {\sf e}_t),
  \end{equation*}
  where $W \in \mathbb{C}^{\E \times \E}$ is the diagonal matrix with entries $W_{(u,v),(u,v)} = 1/\sqrt{\w_{u,v}}$.
\end{theorem}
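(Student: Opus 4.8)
The plan is to mirror the classical derivation of $W\f = B^{T+}({\sf e}_s - {\sf e}_t)$ for an ordinary electrical network, with the incidence matrix $B$ replaced throughout by the alternative incidence matrix $B_{\alt}$ of \defin{incidence-alt}. The key starting observation is that the energy of any flow $f$ can be written as $\mathcal{E}(f) = \sum_{(u,v)\in\E} f_{u,v}^2/\w_{u,v} = \|Wf\|^2$, so that minimising the energy is the same as minimising the Euclidean norm of the rescaled flow $g := Wf$. Setting $g := W\falt$, the statement then reduces to two sub-claims: (i) the constraint defining the feasible set of flows is exactly $B_{\alt}^{T} g = {\sf e}_s - {\sf e}_t$; and (ii) among all $g$ satisfying this constraint, the one of minimal norm is $g = B_{\alt}^{T+}({\sf e}_s - {\sf e}_t)$.

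Sub-claim (ii) is the routine half: it is precisely the defining variational property of the Moore--Penrose pseudoinverse, that whenever a linear system $B_{\alt}^{T} g = c$ is consistent its minimum-norm solution is $g = B_{\alt}^{T+} c$, and moreover this solution automatically lies in $\mathrm{range}(B_{\alt}) = \mathrm{range}((B_{\alt}^{T})^{*})$. I would emphasise that this last fact is nothing but Alternative Ohm's Law --- it says $W\falt = B_{\alt}\palt$ for some alternative potential $\palt$ --- so that the pseudoinverse characterisation simultaneously certifies that the energy minimiser is compatible with Ohm's Law, at no extra cost. Equivalently, one may argue purely algebraically: feeding Alternative Ohm's Law $W\falt = B_{\alt}\palt$ into Alternative Kirchhoff's Law $B_{\alt}^{T}(W\falt) = {\sf e}_s - {\sf e}_t$ gives $B_{\alt}^{T}B_{\alt}\palt = {\sf e}_s - {\sf e}_t$, whence $W\falt = B_{\alt}(B_{\alt}^{T}B_{\alt})^{+}({\sf e}_s - {\sf e}_t) = B_{\alt}^{T+}({\sf e}_s - {\sf e}_t)$ using the identity $(B_{\alt}^{T})^{+} = B_{\alt}(B_{\alt}^{T}B_{\alt})^{+}$.

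The genuine work lies in sub-claim (i): unfolding \defin{incidence-alt} and verifying that the single matrix equation $B_{\alt}^{T} g = {\sf e}_s - {\sf e}_t$ encodes Alternative Kirchhoff's Law together with the correct unit flux at the source and the sink. Concretely, I would compute $B_{\alt}^{T} g$ one coordinate at a time. For a coordinate indexed by an alternative neighbourhood state $\ket{\psi_{u,i}} \in \Psi_{\star}(u)$ at an internal vertex $u$, the entries $(-1)^{\Delta_{u,v}}\sqrt{\w_{u,v}}$ of $B_{\alt}$ combine with the factor $1/\sqrt{\w_{u,v}}$ in $W$ so that the coordinate evaluates to the overlap of $g$ with $\ket{\psi_{u,i}}$; setting it to zero is exactly the requirement that $\ket{\falt}$ be orthogonal to every state in $\Psi_{\star}(u)$, i.e. Alternative Kirchhoff's Law. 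The coordinates associated with $s$ and $t$, treated as boundary columns of $B_{\alt}$, should instead return the net signed flux, which the right-hand side pins to $+1$ and $-1$. The fiddly part here is keeping the directed-edge sign convention $\Delta_{u,v}$ and the boundary treatment of $s,t$ consistent, so that the bookkeeping closes.

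The main obstacle I anticipate is establishing feasibility, namely that ${\sf e}_s - {\sf e}_t \in \mathrm{range}(B_{\alt}^{T})$, which is what guarantees that $B_{\alt}^{T+}({\sf e}_s - {\sf e}_t)$ solves $B_{\alt}^{T} g = {\sf e}_s - {\sf e}_t$ exactly rather than only in the least-squares sense, and hence that the energy minimiser and the minimum-norm solution genuinely coincide. In the classical setting this follows from connectivity of $G$; here it must be extracted from the structure of the alternative neighbourhoods $\Psi_{\star}$, and it is essentially equivalent to the existence of the alternative electrical flow $\falt$ together with its potential $\palt$. I would therefore settle existence first --- either directly or by citing the companion result that constructs $\palt$ --- and only afterwards invoke the pseudoinverse identity, since consistency of the linear system is precisely the hypothesis under which the two characterisations above agree.
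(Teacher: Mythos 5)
Your proposal is correct and takes essentially the same route as the paper: restate Alternative Kirchhoff's Law in incidence-matrix form as $B_{\alt}^{T} W\falt = {\sf e}_s - {\sf e}_t$, observe that ${\cal E}(\falt) = \norm{W\falt}^2$ so that the alternative electrical flow is the minimum-norm solution of this system (which is consistent because the theorem's hypothesis already grants that $\falt$ exists, so $W\falt$ itself witnesses feasibility), and conclude by the Moore--Penrose characterisation of minimum-norm solutions. The only part you should drop is the ``equivalently, one may argue purely algebraically'' detour: it takes Alternative Ohm's Law as an input, which the paper \emph{derives from} this very theorem (\thm{potential-alt}), so that route would be circular --- but your primary variational argument does not rely on it.
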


\begin{theorem}\label{thm:intropotential-alt}
   Let $\falt$ be the \st alternative electrical flow in an electrical network $G = (V,E,\w)$ with respect to a collection of alternative neighbourhoods $\Psi_{\star}$. Then there exists a unique alternative potential vector $\palt$ satisfying Alternative Ohm's Law such that $\palt_{s,u} = {\cal R}_{s,t}^{\alt}$ and $\palt_{t,v} = 0$ for each $u \in \N(s)$ and $v \in \N(t)$.
\end{theorem}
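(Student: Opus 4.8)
The plan is to mirror the classical electrical-network derivation, substituting the alternative incidence matrix $B_{\alt}$ of \defin{incidence-alt} for the ordinary incidence matrix, and to bootstrap everything off the flow identity $W\falt = B_{\alt}^{T+}({\sf e}_s - {\sf e}_t)$ supplied by \thm{introelecflow-alt}. For existence I would invoke the standard pseudoinverse fact $\mathrm{range}(A^{+}) = \mathrm{range}(A^{T})$, applied to $A = B_{\alt}^{T}$, which gives $\mathrm{range}(B_{\alt}^{T+}) = \mathrm{range}(B_{\alt})$; hence $W\falt$ lies in the column space of $B_{\alt}$ and there is a vector $q$ with $B_{\alt}\,q = W\falt$. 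Reading this equation row-by-row against the directed edges $(u,v)$ and $(v,u)$, and using the explicit entries of $B_{\alt}$, reproduces precisely the Alternative Ohm's Law relation $\palt_{u,v} - \palt_{v,u} = \falt_{u,v}/\w_{u,v}$, where the $\palt_{u,v}$ are the coordinates extracted from $q$. This already produces a potential vector satisfying Alternative Ohm's Law.

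To secure the membership $\ket{\palt} \in \Balt$ and uniqueness, I would take the minimum-norm preimage $q = B_{\alt}^{+} W\falt$, which lies in $\mathrm{range}(B_{\alt}^{T})$; by construction of $B_{\alt}$ this range is exactly the coordinate description of $\Balt$, so the associated potential state sits in $\Balt$. Any two solutions of Alternative Ohm's Law differ by a vector annihilated by the edge-difference map $(u,v)\mapsto \palt_{u,v}-\palt_{v,u}$, i.e.\ by a residual gauge supported on the terminals; imposing $\palt_{t,v}=0$ for all $v\in\N(t)$ kills this freedom and yields a single potential vector, giving uniqueness.

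For the boundary values I would combine Alternative Ohm's Law $W\falt = B_{\alt}\,q$ with the matrix form of Alternative Kirchhoff's Law (conservation), $B_{\alt}^{T} W\falt = {\sf e}_s - {\sf e}_t$, to obtain the alternative Laplacian system $B_{\alt}^{T}B_{\alt}\,q = {\sf e}_s - {\sf e}_t$, whence $q = (B_{\alt}^{T}B_{\alt})^{+}({\sf e}_s - {\sf e}_t)$ up to gauge. Pairing with ${\sf e}_s - {\sf e}_t$ gives the gauge-invariant quantity $({\sf e}_s - {\sf e}_t)^{T} q = q^{T}B_{\alt}^{T}B_{\alt}\,q = \norm{W\falt}^{2} = {\cal E}(\falt) = {\cal R}_{s,t}^{\alt}$, the total potential drop between the $s$-block and the $t$-block, so with $\palt_{t,v}=0$ the full drop ${\cal R}_{s,t}^{\alt}$ is carried by the source edges. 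The main obstacle is the uniformity of these boundary values: proving that every edge incident to $s$ receives the identical potential ${\cal R}_{s,t}^{\alt}$ (and every edge at $t$ the identical potential $0$), not merely that the aggregate drop equals ${\cal R}_{s,t}^{\alt}$. This is exactly where the exclusion of $s,t$ from the index set $V\setminus\{s,t\}$ of $\Balt$ must be used: since no alternative-neighbourhood constraints are imposed at the terminals, the corresponding boundary rows of $B_{\alt}$ collapse all edges incident to one terminal onto a single shared coordinate, forcing the claimed constant values. Making this collapse precise from \defin{incidence-alt}, and checking that $\Balt$-membership is compatible with the gauge choice $\palt_{t,v}=0$, is the delicate part; the remaining manipulations are routine linear algebra mirroring the non-alternative case.
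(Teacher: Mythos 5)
Your skeleton matches the paper's proof: existence of coordinates $q$ with $B_{\alt}q = W\falt$ via the Moore--Penrose range identity applied to \thm{introelecflow-alt}; the boundary value via the pairing ${\cal R}_{s,t}^{\alt} = \norm{W\falt}^2 = \palt^{T}({\sf e}_s - {\sf e}_t) = \palt_s - \palt_t$ (the paper left-multiplies \eq{transpose-alt} by $(W\falt)^T$ directly, your detour through $B_{\alt}^{T}B_{\alt}$ is equivalent); and uniformity of the terminal edge potentials because each terminal contributes a single column of $B_{\alt}$. On this last point, the single column comes from the standing assumption that $s$ and $t$ carry no \emph{additional alternative neighbourhoods} ($a_s = a_t = 1$), not from the exclusion of $s,t$ from the index set of $\Balt$: both $(s,0)$ and $(t,0)$ do index columns of $B_{\alt}$, and it is columns, not rows, that do the collapsing.

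The genuine problem is in your middle paragraph, and it affects both membership and uniqueness. First, $\mathrm{range}(B_{\alt}^{T})=\ker(B_{\alt})^{\perp}$ is not ``the coordinate description of $\Balt$'': coordinate vectors there generally have nonzero $(s,0)$ and $(t,0)$ components, whereas $\Balt$ contains no $\ket{\psi_s}$ or $\ket{\psi_t}$ component, so the minimum-norm preimage $B_{\alt}^{+}W\falt$ does not automatically give $\Pi_{\Balt}\ket{\palt}=\ket{\palt}$; what is needed is that the $t$-coordinate vanish (the $s$-term is already excluded from $\ket{\palt}$ by \eq{potential-edge}). Second, the residual gauge is not ``supported on the terminals.'' If you impose only the edge-difference relations, the freedom is $|E|$-dimensional (each pair $(\palt_{u,v},\palt_{v,u})$ can be shifted independently), and fixing $\palt_{t,v}=0$ removes only the degrees of freedom on edges at $t$; if you instead impose the coordinate form $B_{\alt}q=W\falt$, the freedom is $\ker(B_{\alt})$, which contains the \emph{global} constant shift $\sum_{u\in V}{\sf e}_{(u,0)}$ (every edge potential moves by the same constant everywhere), supported on all vertices rather than on $s,t$. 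Under neither reading does your argument, as stated, deliver uniqueness. The repair is exactly the paper's: observe $\sum_{u\in V}{\sf e}_{(u,0)}\in\ker(B_{\alt})$, delete the $(t,0)$ column of $B_{\alt}$ to form $\overline{B_{\alt}}$, and set
\[
\palt = \begin{bmatrix}\overline{B_{\alt}}^{+} W\falt \\ 0\end{bmatrix},
\]
which simultaneously forces $\palt_t=0$ (hence $\ket{\palt}\in\Balt$) and pins down the solution uniquely; the pairing argument then gives $\palt_s={\cal R}_{s,t}^{\alt}$, and the single-column observation converts these coordinate statements into $\palt_{s,u}={\cal R}_{s,t}^{\alt}$ and $\palt_{t,v}=0$ for all $u\in\N(s)$, $v\in\N(t)$.
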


The perspective of the alternative incidence matrix of the multidimensional electrical network provides a new way to generate the alternative electrical flow by solving the alternative incidence linear system as indicated in \thm{introelecflow-alt}.

Classically, solving such an alternative incidence linear system takes a polynomial time in the size of the graph to get the alternative electrical state flow vector. The electrical flow obtained by solving the related Laplacian linear system has recently played an important role in designing new graph algorithms for max-flow problems \cite{christiano2011electrical,madry2013navigating,gao2023fully}.  It is conceivable that our alternative electrical flow could potentially be applied to design new graph algorithms for certain graph problems.

Quantumly, the complexity of solving the linear system to obtain the quantum alternative electrical flow state depends on the condition number of the alternative incidence matrix on a bounded degree graph \cite{gilyen2018QSingValTransf}. Although in this work we focus on using multidimensional quantum walk to generate the \st alternative electrical flow state $\ket{\falt}$ for our applications, this new way of generating $\ket{\falt}$ through the QLS algorithm could lead to new applications of the multidimensional electrical network for more types of problems. 


\subsection{Applications}

\paragraph{Finding a marked vertex in random hierarchical graphs}

We first apply this multidimensional quantum electrical network framework to one-dimensional random hierarchical graphs with nodes $S_0,S_1,\cdots, S_n$ as defined in \cite{balasubramanian2023exponential}. Given the initial vertex $s$, which is the unique element in $S_0$, the goal is to transverse the exponentially large (in $n$) one-dimensional random hierarchical graph to find the vertex $t$, the unique element in $S_n$. 

It has been shown that the continuous quantum walk approach can provide an exponential speedup (in $n$) in solving this problem compared to any classical algorithm with certain additional assumptions on the structures of the one-dimensional random hierarchical graph. In this paper, after applying minor differences to the assumptions regarding the structure of the graph, we show that multidimensional quantum walks also solve this problem in polynomial time by sampling from the alternative electrical flow state.

Instead of applying \thm{intromqwflows} directly, we describe and analyse our algorithm explicitly to provide more intuition on how multidimensional quantum walks can be used to approximate the alternative electrical flow. 

Interestingly, in this one-dimensional setting, the alternative electrical flow with respect to the multidimensional quantum walk operator on a one-dimensional unweighted random hierarchical graph matches the actual electrical flow of a weighted electrical network. We apply this result to the welded tree graph, which is an example of a one-dimensional random hierarchical graph, to provide an alternative quantum algorithm that solves the problem in polynomial time, recovering the exponential separation from \cite{childs2003ExpSpeedupQW}.

In the one-dimensional random hierarchical unweighted graph with an adjacency list oracle access of $G$, it is computationally costly to generate $\ket{\psi_u}$, but where the algorithm knows that $\ket{\psi_u}$ is one of a small set of easily preparable states $\Psi_{\star}(u) = \{\ket{\psi_{u,1}},\ket{\psi_{u,2}},\dots\}$, known as the \textit{alternative Fourier neighbourhoods} for $u$ (see \defin{fourier}).  With this multidimensional quantum walk operator, we show that the cost of generating the quantum alternative electrical flow state is a polynomial, while the cost of generating the quantum electrical flow state is exponentially large. 

Compared to the technical analysis from \cite{balasubramanian2023exponential} used for the continuous quantum walk approach, which contains more concepts from physics, our analysis is more suited for a computer science audience and arguably simpler.

\paragraph{Finding an \st path}

While there are known efficient quantum algorithms \cite{childs2003ExpSpeedupQW,jeffery2023multidimensional,li2024recovering} for finding the marked vertex in the welded tree graph, the problem of finding an \st path in the welded tree graph is notably difficult and still an open problem in the field of quantum query complexity \cite{aaronson2021open}. In addition, the problem of finding an \st path in isogeny graphs (a class of expander graphs) is assumed to be hard even for quantum algorithms, as the security of many isogeny-based cryptosystems are based on the hardness of solving this problem \cite{sean2018, Charles2009, costache2019ramanujan,wesolowski2022supersingular}.


Instead of finding an \st path in the welded tree graph or isogeny graphs,  \cite{li2023exponential} recently constructed a nonregular graph and exhibits an exponential quantum-classical separation in the context of pathfinding problems. The nonregular graph associates $n$ different welded tree graphs with an \st path of length $n$. This quantum algorithm uses the polynomial-time continuous quantum walk algorithm from \cite{childs2003ExpSpeedupQW} as a subroutine to output the \st path and the quantum algorithm relies heavily on the constructed graph being non-regular. This degree information in some way propagates the algorithm into the direction of $t$, which makes this technique infeasible for less structured graphs, such as isogeny graphs where every vertex has the same degree \cite{jao2011towards}.


In this work, we apply our new multidimensional electrical network framework to show an exponential quantum-classical separation to find an \st path in an unweighted regular graph. We construct a family of regular graphs, that we name welded tree circuit graphs (see \fig{weldedcircuitgraphsimple}), meaning that this graph has exponentially many vertices in $n$. 
Our framework allows us to approximate the alternative electrical flow state $\ket{\falt}$ in polynomial time for these types of graph. We then show that the overlap between each edge of an explicit \st path and the alternative electrical flow is at least an inverse polynomial. Therefore, by sampling from a polynomial number of copies of $\ket{\falt}$, we can obtain a polynomial sized subgraph that contains an \st path. A classical algorithm such as breadth first search or depth first search can then be used to traverse the subgraph and output this \st path. In \sec{path}, we explicitly compute what the \st alternative electrical flow looks like in the welded tree circuit graph. Unlike how the alternative electrical flow generated in one-dimensional unweighted random hierarchical networks matches the actual electrical flow in the electrical network, the alternative electrical flow in our pathfinding example is significantly different from any `real' electrical flow.

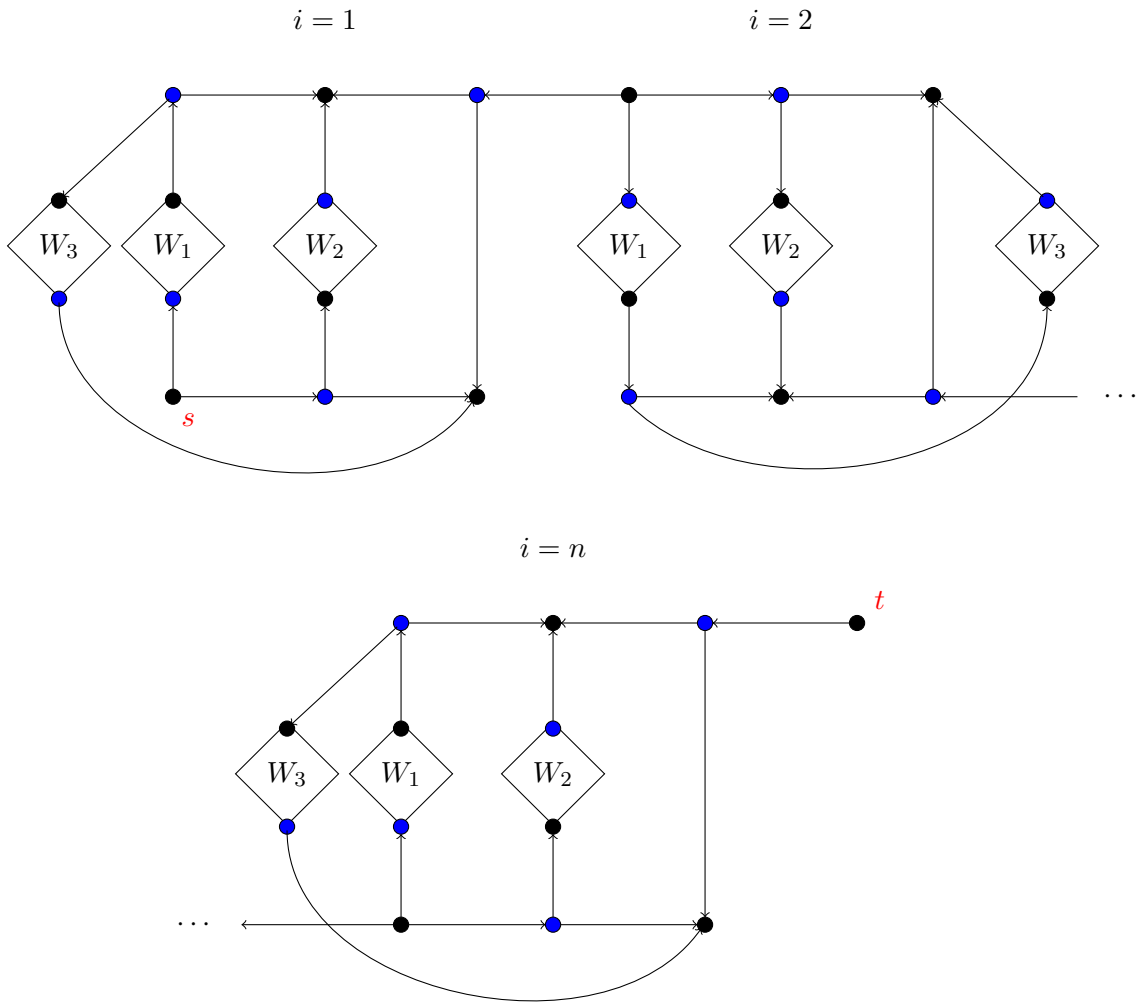
\begin{figure}[!ht]
\centering
\begin{tikzpicture}

\node at (2,5) {$i=1$};
\filldraw (0,0) circle (.1);     \node at (0.2,-0.3) {{\color{red}$s$}};
\filldraw[fill=blue] (2,0) circle (.1);          \node at (2,-0.3) {};
\filldraw (4,0) circle (.1);     \node at (4.6,0) {};
\filldraw[fill=blue] (0,4) circle (.1);          \node at (0,4.3) {};
\filldraw (2,4) circle (.1);     \node at (2,4.3) {};
\filldraw[fill=blue] (4,4) circle (.1);          \node at (4,4.3) {};

\node[diamond,
  draw = black,
  minimum width = 1cm,
  minimum height = 1.3cm] (d) at (0,2) {$W_1$};
\node[diamond,
  draw = black,
  minimum width = 1cm,
  minimum height = 1.3cm] (d) at (2,2) {$W_2$};
\node[diamond,
  draw = black,
  minimum width = 1cm,
  minimum height = 1.3cm] (d) at (-1.5,2) {$W_3$};

\filldraw[fill=blue] (0,1.3) circle (.1);         \node at (0.7,1.3) {};
\filldraw (2,1.3) circle (.1);    \node at (3,1.3) {};
\filldraw (0,2.6) circle (.1);    \node at (1,2.6) {};
\filldraw[fill=blue] (2,2.6) circle (.1);         \node at (2.9,2.6) {};
\filldraw[fill=blue] (-1.5,1.3) circle (.1);        \node at (-2.6,1.3) {};
\filldraw (-1.5,2.6) circle (.1);  \node at (-2.6,2.6) {};

\draw[->] (0.1,0)--(1.9,0);       \node at (1,0.3) {};
\draw[->] (0,0.1)--(0,1.2);       \node at (0.2,0.7) {};
\draw[->] (2.1,0)--(3.9,0);       \node at (3,0.3) {};
\draw[->] (2,0.1)--(2,1.2);       \node at (2.2,0.7) {};
\draw[->] (4,3.9)--(4,0.1);       \node at (4.3,2) {};
\draw[->] (0,2.7)--(0,3.9);       \node at (0.2,3.3) {};
\draw[->] (2,2.7)--(2,3.9);       \node at (2.2,3.3) {};
\draw[->] (-.05,3.95)--(-1.45,2.65);  \node at (-1,3.5) {};
\draw[->] (0.1,4)--(1.9,4);       \node at (1,4.3) {};
\draw[->] (3.9,4)--(2.1,4);       \node at (3,4.3) {};
\draw[->] (5.9,4)--(4.1,4);       \node at (5,4.3) {};
\draw[->] (-1.5,1.25) to[out=-90,in=-125] (3.95,-0.05); \node at (3.5,-1) {};

\node at (8,5) {$i=2$};
\filldraw[fill=blue] (6,0) circle (.1);     \node at (5.7,-0.3) {};
\filldraw (8,0) circle (.1);          \node at (8,-0.3) {};
\filldraw[fill=blue] (10,0) circle (.1);     \node at (10.6,0.3) {};
\filldraw (6,4) circle (.1);          \node at (6,4.3) {};
\filldraw[fill=blue] (8,4) circle (.1);     \node at (8,4.3) {};
\filldraw (10,4) circle (.1);          \node at (10,4.3) {};

\node[diamond,
  draw = black,
  minimum width = 1cm,
  minimum height = 1.3cm] (d) at (6,2) {$W_1$};
\node[diamond,
  draw = black,
  minimum width = 1cm,
  minimum height = 1.3cm] (d) at (8,2) {$W_2$};
\node[diamond,
  draw = black,
  minimum width = 1cm,
  minimum height = 1.3cm] (d) at (11.5,2) {$W_3$};

\filldraw (6,1.3) circle (.1);         \node at (7,1.3) {};
\filldraw[fill=blue] (8,1.3) circle (.1);    \node at (8.8,1.3) {};
\filldraw[fill=blue] (6,2.6) circle (.1);    \node at (6.7,2.6) {};
\filldraw (8,2.6) circle (.1);         \node at (9,2.6) {};
\filldraw (11.5,1.3) circle (.1);        \node at (12.6,1.3) {};
\filldraw[fill=blue] (11.5,2.6) circle (.1);  \node at (12.3,2.6) {};

\draw[->] (6.1,0)--(7.9,0);     \node at (7,0.3) {};
\draw[->] (6,1.2)--(6,0.1);     \node at (6.2,0.7) {};
\draw[->] (9.9,0)--(8.1,0);     \node at (9,0.3) {};
\draw[->] (8,1.2)--(8,0.1);     \node at (8.3,0.7) {};
\draw[->] (10,0.1)--(10,3.9);    \node at (10.3,2) {};
\draw[->] (6,3.9)--(6,2.7);     \node at (6.2,3.3) {};
\draw[->] (8,3.9)--(8,2.7);     \node at (8.2,3.3) {};
\draw[->] (11.45,2.65)--(10.05,3.95);\node at (11,3.5) {};
\draw[->] (6.1,4)--(7.9,4);     \node at (7,4.3) {};
\draw[->] (8.1,4)--(9.9,4);     \node at (9,4.3) {};
\draw[->] (11.9,0)--(10.1,0);    \node at (11.5,-0.3) {};
\draw[->] (6,-0.1) to[out=-45,in=-90] (11.5,1.2); \node at (6.5,-1) {};

\node at (12.5,0) {$\cdots$};

\node at (5,-2) {$i=n$};
\filldraw (3,-7) circle (.1);     \node at (3,-7.3) {};
\filldraw[fill=blue] (5,-7) circle (.1);          \node at (5,-7.3) {};
\filldraw (7,-7) circle (.1);     \node at (7.6,-7) {};
\filldraw[fill=blue] (3,-3) circle (.1);          \node at (3,-2.7) {};
\filldraw (5,-3) circle (.1);     \node at (5,-2.7) {};
\filldraw[fill=blue] (7,-3) circle (.1);          \node at (7,-2.7) {};
\filldraw (9,-3) circle (.1);          \node at (9.3,-2.7) {};
\draw[->] (9,-3)--(7.1,-3);       \node at (8,-2.7) {};

\node[diamond,
  draw = black,
  minimum width = 1cm,
  minimum height = 1.3cm] (d) at (3,-5) {$W_1$};
\node[diamond,
  draw = black,
  minimum width = 1cm,
  minimum height = 1.3cm] (d) at (5,-5) {$W_2$};
\node[diamond,
  draw = black,
  minimum width = 1cm,
  minimum height = 1.3cm] (d) at (1.5,-5) {$W_3$};

\filldraw[fill=blue] (3,-5.7) circle (.1);         \node at (3.7,-5.7) {};

\node at (9.3,-2.7) {{\color{red} $t$}};

\filldraw (5,-5.7) circle (.1);    \node at (6,-5.7) {};
\filldraw (3,-4.4) circle (.1);    \node at (4,-4.4) {};
\filldraw[fill=blue] (5,-4.4) circle (.1);         \node at (5.9,-4.4) {};
\filldraw[fill=blue] (1.5,-5.7) circle (.1);        \node at (0.4,-5.7) {};
\filldraw (1.5,-4.4) circle (.1);  \node at (0.4,-4.4) {};

\draw[->] (3.1,-7)--(4.9,-7);       \node at (4,-6.7) {};
\draw[->] (3,-6.9)--(3,-5.8);       \node at (3.2,-6.3) {};
\draw[->] (5.1,-7)--(6.9,-7);       \node at (6,-6.7) {};
\draw[->] (5,-6.9)--(5,-5.8);       \node at (5.2,-6.3) {};
\draw[->] (7,-3.1)--(7,-6.9);       \node at (7.3,-5) {};
\draw[->] (3,-4.3)--(3,-3.1);       \node at (3.2,-3.7) {};
\draw[->] (5,-4.3)--(5,-3.1);       \node at (5.2,-3.7) {};
\draw[->] (2.95,-3.05)--(1.55,-4.35);  \node at (2,-3.5) {};
\draw[->] (3.1,-3)--(4.9,-3);       \node at (4,-2.7) {};
\draw[->] (6.9,-3)--(5.1,-3);       \node at (6,-2.7) {};
\draw[->] (2.9,-7)--(0.9,-7);       \node at (1.5,-7.3) {};
\draw[->] (1.5,-5.75) to[out=-90,in=-125] (6.95,-7.05); \node at (6.5,-8) {};

\node at (0.3,-7) {$\cdots$};

\end{tikzpicture}
\caption{The Welded Tree Circuit Graph}\label{fig:weldedcircuitgraphsimple}
\end{figure}

The \st alternative electrical flow in the welded tree circuit graph essentially follows the \st alternative electrical flow in the following simple graph as in \fig{pathgraph1introduction}. The additional alternative neighborhoods guarantee that the \st alternative electrical flow has to split evenly at blue vertices that employ alternative Fourier neighbourhoods, which is a consequence of Alternative Kirchhoff's Law. Although multiple such \st flows exists, the alternative electrical flow has the minimum energy among them all. The energy of each such $\falt$ can be explicitly calculated to see that ${\cal E} (\falt)= 5y^2+4x^2+3$, and the energy is minimised when $x= 5/9$ (see \sec{exaG1} for more detail).
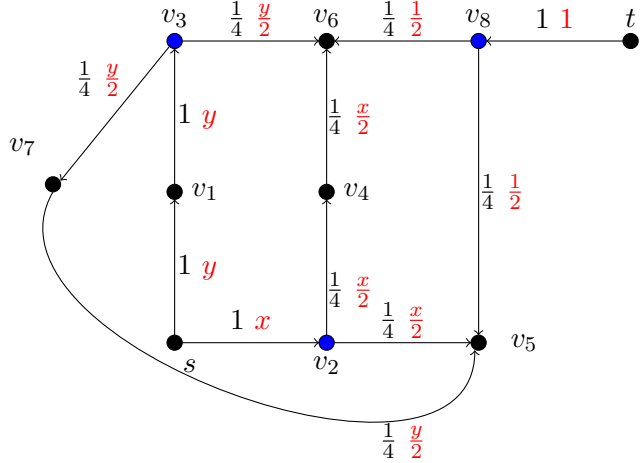
\begin{figure}[h!]
\centering
\begin{tikzpicture}

\filldraw (0,0) circle (.1);     \node at (0.2,-0.3) {$s$};
\filldraw[fill=blue] (2,0) circle (.1);          \node at (2,-0.3) {$v_2$};
\filldraw (4,0) circle (.1);     \node at (4.6,0) {$v_{5}$};
\filldraw[fill=blue] (0,4) circle (.1);          \node at (0,4.3) {$v_{3}$};
\filldraw (2,4) circle (.1);     \node at (2,4.3) {$v_{6}$};
\filldraw[fill=blue] (4,4) circle (.1);          \node at (4,4.3) {$v_{8}$};
\filldraw (6,4) circle (.1);  \node at (6,4.3) {$t$};

\filldraw (0,2) circle (.1);         \node at (0.4,2) {$v_{1}$};
\filldraw (2,2) circle (.1);    \node at (2.4,2) {$v_{4}$};
\filldraw (-1.6,2.1) circle (.1);  \node at (-2,2.6) {$v_{7}$};

\draw[->] (0.1,0)--(1.9,0);       \node at (1,0.3) {$1$ {\color{red} $x$}};
\draw[->] (0,0.1)--(0,1.9);       \node at (0.3,1) {$1$ {\color{red} $y$}};
\draw[->] (2.1,0)--(3.9,0);       \node at (3,0.3) {$\frac{1}{4}$ {\color{red} $\frac{x}{2}$}};
\draw[->] (2,0.1)--(2,1.9);       \node at (2.3,0.7) {$\frac{1}{4}$ {\color{red} $\frac{x}{2}$}};
\draw[->] (4,3.9)--(4,0.1);       \node at (4.3,2) {$\frac{1}{4}$ {\color{red} $\frac{1}{2}$}};
\draw[->] (0,2.1)--(0,3.9);       \node at (0.3,3) {$1$ {\color{red} $y$}};
\draw[->] (2,2.1)--(2,3.9);       \node at (2.3,3) {$\frac{1}{4}$ {\color{red} $\frac{x}{2}$}};
\draw[->] (-.05,3.95)--(-1.5,2.15);  \node at (-1,3.5) {$\frac{1}{4}$ {\color{red} $\frac{y}{2}$}};
\draw[->] (0.1,4)--(1.9,4);       \node at (1,4.3) {$\frac{1}{4}$ {\color{red} $\frac{y}{2}$}};
\draw[->] (3.9,4)--(2.1,4);       \node at (3,4.3) {$\frac{1}{4}$ {\color{red} $\frac{1}{2}$}};
\draw[->] (5.9,4)--(4.1,4);       \node at (5,4.3) {$1$ {\color{red} $1$}};
\draw[->] (-1.6,2) to[out=-120,in=-90] (3.95,-0.1); \node at (3.,-1.3) {$\frac{1}{4}$ {\color{red} $\frac{y}{2}$}};

\end{tikzpicture}
\caption{The graph $G_1$ with corresponding edge directions where the blue vertices have an additional alternative neighbour as defined in \eq{G-star-alt}. For each $(u,v) \in \E$, the weights $\w_{u,v}$ are denoted in black and the flow values $\falt_{u,v}$ in red for any valid unit \st alternative flow are parametrised by $x$ and $y = 1-x$.}\label{fig:pathgraph1introduction}
\end{figure}



We also give a classical lower bound on solving this specific pathfinding problem, which states that any classical algorithm will have to make an exponential number of queries to adjacency list oracle to output any \st path, under the folklore assumption that \st pathfinding is classically hard on welded tree graphs.

The idea of using alternative electrical flows to find an \st path shares similarity with the approach in \cite{Seaneletr},  \cite{jeffery2023quantum} and more recently \cite{wesolowski2024advances}, where they achieve a polynomial speedup to find an \st path for more general classes of graphs. Under the adjacency matrix model, \cite{jeffery2023quantum} suggests a new way to generate \st electrical flow states based on span programs. This approach can be used to sample an \st path, which improves the query complexity of the previous quantum algorithm by \cite{durr2006quantum} for certain types of graphs that have a unique \st path, but the resulting complexities scale at least linearly with the number of vertices, which poses a problem when dealing with exponentially large graphs. Under the adjacency list oracle model, \cite{Seaneletr} uses the Quantum Linear System (QLS) algorithm to generate \st electrical flow states \cite{wang2017efficient} and shows a quadratic advantage to find an \st path in certain graph related to the welded tree graph over both existing quantum and classical algorithms.

We summarise the result of the applications of the multidimensional electrical network framework for graph problems in \tab{applications}.

\begin{table}[h!]
    \centering
    \begin{tabular}{|c|c|c|c|c|c|c}
    \hline
        \textbf{Graphs} &   Input & Output &Techniques  \\
        \hline
          Welded Tree Graph \cite{childs2003ExpSpeedupQW}  & $s$ & $t$ & CQW \\
          \hline
          Random Hierarchical Graph \cite{balasubramanian2023exponential}  & $s$ & $t$ & CQW  \\ 
              \hline
             Welded Tree Path Graph \cite{li2023exponential}   & $s$ & $s$-$t$ path & CQW \\
                  \hline
    One-dimensional Random Hierarchical Graph  [{\color{red} This work}] & $s$ & $t$ &  MEN \\ \hline
    Welded Tree Circuit Graph [{\color{red} This work}] & $s$ & $s$-$t$ path & MEN\\ 
          \hline
    \end{tabular}
    \caption{CQW: Continuous Quantum Walk, MEN: Multidimensional Electrical Network}
    \label{tab:applications}
\end{table}

\subsection{Future Outlook}
One of the potential applications of the multidimensional electrical network framework in the field of quantum query complexity is the welded tree pathfinding problem \cite{aaronson2021open}. The goal of the welded tree pathfinding problem is to find an \st path in the welded tree graph and this problem has been proven to be difficult for certain types of quantum algorithms that are rooted \cite{childs2022quantum}. A rooted quantum algorithm needs to maintain all the paths from the starting vertex $s$ for the vertices within a memory space. In the above pathfinding example application, the multidimensional electrical network framework allows us to sample edges from alternative electrical flow states and then reconnect these to recover an \st path. Since such an algorithm would not be rooted, our multidimensional electrical network framework provides a potentially new avenue to tackle this open problem.




Although our applications all consider adjancency list access to the underlying graph, the multidimensional electrical network can also find applications in the non-oracle setting. For example, in the case of finding an \st path in exponentially sized isogeny graphs \cite{Charles2009}, the adjacency list oracle can be instantiated by computing isogenies between elliptic curves. This problem is believed to be hard even for quantum algorithms, but our new multidimensional electrical network framework gives some new hope, as our works presents an exponential speedup for this pathfinding problem on a regular graph.

The other potential application of the multidimensional electrical network framework is to find a killer application of quantum algorithms based on the QLS algorithm. Arguably, the most promising killer applications are believed to be based on quantum linear algebra with classical input and classical output. However, due to the recent dequantization results \cite{tang2018QuantumInspiredRecommSys,tang2018QInspiredClassAlgPCA,chia2019SampdSubLinLowRankFramework} and caveats of the QLS algorithm \cite{aaronson2015ReadTheFinePrint}, in order for a quantum algorithm to show a superpolynomial speedup in this context, the rank of the matrix in the linear system needs to be high and the condition number of the matrix should be polylogarithmic with respect to size of the matrix. Moreover, the only barrier of showing superpolynomial speedup for the recent quantum algorithms by \cite{chen2022quantum, ding2023limitations} for solving systems of polynomial equations, is because the condition number of their associated matrix is exponentially large. It is unlikely that a standard preconditioning technique will overcome this barrier. As exhibited in this work, the multidimensional electrical network framework allows us to generate the \st alternative electrical flow state in polynomial time in our pathfinding example graph from \fig{graph}, whereas the cost to generate the \st electrical flow state is exponentially large. Since the alternative electrical flow state is a solution to a linear system of equations, as shown in \thm{intromqwflows} and \thm{introelecflow-alt}, our framework may provide new way to get around the condition number barrier for quantum algorithms solving polynomial system. By modifying the system of linear equations, like we change the incidence matrix $B$ to the alternative incidence matrix $\Balt$, one could obtain a efficiently solvable linear system whose solution contains information about the solution to the original linear system. 

Many big open problems in the field of quantum computing can be reduced to the task of generating a specific type of quantum state, such as the graph isomorphism problem \cite{aharononv2007AdiabaticQStateGeneration}, lattice-related problems \cite{eldar2022efficient}, and the problem of computing the ground state of local Hamiltonians \cite{gharibian2015quantum}. However, limited progress has been made on how to generate those quantum states and exhibit exponential speedup on these types of problems. Alternatively, an equally important direction is to design new quantum algorithms to generate certain types of quantum states and use those states to exhibit, hopefully superpolynomial, speedups for certain other problems. Hopefully our new flow state generation techniques may also shed some light on those big open problems related to quantum state generation. 


By reformulating Alternative Kirchhoff's Law and Alternative Ohm's Law in terms of the alternative incidence matrix, see \sec{altincid}, our multidimensional electrical network framework could also have applications in the classical algorithm design for graph problems. Electrical networks have a strong duality to both classical random walks, as well as quantum random walks. In this work we study and recover the connection between (alternative) electrical networks and the generalised quantum walk paradigm of multidimensional quantum walks. It would be interesting to see if there is similar  classical analog of this multidimensional quantum walk by trying to map the (alternative) electrical network back to some classical Markovian process. In addition, Kirchhoff's Law and Ohm's Law were originally derived from real physical observations. It would be worthwhile to study the physical intuition behind our new Alternative Kirchhoff's Law and Alternative Ohm's Law. 

\paragraph{Organization: }The remainder of this article is organised as follows. In \sec{prelim}, we give preliminaries on graph theory, electrical networks and quantum walks. In \sec{multiinc} we show how the concepts related to electrical flow can be generalised to the multidimensional electrical network under the multidimensional quantum walk framework. This results in our new Alternative Kirchhoff's Law and Alternative Ohm's Law.
In addition, we rebuild the connection between the alternative incidence matrix and Alternative Kirchhoff's Law and Alternative Ohm's Law, showing that our new laws seem to be natural definitions and that they generalise known results regarding electrical networks.
In \sec{1d}, we apply the multidimensional electrical network to one-dimensional random hierarchical graphs and show how the framework allows us to sample exponentially faster from the electrical flow state than any classical algorithm can. In \sec{path}, using the multidimensional electrical network, we construct a pathfinding problem where we show that quantum walks can exhibit exponential speedups when it comes to pathfinding problems.

\section{Preliminaries}\label{sec:prelim}
\subsection{Graph theory and electrical networks}

In this section, we define graph-theoretic concepts and basic knowledge of electrical networks following \cite{vishnoi2013lx,jeffery2023multidimensional}. Although experienced readers will be familiar with these notions, we encourage the reader not to skip these definitions, as some of them are not completely standard compared to other works on quantum walks.

\begin{definition}[Network]\label{def:network}
  A network is a connected weighted graph $G = (V,E,\w)$ with a vertex set $V$, an (undirected) edge set $E$ and some weight function $\w:E\rightarrow\mathbb{R}_{>0}$. Since edges are undirected, we can equivalently describe the edges by some set $\E$ such that for all $(u,v)\in E$, exactly one of $(u,v)$ or $(v,u)$ is in $\E$. The choice of edge directions is arbitrary. Then we can view the weights as a function $\w:\E\rightarrow\mathbb{R}_{> 0}$, and for all $(u,v)\in \E$, define $\w_{v,u}=\w_{u,v}$. For convenience, we will define $\w_{u,v}=0$ for every pair of vertices such that $(u,v)\not\in E$. 
  \noindent For an implicit network $G$, and $u\in V$, we will let $\Gamma(u)$ denote the \emph{neighbourhood} of $u$:
  $$\Gamma (u):=\{v\in V:(u,v)\in E\}.$$
  We use the following notation for \emph{the out- and in-neighbourhoods} of $u\in V$:
  \begin{equation}
  \begin{split}
  \Gamma^+(u) &:= \{v\in\Gamma(u):(u,v)\in \E\}\\
  \Gamma^-(u) &:= \{v\in\Gamma(u):(v,u)\in \E\},
  \end{split}\label{eq:neighbourhoods}
  \end{equation}
\end{definition}

\begin{definition}[Flow, Circulation]\label{def:flow}
  A \emph{flow} on a network $G = (V,E,\w)$ is a real-valued function $\f:\E\rightarrow\mathbb{R}$, extended to edges in both directions by $\f_{u,v}=-\f_{v,u}$ for all $(u,v)\in\E$. For any flow $\f$ on $G$, vertex $u\in V$, and subset $A \subseteq V$ we define $\f_u=\sum_{v\in \Gamma(u)}\f_{u,v}$ as the flow coming out of $u$. If $\f_u=0$, we say flow is conserved at $u$. If flow is conserved at every vertex, we call $\f$ a \emph{circulation}. If $\f_u>0$, we call $u$ a \emph{source}, and if $\f_u<0$ we call $u$ a \emph{sink}.    
  A flow with a unique source $s$ and unique sink $t$ (satisfying $\f_s = -\f_t = - 1$) is called an \emph{(unit) \st flow}. The \emph{energy} of any flow $\f$ is 
  \begin{equation*}  
  {\cal E}(\f):=\sum_{(u,v)\in\E}\frac{\f_{u,v}^2}{\w_{u,v}}.\end{equation*}
  The \emph{effective resistance ${\cal R}_{s,t}$} is given by the minimal energy ${\cal E}(\f)$ over all unit flows $\f$ from $s$ to $t$. The \st \emph{electrical flow} is the unique unit \st flow that achieves this minimal energy.
\end{definition}

\begin{definition}[Potential]\label{def:potential}
  A \emph{potential vector} (also known as potential function) on a network $G = (V,E,\w)$ is a real-valued function $\p:V \rightarrow \mathbb{R}$ that assigns a potential $\p_u$ to each vertex $u \in V$.
\end{definition}

\begin{definition}[Electrical Network]\label{def:elec-network}
  Given a network $G = (V,E,\w)$ with a weight function $\w$, we can interpret every edge $(u,v)\in E$ as a resistor with resistance $1/\w_{u,v}$. This allows $G$ to be modeled as an \emph{electrical network}.
\end{definition}

Two fundamental laws related to electrical networks are Kirchhoff's Law (also known as Kirchhoff's Node Law) and Ohm's Law. The former states the definition of a \st flow, as in \defin{flow}:
\begin{definition}[Kirchhoff's Law]\label{def:kcl}
For any \st flow on an electrical network $G = (V,E,\w)$ with $s,t \in V$, the amount of electrical flow that enters any $u \in V \backslash \{s,t\}$ is equal to the amount of flow that exits $u$, that is, $\sum_{v\in\Gamma(u)} \f_{u,v}=0$.
\end{definition}

The latter states that if we inject a unit of current into $s$ and extract it from $t$ in the electrical network $G$, then there is an induced potential vector $\p$ which relates to the \st electrical flow $\f$:

\begin{definition}[Ohm's Law]\label{def:ohm} 
Let $\f$ be the \st electrical flow on an electrical network $G = (V,E,\w)$ with $s,t \in V$. Then there exists a potential vector $\p$ such that the potential difference between the two endpoints of any edge $(u,v) \in E$ is equal to the amount of electrical flow $\f_{u,v}$ along this edge multiplied with the resistance $1/\w_{u,v}$, that is, $\p_u-\p_v=\f_{u,v}/\w_{u,v}$. 
\end{definition}

The potential $\p$ induced by an \st electrical flow $\f$ in Ohm's Law is not unique and it is therefore convention to consider the potential $\p$ that assigns $\p_t = 0$, in which case $\p_s = {\cal R}_{s,t}$. 

\subsection{The incidence matrix, Kirchhoff's Law and Ohm's Law}\label{sec:inc}

We start by restating the connection between on one hand the incidence matrix of a network $G$ and on the other hand Kirchhoff's Law and Ohm's Law. We follow \cite[section 4]{vishnoi2013lx} in doing so.

\begin{definition}[The edge-vertex incidence matrix]\label{def:incidence}
  Let $G=(V,E,\w)$ be a network (See \defin{network}). The \textit{incidence matrix} $B \in \mathbb{C}^{\E \times V}$ of $G$, is the matrix whose rows are indexed by $(u,v) \in \E$, whose columns are indexed $u\in V$ and whose only non-zero entries are given by
  \begin{align*}
    &B_{(u,v),u} = \sqrt{\w_{u,v}}, &B_{(u,v),v} = -\sqrt{\w_{u,v}}.
  \end{align*} 
\end{definition}

Let $W \in \mathbb{C}^{\E \times \E}$ be the diagonal matrix with entries $W_{(u,v),(u,v)} = 1/\sqrt{\w_{u,v}}$. By considering a flow $\f$ on $G = (V,E,\w)$ not only as a function on $\E$, but also as a vector in $\mathbb{C}^{\E}$, we can multiply it with the matrix $W$ to obtain the weighted flow vector $W\f \in \mathbb{C}^{\E}$ with entries $(W\f)_{u,v} = \f_{u,v}/\sqrt{\w_{u,v}}$ for the row indexed by $(u,v) \in \E$. The norm of $W\f$ is therefore precisely given by $\sqrt{{\cal E}(\f)}$. By the introduction of $W\f$, we can rephrase Kirchhoff's Law from \defin{kcl} as a linear equation involving the incidence matrix $B$. Fix some ordering of the columns of $B$ of the form $s,u_1,\dots,u_2,t$ for some $u_1,u_2 \in V \backslash \{s,t\}$ and define the basis vectors ${\sf e}_i \in \mathbb{C}^{n}$ which have a $1$ at the $i$-th location and zero elsewhere.

\begin{definition}[Kirchhoff's Law (incidence matrix)]\label{defin:knl-inc} 
Let $\f$ be any unit \st flow on an electrical network $G = (V,E,\w)$. Let $B$ be the incidence matrix of $G$. Then $\f$ satisfies
  \begin{equation}\label{eq:matrix-knl}
  B^{T} W\f = \begin{bmatrix}
   \sum_{v \in \Gamma(s)} \f_{s,v} \\ \sum_{v \in \Gamma(u_1)} \f_{u_1,v} \\ \vdots \\\sum_{v \in \Gamma(u_2)} \f_{u_2,v} \\ \sum_{v \in \Gamma(t)} \f_{t,v} \end{bmatrix} = 
   \begin{bmatrix}
  1 \\ 0 \\ \vdots \\ 0 \\ -1 \end{bmatrix} = {\sf e}_s - {\sf e}_t.
  \end{equation}
\end{definition}

Recall from \defin{flow} that the \st electrical flow is the flow that minimises ${\cal E}(\f)$ for all unit \st flows $\f$. Since ${\cal E}(\f) = \norm{W\f}^2$, this means that the \st electrical flow corresponds to the `smallest' (in norm) solution to \eq{matrix-knl}, that is, the unique \st flow $\f$ such that its flow vector satisfies $W\f \in \textnormal{ker}(B^T)^{\perp}$. We can therefore recover $W\f$ by making use of the \textit{Moore-Penrose inverse} (also known as the pseudoinverse) of $B^T$, denoted by $B^{T+}$. For any matrix $A$, the Moore-Penrose inverse $A^+$ (not to be confused with the conjugate transpose $A^{\dagger}$), is the unique matrix satisfying
\begin{align} \label{eq:property}
  &AA^+A = A, &&A^+AA^+ = A^+, &&&(AA^+)^{\dagger} = AA^+, &&&&(A^+A)^{\dagger} = A^+A,
\end{align}
and it is well known that $A^{+}$ maps $\textnormal{ran}(A)$ to $\textnormal{ker}(A)^{\perp}$. Hence by left-multiplying both sides of \eq{matrix-knl} with $B^{T+}$, we recover the 
following important property of electrical networks:

\begin{theorem}[Theorem 4.7 in \cite{vishnoi2013lx}]\label{thm:elecflow}
  Let $\f$ be the \st electrical flow on a network $G = (V,E,\w)$. Let $B$ be the incidence matrix of $G$. Then its flow vector $W\f$ is given by 
  \begin{equation}\label{eq:transpose}
    W\f = B^{T+}({\sf e}_s - {\sf e}_t).
  \end{equation}
\end{theorem}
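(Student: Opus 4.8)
The plan is to recognise the claimed identity as saying that the energy-minimising flow is the \emph{minimum-norm solution} of a linear system, and then to extract that minimum-norm solution via the Moore-Penrose inverse. First I would invoke the reformulation of Kirchhoff's Law in \eq{matrix-knl}: every unit \st flow $\f$ satisfies $B^T W\f = {\sf e}_s - {\sf e}_t$, and conversely, since $W$ is an invertible diagonal matrix, every vector $\mathbf{x} \in \mathbb{C}^{\E}$ with $B^T \mathbf{x} = {\sf e}_s - {\sf e}_t$ arises as $\mathbf{x} = W\f$ for a genuine unit \st flow $\f = W^{-1}\mathbf{x}$, because \eq{matrix-knl} encodes exactly $\f_s = 1$, $\f_t = -1$, and conservation elsewhere. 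Thus unit \st flows are in bijection with the solutions of the consistent linear system $B^T \mathbf{x} = {\sf e}_s - {\sf e}_t$, and since ${\cal E}(\f) = \norm{W\f}^2$, minimising energy over unit \st flows is precisely minimising $\norm{\mathbf{x}}$ over solutions of this system. By \defin{flow} the \st electrical flow is the energy minimiser, so $W\f$ is the minimum-norm solution.

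Next I would identify this minimum-norm solution with $B^{T+}({\sf e}_s - {\sf e}_t)$. The system is consistent, a unit \st flow existing because $G$ is connected, so ${\sf e}_s - {\sf e}_t \in \textnormal{ran}(B^T)$ and hence $B^T B^{T+}({\sf e}_s - {\sf e}_t) = {\sf e}_s - {\sf e}_t$, using that $B^T B^{T+}$ is the orthogonal projector onto $\textnormal{ran}(B^T)$, which follows from the defining identities \eq{property}. This shows $B^{T+}({\sf e}_s - {\sf e}_t)$ is itself a solution. To see it is the one of minimal norm, I would write an arbitrary solution as $B^{T+}({\sf e}_s - {\sf e}_t) + \mathbf{n}$ with $\mathbf{n} \in \textnormal{ker}(B^T)$; since $A^+$ maps $\textnormal{ran}(A)$ into $\textnormal{ker}(A)^{\perp}$, the vector $B^{T+}({\sf e}_s - {\sf e}_t)$ lies in $\textnormal{ker}(B^T)^{\perp}$ and is therefore orthogonal to $\mathbf{n}$, so that $\norm{B^{T+}({\sf e}_s - {\sf e}_t) + \mathbf{n}}^2 = \norm{B^{T+}({\sf e}_s - {\sf e}_t)}^2 + \norm{\mathbf{n}}^2$, which is minimised uniquely at $\mathbf{n} = 0$.

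Equivalently, and a touch more cleanly, I would left-multiply \eq{matrix-knl} by $B^{T+}$ to obtain $B^{T+}B^T W\f = B^{T+}({\sf e}_s - {\sf e}_t)$; as $B^{T+}B^T$ is the orthogonal projection onto $\textnormal{ker}(B^T)^{\perp}$ and the electrical flow already satisfies $W\f \in \textnormal{ker}(B^T)^{\perp}$ by the minimum-norm characterisation just established, the left-hand side collapses to $W\f$, giving the claim. The one genuinely substantive point, as opposed to bookkeeping with the pseudoinverse identities, is this minimum-norm characterisation: the statement that among all flows obeying Kirchhoff's Law the energy minimiser is exactly the component orthogonal to $\textnormal{ker}(B^T)$. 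This is where the Pythagorean orthogonal decomposition does the real work, and it is the only step I would be careful to spell out in full; the rest is routine manipulation of \eq{property}.
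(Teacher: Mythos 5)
Your proposal is correct and follows essentially the same route as the paper: recast Kirchhoff's Law as the linear system $B^T W\f = {\sf e}_s - {\sf e}_t$, observe that energy minimisation is norm minimisation so $W\f \in \textnormal{ker}(B^T)^{\perp}$, and then extract this minimum-norm solution by left-multiplying with $B^{T+}$ using the properties in \eq{property}. The only difference is one of completeness, not of method: you spell out the bijection between unit \st flows and solutions (via invertibility of $W$), the consistency of the system, and the Pythagorean argument, all of which the paper leaves implicit in its prose derivation preceding the theorem.
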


Just like we did with $\f$, we can also consider a potential vector $\p$ as a vector (hence the name) in $\mathbb{C}^{V}$ with entries $\p_{u}$ for the row indexed by $u \in V$. In doing so, we can rephrase Ohm's Law from \defin{ohm} as a linear equation involving the incidence matrix $B$. Fix some ordering of the rows of $B$ of the form $(u_1,v_1),\dots,(u_2,v_2) \in \E$.

\begin{definition}[Ohm's Law (incidence matrix)]\label{defin:ohm-inc} 
Let $\f$ be the \st electrical flow on an electrical network $G = (V,E,\w)$. Let $B$ be the incidence matrix of $G$. Then there exists a potential vector $\p$ such that
  \begin{equation}\label{eq:matrix-ohm}
  B\p = \begin{bmatrix}
   \sqrt{\w_{u_1,v_1}}\left(\p_{u_1}-\p_{v_1}\right)\\ \vdots \\ \sqrt{\w_{u_2,v_2}}\left(\p_{u_2}-\p_{v_2}\right)
  \end{bmatrix} = \begin{bmatrix}
   \frac{\f_{u_1,v_1}}{\sqrt{\w_{u_1,v_1}}}\\ \vdots \\ \frac{\f_{u_2,v_2}}{\sqrt{\w_{u_2,v_2}}}
  \end{bmatrix}=W\f .
\end{equation}
\end{definition}

We may assume that the potential vector $\p$ satisfying Ohm's Law to satisfy $\p_s = {\cal R}_{s,t}$ and $\p_t = 0$, which is easier to see from the incidence matrix perspective.

\begin{lemma}\label{lem:potential}
   Let $\f$ be the \st electrical flow on an electrical network $G = (V,E,\w)$ with effective resistance ${\cal R}_{s,t}$. Then there exists a potential vector $\p$ satisfying Ohm's Law such that $\p_s = {\cal R}_{s,t}$ and $\p_t = 0$. 
\end{lemma}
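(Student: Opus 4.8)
The plan is to first secure the \emph{existence} of a potential satisfying Ohm's Law, then exploit the (additive) non-uniqueness of such a potential to normalise $\p_t=0$, and finally read off $\p_s$ by a short inner-product identity that couples the incidence-matrix forms of Ohm's and Kirchhoff's Laws. For existence, I would start from \cref{thm:elecflow}, which gives $W\f = B^{T+}({\sf e}_s - {\sf e}_t)$. Since the range of any Moore--Penrose inverse $A^+$ equals $\ker(A)^\perp$, taking $A = B^T$ yields $\operatorname{ran}(B^{T+}) = \ker(B^T)^\perp = \operatorname{ran}(B)$. Hence $W\f \in \operatorname{ran}(B)$, so there is some $\p \in \mathbb{C}^V$ with $B\p = W\f$, which is exactly Ohm's Law in the form of \cref{defin:ohm-inc}.

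Next I would pin down the kernel of $B$ to handle the normalisation. Each row of $B$ has exactly two nonzero entries, $\sqrt{\w_{u,v}}$ and $-\sqrt{\w_{u,v}}$, so $B\mathbf{1} = 0$ for the all-ones vector $\mathbf{1}$; because $G$ is connected (\cref{def:network}), this is the whole kernel, i.e.\ $\ker(B) = \operatorname{span}\{\mathbf{1}\}$. Consequently every solution of $B\p = W\f$ is determined up to an additive constant, and replacing $\p$ by $\p - \p_t\mathbf{1}$ produces a potential with $\p_t = 0$ that still satisfies $B\p = W\f$, since $B\mathbf{1}=0$.

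Finally I would compute $\p_s$ under this normalisation. Using that $B$ is real (so $B^\dagger = B^T$) together with Kirchhoff's Law in incidence form, $B^T W\f = {\sf e}_s - {\sf e}_t$ (\cref{defin:knl-inc}), one gets
\[
\norm{W\f}^2 = \langle W\f, W\f\rangle = \langle B\p, W\f\rangle = \langle \p, B^T W\f\rangle = \langle \p, {\sf e}_s - {\sf e}_t\rangle = \p_s - \p_t = \p_s .
\]
Because $\f$ is the \st electrical flow, $\norm{W\f}^2 = {\cal E}(\f) = {\cal R}_{s,t}$ by \cref{def:flow}, which gives $\p_s = {\cal R}_{s,t}$ and completes the argument.

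I do not expect a genuine obstacle here: the result is essentially bookkeeping on top of \cref{thm:elecflow}. The one place to be careful is the existence step, where one must correctly identify $\operatorname{ran}(B^{T+}) = \operatorname{ran}(B)$ rather than merely asserting it, and the adjoint computation $\langle B\p, W\f\rangle = \langle \p, B^T W\f\rangle$, which relies on $B$ having real entries so that its adjoint coincides with its transpose. Everything else is a direct combination of the two incidence-matrix laws already established in the excerpt.
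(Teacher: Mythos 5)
Your proof is correct, and its skeleton matches the paper's: pin down the kernel of $B$ so that the potential can be normalised at $t$, then extract $\p_s$ from an inner-product identity coupling the flow and the potential. The differences are minor but worth noting. First, you derive the \emph{existence} of a solution to $B\p = W\f$ from \cref{thm:elecflow} together with the Moore--Penrose range identity $\operatorname{ran}(B^{T+}) = \operatorname{ran}(B)$; the paper never does this, because its statement of Ohm's Law (\cref{defin:ohm-inc}) already asserts existence --- your version makes the lemma self-contained given \cref{thm:elecflow}, a small gain in rigour. Second, the paper identifies $\ker(B)$ via the weighted Laplacian $B^TB$, citing the spectral fact that its zero eigenvalue has multiplicity one, whereas you read off $\ker(B) = \operatorname{span}\{\mathbf{1}\}$ directly from connectivity; these are the same fact dressed differently. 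Third, in the final computation the paper writes ${\cal R}_{s,t} = (W\f)^T B^{T+}({\sf e}_s - {\sf e}_t) = \p^T({\sf e}_s - {\sf e}_t)$ (see \eq{potential-s}), a step which, when unpacked, requires substituting $W\f = B\p$ and knowing that the projector $B^T B^{T+}$ fixes ${\sf e}_s - {\sf e}_t$ (true because ${\sf e}_s - {\sf e}_t \in \operatorname{ran}(B^T)$ by Kirchhoff's Law); your chain $\langle B\p, W\f\rangle = \langle \p, B^T W\f\rangle = \langle \p, {\sf e}_s - {\sf e}_t\rangle$ invokes \cref{defin:knl-inc} directly and avoids the pseudoinverse entirely, which is arguably cleaner and dodges a justification the paper leaves implicit. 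Both routes reach the same conclusion; yours is marginally more elementary in the last step and more complete in the first.
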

\begin{proof}
  From the incidence matrix $B$, we can obtain $B^TB$, which is known as \textit{the weighted Laplacian} of $G$. It is well known in spectral graph theory (see e.g. Lemma 2.2 in \cite{vishnoi2013lx}), that $B^TB$ has $0$ as an eigenvalue with multiplicity $1$, whose corresponding eigenvector is given by $\sum_{u \in V}e_u$. Since $\textnormal{ker}(B) = \textnormal{ker}(B^TB)$, not only does this mean that by setting $\p_t = 0$, we still have a valid solution to \eq{matrix-ohm}, but this actually makes the remaining solution unique. By left-multiplying both sides of \eq{transpose} with $(W\f)^T$ we obtain together with \eq{matrix-ohm} that
  \begin{equation}\label{eq:potential-s}
    {\cal R}_{s,t} = \norm{W\f}^2 = (W\f)^T B^{T+}({\sf e}_s - {\sf e}_t) = \p^T({\sf e}_s - {\sf e}_t) = \p_s - \p_t = \p_s.
  \end{equation}  
\end{proof}

With the Moore-Penrose inverse we can in fact recover the potential from \lem{potential}. To achieve this, we remove the last column of $B$ and last row of $\p$ to obtain $\overline{B}$ and $\overline{\p}$, effectively forcing $\p_t = 0$:
\begin{equation}\label{eq:matrix-pot}
  \p = \begin{bmatrix} \overline{\p} \\ 0 \end{bmatrix} = \begin{bmatrix} \overline{B}^{+}W\f \\ 0  \end{bmatrix}.
\end{equation}

\subsection{Quantum walks and electrical flow}\label{sec:walk}

There is a direct relationship between the analysis of random walks and electrical networks, see for example \cite{MR3616205}. The relationship between quantum walks and electrical networks was built for the first time by \cite{belovs2013ElectricWalks}, where electrical network theory was used to construct and analyse a phase estimation algorithm to detect whether a given graph contained a marked element. Recently, \cite{piddock2019electricfind,apers2022elfs} have shown that the resulting state after running this phase estimation is actually a quantum state representing the electrical flow between a starting vertex and the marked vertices. For a network $G = (V,E,\w)$ and vertices $s,t \in V$, let 
 
$$
\mathcal{H}=\mathrm{span}\{ \ket{u,v} | (u,v)\in E\}
$$ be the associated vector space of its edges. We emphasise here, especially for the readers familiar with \cite{jeffery2023multidimensional}, that each edge $(u,v) $ 'appears twice' in $\mathcal{H}$: both as $\ket{u,v}$ and $\ket{v,u}$, which are orthogonal states in $\mathcal{H}$. For each vertex $u\in V$, we let $\du_u=\sum_{v\in \N(u)} \w_{u,v}$ be the weighted degree of $u$. We use it to define the (normalised) \textit{star state} of $u$ as 
$$\ket{\psi_u} = \frac{1}{\sqrt{\du_u}}\left(\sum_{v\in \N^+(u)}\sqrt{\w_{u,v}}\ket{u,v}-\sum_{v\in \N^-(u)}\sqrt{\w_{u,v}}\ket{u,v}\right) = \frac{1}{\sqrt{\du_u}} \sum_{v\in \N(u)}(-1)^{\Delta_{u,v}}\sqrt{\w_{u,v}}\ket{u,v}.$$ 
Here for any $(u,v) \in E$, the quantity $\Delta_{u,v}$ is equal to $0$ if $(u,v) \in\E$ and $1$ if $(v,u) \in \E$. This definition of a star state is slightly different from most of the literature, where there is usually no sign-difference depending on whether $(u,v)$ is part of the directed edge set, but this will be necessary later on when working with the multidimensional quantum walk framework from \cite{jeffery2023multidimensional}. Now consider the following two subspaces of ${\cal H}$. Let 
$${\cal A} := \mathrm{span}\{\ket{\psi} \in {\cal H}: \brakett{u,v}{\psi} = -\brakett{v,u}{\psi}~~ \forall \ket{u,v}\in {\cal H}\}$$
be the \textit{antisymmetric subspace} of ${\cal H}$. Moreover, let ${\cal B} := \mathrm{span}\{\ket{\psi_u}: u \in V \backslash \{s,t\}\}$ be the \textit{star space} of ${\cal H}$. Then the \textit{quantum walk operator} $U_{{\cal AB}}$ is defined as 
\begin{equation}\label{eq:walk-op}
  U_{{\cal AB}}:=(2\Pi_{{\cal A}}-I)(2\Pi_{{\cal B}}-I) ,
\end{equation}
where $\Pi_{\cal A}$ and $\Pi_{\cal B}$ are orthogonal projectors onto ${\cal A}$ and ${\cal B}$ respectively. Note that 
\begin{align*}
  &2\Pi_{{\cal A}}-I = -{\sf SWAP}, &2\Pi_{{\cal B}}-I = 2\sum_{u\in V\backslash \{s,t\}} \proj{\psi_u} - I,
\end{align*}
where ${\sf SWAP}$ acts as ${\sf SWAP}\ket{u,v} = \ket{v,u}$ for any $\ket{u,v} \in {\cal H}$. For any star state $\ket{\psi_u}$, we write 
$$\ket{\psi_u^+} := \sqrt{2} (I-\Pi_{\cal A})\ket{\psi_u} = \frac{I + {\sf SWAP}}{\sqrt{2}}\ket{\psi_u}$$
for its normalised projection onto ${\cal A}^{\perp}$, which is also known as the \textit{symmetric subspace} of ${\cal H}$. For any flow $\f$, we define its associated (normalised) flow state in ${\cal H}$ as
\begin{equation}\label{eq:flowstate}
  \ket{\f} := \frac{1}{\sqrt{2{\cal E}(\f)}}\sum_{(u,v) \in \E}\frac{\f_{u,v}}{\sqrt{\w_{u,v}}}\left(\ket{u,v} + \ket{v,u}\right).
\end{equation}

In the case where $\f$ is the \st electrical flow, we define the (unnormalised) state associated with the induced potential vector $\p$ (with the convention that $\p_t=0$) as
\begin{equation}\label{eq:potstate}
  \ket{\p}= \sqrt{\frac{2}{{\cal R}_{s,t}}}\sum_{u\in V\backslash \{ s\}} \p_u\sqrt{\du_u}\ket{\psi_u}.
\end{equation}

In \cite{piddock2019electricfind,apers2022elfs}, this potential state $\ket{\p}$ is used to exhibit that by running phase estimation on the quantum walk operator $U_{\cal AB}$, we can obtain a close approximation to the flow state $\ket{\f}$. The precision required in this phase estimation algorithm scales with a quantity in \cite{apers2022elfs} is defined as the escape time $\mathsf{ET}_{s}$: 
$$\mathsf{ET}_{s} := \frac{1}{2}\norm{\ket{\p}}^2 = \frac{1}{{\cal R}_{s,t}} \sum_{u\in V}\p_u^2\du_u.$$ 
Since we will not be using the operational meaning of $\mathsf{ET}_{s}$ in this work, we will omit $\mathsf{ET}_{s}$ in the rest of this work and instead work with $\norm{\ket{\p}}$.

The following lemma tells us something about the output of a particular particular kind of quantum algorithm known as phase estimation~\cite{kitaev1996PhaseEst}. The input to the phase estimation algorithm is an initial state $\ket{\psi}$ and a unitary operator $U$. When the measured phase value is $0$, the phase estimation algorithm projects the initial state $\ket{\psi}$ onto the 1-eigenspace of the unitary $U$. Notably, the following result applies to a broader context beyond quantum walks and electrical network.

\begin{lemma}[Modified Lemma 8 in \cite{piddock2019electricfind} and Lemma 10 in \cite{apers2022elfs}]\label{lem:qwflows}
  Define the unitary $U_{\cal AB} = (2\Pi_{\cal A} - 1)(2\Pi_{\cal B} - 1)$ acting on a Hilbert space ${\cal H}$ for projectors $\Pi_{\cal A},\Pi_{\cal B}$ onto some subspaces ${\cal A}$ and ${\cal B}$ of ${\cal H}$ respectively. Let $\ket{\psi} = \sqrt{p}\ket{\varphi} + (I - \Pi_{\cal A})\ket{\phi}$ be a normalised quantum state such that $U_{\cal AB}\ket{\varphi} = \ket{\varphi}$ and $\ket{\phi}$ is a (unnormalised) vector satisfying $\Pi_{\cal B}\ket{\phi} = \ket{\phi}$. Then performing phase estimation on the state $\ket{\psi}$ with operator $U_{\cal AB}$ and precision $\delta$ outputs ``$0$'' with probability $p' \in [p,p + \frac{17\pi^2\delta\norm{\ket{\phi}}}{16}]$, leaving a state $\ket{\psi'}$ satisfying
  $$ \frac{1}{2}\norm{\proj{\psi'} - \proj{\varphi}}_1 \leq \sqrt{\frac{17\pi^2\delta\norm{\ket{\phi}}}{16p}}. $$
  Consequently, when the precision is $O\left(\frac{p\epsilon^2}{\norm{\ket{\phi}}}\right)$, the resulting state $\ket{\psi'}$ satisfies
  $$ \frac{1}{2}\norm{\proj{\psi'} - \proj{\varphi}}_1 \leq \epsilon. $$
\end{lemma}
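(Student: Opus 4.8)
The plan is to follow the spectral analysis of the two-reflection operator $U_{\cal AB}$ from \cite{piddock2019electricfind,apers2022elfs} and combine it with the effective spectral gap lemma. First I would apply Jordan's lemma to $\Pi_{\cal A}$ and $\Pi_{\cal B}$, decomposing ${\cal H}$ into one- and two-dimensional jointly invariant subspaces on which $U_{\cal AB}$ acts with eigenphases $0$ or $\pi$ (one-dimensional blocks) and conjugate pairs $e^{\pm i\theta_j}$ (two-dimensional blocks); by hypothesis $\ket\varphi$ lies in the phase-$0$ eigenspace. The structural input I would rely on is the effective spectral gap lemma: for any $\ket{w}$ with $\Pi_{\cal B}\ket{w} = \ket{w}$ and any $\Theta \ge 0$, the projector $P_\Theta$ onto eigenphases in $[-\Theta,\Theta]$ obeys $\norm{P_\Theta (I-\Pi_{\cal A})\ket{w}} \le \tfrac{\Theta}{2}\norm{\ket{w}}$. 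Since $\Pi_{\cal B}\ket\phi = \ket\phi$, taking $\Theta\to 0$ shows that $(I-\Pi_{\cal A})\ket\phi$ has no phase-$0$ component, so it is orthogonal to $\ket\varphi$; thus the given decomposition $\ket\psi = \sqrt{p}\,\ket\varphi + (I-\Pi_{\cal A})\ket\phi$ is an orthogonal one, and normalisation forces $\norm{(I-\Pi_{\cal A})\ket\phi}^2 = 1-p$.

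Next I would model phase estimation with precision $\delta$ by its ``$0$''-outcome POVM element $M_0$, which is diagonal in the eigenbasis of $U_{\cal AB}$ with success profile $g(\theta) = \bra{\theta}M_0\ket{\theta}$ satisfying $g(0)=1$. Because $M_0\ket\varphi = \ket\varphi$ and the two components of $\ket\psi$ are orthogonal, the ``$0$''-probability splits cleanly as $p' = \bra{\psi}M_0\ket{\psi} = p + \norm{M_0^{1/2}(I-\Pi_{\cal A})\ket\phi}^2$, which already gives the lower bound $p' \ge p$. For the upper bound I would expand $(I-\Pi_{\cal A})\ket\phi = \sum_j c_j\ket{\theta_j}$, write $\norm{M_0^{1/2}(I-\Pi_{\cal A})\ket\phi}^2 = \sum_j |c_j|^2 g(\theta_j)$ as a Stieltjes integral $\int g\,dF$ of $g$ against the cumulative spectral weight $F(\Theta) = \sum_{|\theta_j|\le\Theta}|c_j|^2$, and bound $F$ via the effective spectral gap lemma. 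Integrating the explicit phase-estimation kernel against this bound, exactly as in \cite{piddock2019electricfind,apers2022elfs}, yields $p' - p \le \tfrac{17\pi^2\delta\norm{\ket\phi}}{16}$.

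For the output state I would use that, conditioned on the ``$0$'' outcome, the normalised post-measurement state is $\ket{\psi'} \propto M_0^{1/2}\ket\psi = \sqrt{p}\,\ket\varphi + M_0^{1/2}(I-\Pi_{\cal A})\ket\phi$, whose squared norm is $p'$. Since $M_0^{1/2}\ket\varphi = \ket\varphi$ and $\ket\varphi \perp (I-\Pi_{\cal A})\ket\phi$, the overlap satisfies $\sqrt{p'}\,\braket{\varphi}{\psi'} = \sqrt{p}$, hence $|\braket{\varphi}{\psi'}|^2 = p/p'$. The pure-state trace-distance formula then gives $\tfrac12\norm{\proj{\psi'} - \proj{\varphi}}_1 = \sqrt{1 - p/p'} = \sqrt{(p'-p)/p'} \le \sqrt{(p'-p)/p}$, and substituting the probability bound yields $\sqrt{\tfrac{17\pi^2\delta\norm{\ket\phi}}{16p}}$. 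The stated consequence then follows by choosing the precision $\delta = O(p\epsilon^2/\norm{\ket\phi})$, so that the radicand becomes $O(\epsilon^2)$.

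I expect the main obstacle to be the second paragraph: obtaining the precise constant $17\pi^2/16$ requires committing to the exact kernel of the phase-estimation routine and carefully integrating it against the spectral-weight bound $F(\Theta)\le(\Theta/2)^2\norm{\ket\phi}^2$, in particular making the dependence on $\delta$ and on $\norm{\ket\phi}$ come out as claimed. The remaining ingredients, namely the orthogonal splitting furnished by the effective spectral gap lemma and the trace-distance conversion, are comparatively routine.
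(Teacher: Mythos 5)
Your proposal is correct and is essentially the paper's own argument: the effective spectral gap lemma supplies both the orthogonality of $\ket{\varphi}$ and $(I-\Pi_{\cal A})\ket{\phi}$ (hence the clean splitting of $p'$ and the lower bound $p'\ge p$) and the bound on the spectral weight of $(I-\Pi_{\cal A})\ket{\phi}$ near phase zero; the kernel analysis gives the upper bound on $p'$; and the overlap computation $|\langle\varphi|\psi'\rangle|^2 = p/p'$ converts everything into the trace-distance bound, exactly as in the appendix. The only real divergence is the step you yourself flag as the obstacle: you propose bounding $\sum_j |c_j|^2 g(\theta_j)$ by Stieltjes integration of the kernel against $F(\Theta)\le (\Theta/2)^2\norm{\ket{\phi}}^2$, deferring the computation to the cited references --- but the paper states explicitly that it modified this lemma because it could not verify the constants (and the precision scaling) in those references, so this is the one step you cannot outsource. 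The paper's replacement is more elementary than an integration by parts: with $T=1/\delta$, split the nonzero eigenphases at $|\theta_j| = \sqrt{1/(T\norm{\ket{\phi}})}$; below the threshold, bound the kernel $\frac{1}{T^2}\sin^2(T\theta_j/2)/\sin^2(\theta_j/2)$ by $\pi^2/4$ (via $\sin^2(T\theta/2)\le (T\theta/2)^2$ and $\sin^2(\theta/2)\ge \theta^2/\pi^2$) and the total weight there by $\norm{\ket{\phi}}/(4T)$ (the spectral gap lemma evaluated at the threshold), contributing $\pi^2\delta\norm{\ket{\phi}}/16$; above the threshold, bound the kernel by $\pi^2/(T^2\theta_j^2)\le \pi^2\delta\norm{\ket{\phi}}$ and the weight trivially by $1$, contributing $\pi^2\delta\norm{\ket{\phi}}$; the two regimes add up to exactly $\frac{17\pi^2}{16}\delta\norm{\ket{\phi}}$. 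One minor technical point: the ``$0$''-outcome Kraus operator of phase estimation is not $M_0^{1/2}$ but an operator $K_0$ acting diagonally on eigenvectors as multiplication by $\frac{1}{T}\sum_{t=0}^{T-1}e^{it\theta_j}$, with $K_0^{\dagger}K_0 = M_0$; since $K_0\ket{\varphi}=\ket{\varphi}$ and $K_0$ is diagonal in the eigenbasis, your norm and overlap computations go through unchanged.
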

\begin{proof} See \app{qwflows}. \end{proof}
This lemma is almost equivalent to Lemma 8 in \cite{piddock2019electricfind} and Lemma 10 in \cite{apers2022elfs}, but we have modified it slightly as we were unable to verify the constants in \cite{piddock2019electricfind, apers2022elfs} and the scaling with the precision in \cite{apers2022elfs}. The theory of electrical networks tells us if we consider the \st electrical flow $\f$, then we can apply \lem{qwflows} to approximate the \st electrical flow state $\ket{\f}$. 

\begin{corollary}\label{cor:qwflows}
  Let $U_{\cal AB}$ be the quantum walk operator as defined in \eq{walk-op}. Then by performing phase estimation on the initial state $\ket{\psi_s^+}$ with the operator $U_{\cal AB}$ and precision $O\left(\frac{\epsilon^2}{\sqrt{{\cal R}_{s,t} \du_s}\norm{\p}}\right)$, the phase estimation algorithm outputs ``$0$'' with probability $\Theta\left(\frac{1}{\sqrt{{\cal R}_{s,t} \du_s}}\right)$, leaving a state $\ket{\f'}$ satisfying
  $$ \frac{1}{2}\norm{\proj{\f'} - \proj{\f}}_1 \leq \epsilon. $$
\end{corollary}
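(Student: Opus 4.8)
The plan is to derive \cor{qwflows} as a direct instance of \lem{qwflows} applied to the quantum walk operator $U_{\cal AB}$ from \eq{walk-op}. The entire task reduces to producing a decomposition of the (normalised) initial state of the precise form $\ket{\psi_s^+} = \sqrt{p}\,\ket{\f} + (I-\Pi_{\cal A})\ket{\phi}$, in which $\ket{\f}$ plays the role of the $+1$-eigenvector $\ket{\varphi}$, a rescaled potential state plays the role of $\ket{\phi}\in{\cal B}$, and $p$ is the success probability. Once this decomposition is established together with the two structural facts $U_{\cal AB}\ket{\f}=\ket{\f}$ and $\Pi_{\cal B}\ket{\phi}=\ket{\phi}$, the corollary follows by substituting the resulting values of $p$ and $\norm{\ket{\phi}}$ into \lem{qwflows}.

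First I would verify the two structural hypotheses. For the eigenvector property, the flow state $\ket{\f}$ of \eq{flowstate} is symmetric, so ${\sf SWAP}\ket{\f}=\ket{\f}$ and hence $(2\Pi_{\cal A}-I)\ket{\f} = -{\sf SWAP}\ket{\f} = -\ket{\f}$; moreover a short calculation shows $\brakett{\psi_u}{\f} = \f_u/\sqrt{2{\cal R}_{s,t}\,\du_u}$, which vanishes for every $u\in V\backslash\{s,t\}$ by Kirchhoff's Law (\defin{kcl}), so $\Pi_{\cal B}\ket{\f}=0$ and $(2\Pi_{\cal B}-I)\ket{\f}=-\ket{\f}$. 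Composing the two reflections then gives $U_{\cal AB}\ket{\f}=\ket{\f}$. For the second hypothesis, the potential state $\ket{\p}$ of \eq{potstate} is a combination of star states over $u\in V\backslash\{s\}$; under the convention $\p_t=0$ from \lem{potential} the $t$-term drops out, so $\ket{\p}\in{\cal B}=\mathrm{span}\{\ket{\psi_u}:u\in V\backslash\{s,t\}\}$ and any scalar multiple $\ket{\phi}$ of $\ket{\p}$ satisfies $\Pi_{\cal B}\ket{\phi}=\ket{\phi}$.

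The crux is the decomposition itself. The key identity I would establish is that symmetrising the \emph{full} potential expansion (over all of $V$, including the source) reproduces the flow state exactly:
$$ (I-\Pi_{\cal A})\left(\ket{\p} + \sqrt{\tfrac{2}{{\cal R}_{s,t}}}\,\p_s\sqrt{\du_s}\,\ket{\psi_s}\right) = \ket{\f}. $$
This is proved edge-by-edge: applying $(I-\Pi_{\cal A}) = \tfrac12(I+{\sf SWAP})$ to $\sqrt{\du_u}\ket{\psi_u}$ and collecting, for each $(a,b)\in\E$, the contributions from $u=a$ and $u=b$, the signs $(-1)^{\Delta_{a,b}}$ turn the combination into $\sqrt{\w_{a,b}}(\p_a-\p_b)/2 = \f_{a,b}/(2\sqrt{\w_{a,b}})$ by Ohm's Law (\defin{ohm}); together with ${\cal E}(\f)={\cal R}_{s,t}$ this matches the coefficients in \eq{flowstate}, where the edges at $t$ match precisely because $\p_t=0$. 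Using $(I-\Pi_{\cal A})\ket{\psi_s}=\tfrac{1}{\sqrt2}\ket{\psi_s^+}$ and $\p_s={\cal R}_{s,t}$ to evaluate the source term, isolating $\ket{\psi_s^+}$ gives
$$ \ket{\psi_s^+} = \frac{1}{\sqrt{{\cal R}_{s,t}\,\du_s}}\,\ket{\f} \;-\; \frac{1}{\sqrt{{\cal R}_{s,t}\,\du_s}}\,(I-\Pi_{\cal A})\ket{\p}, $$
which is exactly the required form with $\ket{\varphi}=\ket{\f}$, $\ket{\phi}=-\tfrac{1}{\sqrt{{\cal R}_{s,t}\du_s}}\ket{\p}$, and $\sqrt{p}=1/\sqrt{{\cal R}_{s,t}\du_s}$, i.e.\ $p = 1/({\cal R}_{s,t}\du_s)$ and $\norm{\ket{\phi}} = \norm{\ket{\p}}/\sqrt{{\cal R}_{s,t}\du_s}$.

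Finally I would substitute into \lem{qwflows}. The success probability is $\Theta(p)=\Theta\!\left(1/({\cal R}_{s,t}\du_s)\right)$, and the precision prescription $O(p\epsilon^2/\norm{\ket{\phi}})$ simplifies, after cancelling a factor of $\sqrt{{\cal R}_{s,t}\du_s}$, to the stated $O\!\left(\epsilon^2/(\sqrt{{\cal R}_{s,t}\du_s}\,\norm{\p})\right)$ — this exact agreement of the precision is a useful internal check on the value of $p$. At that point \lem{qwflows} guarantees the trace-distance bound $\tfrac12\norm{\proj{\f'}-\proj{\f}}_1\le\epsilon$. I expect the main obstacle to be the decomposition step, specifically the edge-by-edge sign bookkeeping forced by the $\Delta_{u,v}$ orientation convention and the correct isolation of the excluded source term $\ket{\psi_s}$; the two structural facts and the concluding substitution are routine once that identity is in place.
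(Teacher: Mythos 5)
Your proposal is correct and takes essentially the same route as the paper's own proof: the same verification that $\Pi_{\cal B}\ket{\f}=0$ (Kirchhoff) and $\Pi_{\cal A}\ket{\f}=0$, the same key identity $\ket{\f}=(I-\Pi_{\cal A})\ket{\p}+\sqrt{{\cal R}_{s,t}\du_s}\,\ket{\psi_s^+}$ obtained via Ohm's Law with $\p_t=0$, $\p_s={\cal R}_{s,t}$, and the same invocation of \lem{qwflows} with $\ket{\varphi}=\ket{\f}$, $\ket{\phi}=-\frac{1}{\sqrt{{\cal R}_{s,t}\du_s}}\ket{\p}$, $p=\frac{1}{{\cal R}_{s,t}\du_s}$. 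One remark: the success probability $\Theta\left(\frac{1}{{\cal R}_{s,t}\du_s}\right)$ you derive is what the paper's proof and its generalisation \thm{qwflows} actually give, so the $\Theta\left(\frac{1}{\sqrt{{\cal R}_{s,t}\du_s}}\right)$ appearing in the corollary's statement is evidently a typo rather than a flaw in your argument.
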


We emphasise that the probability $p$ in \lem{qwflows} should not be confused with the potential $\p$ in \eq{potstate} and \cor{qwflows}.

Before we provide the proof of \cor{qwflows}, which can also be found in \cite{piddock2019electricfind,apers2022elfs}, we remark that it is possible to modify the network $G$ to ensure that ${\cal R}_{s,t} \du_s = \Theta(1)$, which is a standard tool used in quantum electrical networks \cite{belovs2013ElectricWalks}.

\begin{proof}
Firstly, by Kirchhoff's Law (see \defin{kcl}), we know the \st electrical flow $\f$ is conserved at each vertex $u \in V \setminus \{s,t\}$, which shows that $\Pi_{\cal B}\ket{\f} = 0$:
\begin{equation}\label{eq:flow-projB}
\begin{split}
  \langle \psi_u | \f \rangle &= \frac{1}{\sqrt{2{\cal R}_{s,t}\du_u}}\sum_{v\in \N(u)} (-1)^{\Delta_{u,v}} \sqrt{\w_{u,v}} \bra{u,v} \sum_{(u,v)\in \E} \frac{\f_{u,v}}{\sqrt{\w_{u,v}}} (\ket{u,v}+\ket{v,u})\\
  &= \frac{1}{\sqrt{2{\cal R}_{s,t}\du_u}}\left(\sum_{v\in \N^{+}(u)} \f_{u,v} + \sum_{v\in \N^{-}(u)} -\f_{v,u}\right) = \frac{1}{\sqrt{2{\cal R}_{s,t}}}\sum_{v\in \N(u)} \f_{u,v} =0.
\end{split}
\end{equation}

By Ohm's Law (see \defin{ohm}), we know that there exists a potential $\p$, with $\p_t = 0$, such that for each edge $(u,v) \in E$ we have $\p_u-\p_v=\frac{\f_{u,v}}{\w_{u,v}}$. This shows that $\Pi_{\cal A}\ket{\f} = 0$, which combined with the fact that $\Pi_{\cal B}\ket{\f} = 0$ shows that $\ket{\f}$ is indeed a normalised $+1$-eigenvector of $U_{\cal AB}$:
\begin{equation}\label{eq:flow-projA}
\begin{split}
   \ket{\f}&= \frac{1}{\sqrt{2{\cal R}_{s,t}}} \sum_{(u,v)\in \E} \frac{\f_{u,v}}{\sqrt{\w_{u,v}}} (\ket{u,v}+\ket{v,u}) \\
   &= \frac{1}{\sqrt{2{\cal R}_{s,t}}} \sum_{(u,v)\in \E} \left( \sqrt{\w_{u,v}}(\p_u-\p_v)\ket{u,v} + (\p_u-\p_v) \sqrt{\w_{u,v}}\ket{v,u}\right) \\
   &= \frac{1}{\sqrt{2{\cal R}_{s,t}}} \left(\sum_{u \in V} \p_u \sum_{v\in \N(u)} (-1)^{\Delta_{u,v}}\sqrt{\w_{u,v}}\ket{u,v} + {\sf SWAP}\sum_{u \in V} \p_u\sum_{v\in \N(u)} (-1)^{\Delta_{u,v}}\sqrt{\w_{u,v}}\ket{u,v}\right) \\
   &= (I-\Pi_{\cal A})\sqrt{\frac{2}{{\cal R}_{s,t}}} \sum_{u \in V} \p_u\sqrt{\du_u}\ket{\psi_u}.
\end{split}
\end{equation}

Not only does \eq{flow-projA} tells us that $\Pi_{\cal A}\ket{\f} = 0$, it also immediately shows us how to decompose $\ket{\f}$ to obtain the factor $\ket{\p}$, where we make use of the fact that $\p_s = {\cal R}_{s,t}$:

\begin{align*}
  \ket{\f} &= (I-\Pi_{\cal A})\sqrt{\frac{2}{{\cal R}_{s,t}}}\sum_{u \in V} \p_u\sqrt{\du_u}\ket{\psi_u} = (I-\Pi_{\cal A})\ket{\p} + (I-\Pi_{\cal A})\sqrt{\frac{2}{{\cal R}_{s,t}}}\p_s\sqrt{\du_s}\ket{\psi_s}\\
  &= (I-\Pi_{\cal A})\ket{\p} + \sqrt{{\cal R}_{s,t} \du_s}\ket{\psi_s^+},
\end{align*}
which we can rewrite to
\begin{equation}\label{eq:rel-flow-pot}
  \ket{\psi_s^+} = \frac{1}{\sqrt{{\cal R}_{s,t} \du_s}}\ket{\f} - (I-\Pi_{\cal A})\frac{1}{\sqrt{{\cal R}_{s,t} \du_s}}\ket{\p}.
\end{equation}

Lastly, since $\p_t = 0$, we immediately have by its definition in \eq{potstate} that $\ket{\p} \in {\cal B}$, meaning $\Pi_{\cal B}\ket{\p} = \ket{\p}$. Hence by applying \lem{qwflows} with the parameters $\ket{\psi} = \ket{\psi_s^+}$, $\ket{\varphi} = \ket{\f}$, $\ket{\phi} = -\frac{1}{\sqrt{{\cal R}_{s,t} \du_s}}\ket{\p}$ and $p = \frac{1}{{\cal R}_{s,t} \du_s}$, we find that the resulting state after running phase estimation on the quantum walk operator $U_{\cal AB}$ with initial state $\ket{\psi_s^+}$ is approximately the \st electrical flow state. 
  
\end{proof}

 \section{Multidimensional electrical networks and the alternative incidence matrix }\label{sec:multiinc}


In this section, based on the multidimensional quantum walk framework \cite{jeffery2023multidimensional}, we extend the electrical network to the multidimensional electrical network by generalising Kirchhoff's Law and Ohm's Law as Alternative Kirchhoff's Law and Alternative Ohm's Law, respectively. One of the key techniques used in the multidimensional quantum walk framework is the introduction of alternative neighbourhoods, where each vertex is associated with a subspace instead of a single vector (its star state) as was the case in \sec{walk}. 
 
\subsection{Alternative neighbourhoods}

As discussed and motivated in the introduction, the multidimensional quantum walk framework modifies the quantum walk operator through the use of \textit{alternative neighbourhoods}.

\begin{definition}[Alternative Neighbourhoods]\label{def:alternative}
For a network $G = (V,E,\w)$ and for each vertex $u \in V$, a set of \emph{alternative neighbourhoods} is a collection of states $\Psi_\star(u)$ such that $\ket{\psi_u} \in \Psi_\star(u)$ and
$$\Psi_\star=\{\Psi_\star(u)\subset \mathrm{span}\{\ket{u,v}:v\in \N(u)\}: u\in V\}$$
We view the states of $\Psi_\star(u)$ as different possibilities for $\ket{\psi_u}$, only one of which is ``correct''. We say we can \emph{generate $\Psi_\star$ in complexity ${\sf A}_\star$} if there is a map $U_\star$ that can be implemented with complexity ${\sf A}_\star$ and for each $u\in V$, an orthonormal basis $\overline{\Psi}(u)=\{\ket{{\psi}_{u,0}},\dots,\ket{{\psi}_{u,a_u-1}}\}$ of size $a_u < \abs{\N(u)}$ for $\mathrm{span}\{\Psi_\star(u)\}$, such that for all $i\in \{0,\dots,a_u-1\}$,
$U_{\star}\ket{u,i}=\ket{\psi_{u,i}}.$ 
\end{definition}

In \defin{alternative} we never exclude the possibility that the dimension $a_u$ of the alternative neighbourhood $\Psi_{\star}(u)$ is equal to one, in which case $\Psi_{\star}(u) = \{\ket{\psi_{u,0}}\} = \{\ket{\psi_{u}}\}$. If that is the case, we will say that $u$ has no additional alternative neighbourhoods. These alternative neighbourhoods were introduced in \cite{jeffery2023multidimensional} to tackle the case where it might be computationally easier to generate $\Psi_{\star}(u)$ instead of $\ket{\psi_u}$. This can happen for example when dealing with an adjacency list oracle, where a single query to a vertex $u$ does not allow us to distinguish its neighbours, which is needed when the star state $\ket{\psi_{u}}$ that we want to generate has different weights $\w_{u,v}$ for different neighbours $v$. In the well known welded tree problem \cite{childs2003ExpSpeedupQW} we are indeed dealing with such an adjacency list oracle and this hurdle is tackled in \cite{jeffery2023multidimensional} using alternative neighbourhoods. This specific approach for welded tree will be discussed and generalised in \sec{1d}.

By modifying the quantum walk operator $U_{\cal AB}$ to reflect around the span of $\Psi_\star$ instead of the span of all star states $\ket{\psi_u}$, this reduces the cost of applying the walk operator $U_{\cal AB}$. As a result, one can reduce the precision needed in the phase estimation algorithm by reducing the weight of the graph, which directly reduces $\norm{\ket{\palt}}$, at the cost of increasing the effective resistance ${\cal R}_{s,t}$, without incurring an additional cost in calling $U_{\cal AB}$. 

The addition of these alternative neighbourhoods in $\Psi_{\star}$ modifies the quantum walk operator $U_{\A\Balt}$, by increasing the star space ${\cal B}$:
$$\Balt = \mathrm{span}\{\ket{\psi_{u,i}}: u \in V \backslash \{s,t\}, i \in \{0,\dots,a_u-1\}\}.$$
\noindent Through this modification, the quantum walk operator $U_{{\cal AB}}$ with respect to $\Psi_{\star}$ is altered to
\begin{equation}\label{eq:walk-alt}
  U_{\A\Balt}=(2\Pi_{{\cal A}}-I)(2\Pi_{\Balt}-I) ,
\end{equation}
where $\Pi_{\cal A}$ and $\Pi_{\Balt}$ are orthogonal projectors onto $\A$ and $\Balt$ respectively, meaning
\begin{align*}
  &2\Pi_{{\cal A}}-I = -{\sf SWAP}, &2\Pi_{\Balt}-I = 2\sum_{u\in V\backslash \{s,t\}}\sum_{i = 0}^{a_u-1} \proj{\psi_{u,i}} - I.
\end{align*}

We would like to be able to apply \lem{qwflows} to this more general walk operator as well, meaning we want to find an alternative unit \st flow $\falt$, an (unnormalised) state $\ket{\palt}$ and (normalised) state $\ket{\psi}$ such that the following conditions are satisfied:
\begin{enumerate}
  \item $U\ket{\falt}=\ket{\falt}$.
  \item $(I-\Pi_{\cal A})\ket{\palt}+ \sqrt{{\cal E(\falt)}\du_s}\ket{\psi}=\ket{\falt}$.
  \item $\Pi_{\Balt} \ket{\palt} =\ket{\palt}$.
\end{enumerate}

For simplicity we will assume in the rest of this work that $s$ and $t$ do not contain any additional alternative neighbourhoods, as it greatly simplifies notation and intuition. In our applications in \sec{1d} and \sec{path} these simplifying assumptions will also hold. 

\subsection{Alternative Kirchhoff's Law}\label{sec:kal}

Recall the definition of a flow state from \eq{flowstate} for any flow $\f$. By construction, $\ket{\f}$ lives in the symmetric subspace ${\cal A}^{\perp}$, since
$$\Pi_{\cal A}\left(\ket{u,v}+\ket{v,u}\right) = \frac{I - {\sf SWAP}}{2}\left(\ket{u,v}+\ket{v,u}\right) = 0.$$

Hence any \st flow $\f$ that we select will satisfy $\Pi_{\cal A}\ket{\f} = 0$. For the flow state $\ket{\f}$ to live in the $+1$-eigenspace of $U$, it rests us to find some $\f$ such that $\Pi_{\Balt}\ket{\f} = 0$. In \eq{flow-projB}, we used Kirchhoff's Law for this goal, which showed that for any \st flow $\f$ and vertex $u \in V \backslash \{s,t\}$, we have $\brakett{\psi_u}{\f} = 0$. However, in the multidimensional electrical network, it must be orthogonal to all states in $\Balt$ instead of $\B$. That is, the state $\ket{\falt}$ must be orthogonal to all of $\mathrm{span}(\Psi_{\star}(u))$ for every $u \in V \backslash \{s,t\}$. One can interpret this as the flow being conserved for ``all'' alternative neighborhoods. We therefore modify Kirchhoff's Law to be \textit{Alternative Kirchhoff's Law}.

\begin{definition}[Alternative Kirchhoff's Law]\label{def:kcl-alt}
For any \st alternative flow $\falt$ with respect to a collection of alternative neighbourhoods $\Psi_{\star}$ on an electrical network $G = (V,E,\w)$ with $s,t \in V$, the corresponding flow state $\ket{\falt}$ is orthogonal to $\mathrm{span}(\Psi_{\star}(u))$ for every $u \in V \backslash \{s,t\}$, that is, $\langle {\psi}_{u,i} |\f \rangle =0$ for each $i\in \{0,1,\cdots,a_u-1\}$.
\end{definition}

We refer to any unit \st flow satisfying Alternative Kirchhoff's Law as an alternative unit \st flow. Similarly as in \defin{flow}, we define the \st alternative electrical flow with respect to $\Psi_{\star}$ as the alternative unit \st flow achieving minimal energy:
\begin{definition}[Alternative Electrical Flow]\label{def:flow-alt}
For a collection of alternative neighbourhoods $\Psi_{\star}$ on an electrical network $G = (V,E,\w)$ with $s,t \in V$, the \emph{\st alternative electrical flow} is the alternative unit \st flow with minimal energy ${\cal E}(\falt)$. We call this minimal energy the \emph{alternative effective resistance} ${\cal R}_{s,t}^{\alt}$.
\end{definition}

Right now it might seem as if this is ill-defined, as at first glance there could very well be multiple alternative unit \st flows that achieve the minimal energy ${\cal R}_{s,t}^{\alt}$, but we prove in \thm{elecflow-alt} that the \st alternative electrical flow is indeed unique (as long as any alternative unit \st flow exists at all). It might be that the \st electrical flow also satisfies Alternative Kirchhoff's Law, meaning that it coincides with the \st alternative electrical flow. We show an example of this in \sec{1d} and this allows us to apply \lem{qwflows} directly using similar parameters as in \cor{qwflows}. The other side of the spectrum is that there might not be any \st flow at all that satisfies Alternative Kirchhoff's Law, in which case the \st alternative electrical flow does not exist. We show an example of this shortly. The most likely scenario however is that we are right in the middle where the \st electrical flow and \st alternative electrical flow do not coincide, meaning we can not rely on Ohm's Law. 

\subsection{Alternative Ohm's Law} 

To apply \lem{qwflows}, we still need to find an (unnormalised) state $\ket{\palt}$ and (normalised) state $\ket{\psi}$ such that
\begin{enumerate}
  \item $(I-\Pi_{\cal A})\ket{\palt}+ \sqrt{{\cal E(\falt)}\du_s}\ket{\psi}=\ket{\falt}$.
  \item $\Pi_{\cal \Balt} \ket{\palt} =\ket{\palt}$.
\end{enumerate} 

In the case that the \st alternative electrical flow $\falt$ does not overlap with the \st electrical flow, we will not be able to find a potential vector $\p$ defined on the vertices $V$ satisfying Ohm's Law. So instead we will be looking for a potential vector $\palt$ on the edges $E$, meaning it assigns a potential $\palt_{u,v}$ to each edge $(u,v) \in E$. 
\begin{definition}[Alternative Potential]\label{def:alterpotential}
  An \emph{alternative potential vector} (or alternative potential function) on a network $G = (V,E,\w)$ is a real-valued function $\palt: E \rightarrow \mathbb{R}$ that assigns a potential $\p_{u,v}$ to each ordered pair $(u,v) \in E$.
\end{definition}

Similarly to how the potential vector satisfied $\p_s = {\cal R}_{s,t}$ and $\p_t = 0$, we require the alternative potential vector $\palt$ to satisfy $\palt_{s,v} = {\cal R}_{s,t}^{\alt}$ and $\palt_{t,v} = 0$ for every $v \in \N(s)$ (resp. $v \in \N(t)$). We define its corresponding state in ${\cal H}$ as
\begin{equation}\label{eq:potential-edge}
  \ket{\palt}=\sqrt{\frac{2}{{\cal R}(\falt)}} \sum_{u \in V \setminus \{s\}}\sum_{v \in \N(u)}(-1)^{\Delta_{u,v}} \palt_{u,v} \sqrt{\w_{u,v}} \ket{u,v}.
\end{equation}

\begin{definition}[Alternative Ohm's Law]\label{def:ohm-alt}
Let $\falt$ be the \st alternative electrical flow with respect to a collection of alternative neighbourhoods $\Psi_{\star}$ on an electrical network $G = (V,E,\w)$ with $s,t \in V$. Then there exists an alternative potential vector $\palt$ that assigns a potential $\palt_{u,v}$ on each edge $(u,v) \in E$ such that the associated state $\ket{\palt}$ (see \eq{potential-edge}) satisfies $\Pi_{\Balt}\ket{\palt} = \ket{\palt}$ and the potential difference between $(u,v)$ and $(v,u)$ is equal to the amount of electrical flow $\falt_{u,v}$ along $(u,v)$ multiplied with the resistance $1/\w_{u,v}$, that is, $\palt_{u,v} - \palt_{v,u} = \falt_{u,v}/\w_{u,v}$.
\end{definition}

We have not yet introduced the necessarily tools to show that there always exists a potential vector $\palt$ satisfying Alternative Ohm's Law, we will do this in \thm{potential-alt}. Note that if it were not for the extra condition $\Pi_{\Balt}\ket{\palt} = \ket{\palt}$, there is no interplay between the variables $\palt_{u,v},\palt_{v,u}$ for different edges and one could always find an alternate potential satisfying Alternative Ohm's Law. However, we shall see in \thm{potential-alt} that the condition $\Pi_{\Balt}\ket{\palt} = \ket{\palt}$ gives rise to a unique alternative potential. In the following examples and applications, we therefore show existence by explicitly constructing $\ket{\palt}$. If the potential vector $\palt$ satisfies Alternative Ohm's Law, then $\ket{\palt}$ is precisely the state we need to apply \lem{qwflows}:

\begin{align*}
  \ket{\falt} &= \frac{1}{\sqrt{2{\cal R}_{s,t}^{\alt}}} \sum_{(u,v)\in \E} \frac{\f_{u,v}}{\sqrt{\w_{u,v}}} (\ket{u,v}+\ket{v,u}) \\
  &= \frac{1}{\sqrt{2{\cal R}_{s,t}^{\alt}}} \sum_{(u,v)\in \E} \left(\sqrt{\w_{u,v}}(\palt_{u,v}-\palt_{v,u})\ket{u,v} + \sqrt{\w_{u,v}}(\palt_{u,v}-\palt_{v,u})\ket{v,u}\right) \\
  &= \frac{1}{\sqrt{2{\cal R}_{s,t}^{\alt}}} \left(\sum_{(u,v)\in \E} \sqrt{\w_{u,v}}(\palt_{u,v}\ket{u,v} - \palt_{v,u} \ket{v,u}) + {\sf SWAP}\sum_{(u,v)\in \E} \sqrt{\w_{u,v}}(\palt_{u,v}\ket{u,v} - \palt_{v,u} \ket{v,u})\right) \\
  &= (I-\Pi_{\cal A}) \sqrt{\frac{2}{{\cal R}_{s,t}^{\alt}}} \sum_{(u,v)\in \E} \sqrt{\w_{u,v}}(\palt_{u,v}\ket{u,v} - \palt_{v,u} \ket{v,u})\\
  &= (I-\Pi_{\cal A})\ket{\palt} + (I-\Pi_{\cal A}) \sqrt{\frac{2}{{\cal R}_{s,t}^{\alt}}}\sum_{v\in\N(s)} (-1)^{\Delta_{s,v}}\palt_{s,v}\sqrt{\w_{s,v}}\ket{s,v} \\
  &= (I-\Pi_{\cal A})\ket{\palt} + \sqrt{{{\cal R}_{s,t}^{\alt}}\du_s}\ket{\psi_s^+}. 
\end{align*}

In the following examples and applications where we explicitly construct the state $\ket{\palt}$, we need to verify that it satisfies $\Pi_{\Balt}\ket{\palt} = \ket{\palt}$. To assist in this verification, we introduce the states 
$$\ket{\palt_{|u}}= \sqrt{\frac{2}{{\cal R}(\falt)}}(\ket{u}\bra{u}\otimes I)\ket{\palt}$$
for $u \in V \setminus \{s\}$. To verify whether $\Pi_{\Balt}\ket{\palt}= \ket{\palt}$, it will be sufficient to verify whether each $\ket{\palt_{|u}}$ lies in $\mathrm{span}\{\Psi_\star(u)\}$, since we can decompose $\ket{\palt}$ as
\begin{equation}\label{eq:pot-reduced}
\begin{split}
  \ket{\palt}&= \sqrt{\frac{2}{{\cal R}(\falt)}} \sum_{u \in V \setminus \{s\}}\sum_{v \in \N(u)}(-1)^{\Delta_{u,v}} \palt_{u,v} \sqrt{\w_{u,v}} \ket{u,v}\\
  &=\sqrt{\frac{2}{{\cal R}(\falt)}} \sum_{u \in V \setminus \{s\}} \ket{\palt_{|u}}.
\end{split}
\end{equation}

In the special case where $u$ has no additional alternative neighbourhoods, for $\ket{\palt_{|u}}$ to lay in $\mathrm{span}\{\Psi_\star(u)\} = \mathrm{span}\{\ket{\psi_u}\}$, the edge potentials $\p_{u,v}$ must be the same for each $v \in \N(u)$.

\subsection{Examples} \label{sec:examples}

Having rebuilt the connection between the alternative potential vector and \st alternative electrical flow in the multidimensional quantum electrical network framework, we now provide some intuition for these new definitions by providing a few examples.

Consider the network $G = (V,E,\w)$ with the vertex set $V = \{s,x,y,t\}$ and directed edge set $\E = \{(s,x),(x,y),(x,t),(y,t)\}$, where each edge $(u,v) \in \E$ has weight $\w_{u,v} = 1/4$, except for the edge $(s,x)$, which has weight $\w_{s,x}=1$. This is visualised in \fig{normal}. These directions and weight assignments give rise to the following star states for each of our $4$ vertices:
\begin{align*}
  &\ket{\psi_s} = \ket{s,x}, &&\ket{\psi_x} =\sqrt{\frac{2}{3}} \left(-\ket{x,s} + \frac{1}{2}\ket{x,y} + \frac{1}{2}\ket{x,t}\right),\\
  &\ket{\psi_y} =\sqrt{2}\left( -\frac{1}{2}\ket{y,x} + \frac{1}{2}\ket{y,t}\right), &&\ket{\psi_t} =\sqrt{2}\left( -\frac{1}{2}\ket{t,x} - \frac{1}{2}\ket{t,y}\right).
\end{align*}

In \fig{normal} we show the \st electrical flow $\f$ on $G$ and the corresponding potential vector $\p$. It is straightforward to verify that $\f$ and $\p$ satisfy Ohm's Law, meaning $\p_u - \p_v = \frac{\f_{u,v}}{\w_{u,v}}$. 

\begin{figure}
\centering
\begin{tikzpicture}
\node at (0,0) {\begin{tikzpicture}
  \draw[->] (-2,0)--(-0.1,0);
  \draw[->] (0,0)--(1.9,0);
  \draw[->] (1.08,.92)--(1.92,.08);
  \draw[->] (0,0)--(.92,.92);
  
  \filldraw (-2,0) circle (.1);
  \filldraw (0,0) circle (.1);
  \filldraw (2,0) circle (.1);
  \filldraw (1,1) circle (.1);
  
  \node at (-2,-0.3) {$s$};
  \node at (0,-0.3) {$x$};
  \node at (1,1.25) {$y$};
  \node at (2,-0.3) {$t$};

  \node at (-1,0.25) {$1$};
  \node at (1,0.3) {$\frac{1}{4}$};
  \node at (0.3,0.7) {$\frac{1}{4}$};
  \node at (1.7,0.7) {$\frac{1}{4}$};

  \node at (0,-1) {$\w_{u,v}$ for each $(u,v) \in \E$};
  \end{tikzpicture}};
  
  \node at (6,0) {\begin{tikzpicture}
  \draw[->] (-2,0)--(-0.1,0);
  \draw[->] (0,0)--(1.9,0);
  \draw[->] (1.08,.92)--(1.92,.08);
  \draw[->] (0,0)--(.92,.92);
  
  \filldraw (-2,0) circle (.1);
  \filldraw (0,0) circle (.1);
  \filldraw (2,0) circle (.1);
  \filldraw (1,1) circle (.1);
  
  \node at (-2,-0.3) {$s$};
  \node at (0,-0.3) {$x$};
  \node at (1,1.25) {$y$};
  \node at (2,-0.3) {$t$};

  \node at (-1,0.25) {$1$};
  \node at (1,0.3) {$\frac{2}{3}$};
  \node at (0.3,0.7) {$\frac{1}{3}$};
  \node at (1.7,0.7) {$\frac{1}{3}$};
  \node at (0,-1) {$\f_{u,v}$ for each $(u,v) \in \E$};
  \end{tikzpicture}};

  \node at (12,0){\begin{tikzpicture}
  \draw[->] (-2,0)--(-0.1,0);
  \draw[->] (0,0)--(1.9,0);
  \draw[->] (1.08,.92)--(1.92,.08);
  \draw[->] (0,0)--(.92,.92);
  
  \filldraw (-2,0) circle (.1);
  \filldraw (0,0) circle (.1);
  \filldraw (2,0) circle (.1);
  \filldraw (1,1) circle (.1);
  
  \node at (-2,-0.3) {$s$};
  \node at (0,-0.3) {$x$};
  \node at (1,1.25) {$y$};
  \node at (2,-0.3) {$t$};

  \node at (-2,0.5) {$\frac{11}{3}$};
  \node at (-0,0.5) {$\frac{8}{3}$};
  \node at (1,0.5) {$\frac{4}{3}$};
  \node at (2,0.5) {$0$};
  \node at (0,-1) {$\p_u$ for each $u \in V$};
  \end{tikzpicture}};
  
\end{tikzpicture}
\caption{Graph $G$ with its \st electrical flow $\f$ and corresponding potential $\p$ at each vertex.}\label{fig:normal}
\end{figure}

\begin{figure}
\centering
\begin{tikzpicture}
\node at (0,0) {\begin{tikzpicture}
  \draw[->] (-2,0)--(-0.1,0);
  \draw[->] (0,0)--(1.9,0);
  \draw[->] (1.08,.92)--(1.92,.08);
  \draw[->] (0,0)--(.92,.92);
  
  \filldraw (-2,0) circle (.1);
  \filldraw[fill=blue] (0,0) circle (.1);
  \filldraw (2,0) circle (.1);
  \filldraw (1,1) circle (.1);
  
  \node at (-2,-0.3) {$s$};
  \node at (0,-0.3) {$x$};
  \node at (1,1.25) {$y$};
  \node at (2,-0.3) {$t$};

  \node at (-1,0.25) {$1$};
  \node at (1,0.3) {$\frac{1}{2}$};
  \node at (0.3,0.7) {$\frac{1}{2}$};
  \node at (1.7,0.7) {$\frac{1}{2}$};

  \node at (0,-1) {$\falt_{u,v}$ for each $(u,v) \in \E$};
  \end{tikzpicture}};
  
  \node at (6,0) {\begin{tikzpicture}
  \draw[->] (-2,0)--(-0.1,0);
  \draw[->] (0,0)--(1.9,0);
  \draw[->] (1.08,.92)--(1.92,.08);
  \draw[->] (0,0)--(.92,.92);
  
  \filldraw (-2,0) circle (.1);
  \filldraw[fill=blue] (0,0) circle (.1);
  \filldraw (2,0) circle (.1);
  \filldraw (1,1) circle (.1);
  
  \node at (-2,-0.3) {$s$};
  \node at (0,-0.3) {$x$};
  \node at (1,1.25) {$y$};
  \node at (2,-0.3) {$t$};

  \node at (-1,0.2) {$4$};
  \node at (1.65,0.6) {$2$};
  \node at (1,0.2) {$2$};
  \node at (0.25,0.6) {$4$};

  \node at (0,-1) {$\palt_{u,v}$ for each $(u,v) \in \E$};
  \end{tikzpicture}};  

  \node at (12,0) {\begin{tikzpicture}
  \draw[->] (0.0,0)--(-1.9,0);
  \draw[->] (2,0)--(0.1,0);
  \draw[->] (2,0)--(1.08,.92);
  \draw[->] (1,1)--(0.07,0.07);
  
  \filldraw (-2,0) circle (.1);
  \filldraw[fill=blue] (0,0) circle (.1);
  \filldraw (2,0) circle (.1);
  \filldraw (1,1) circle (.1);
  
  \node at (-2,-0.3) {$s$};
  \node at (0,-0.3) {$x$};
  \node at (1,1.25) {$y$};
  \node at (2,-0.3) {$t$};

  \node at (-1,0.2) {$3$};
  \node at (1.65,0.6) {$0$};
  \node at (1,0.2) {$0$};
  \node at (0.25,0.6) {$2$};
  
  \node at (0,-1) {$\palt_{u,v}$ for each $(v,u) \in \E$};
  \end{tikzpicture}}; 
  
\end{tikzpicture}
\caption{Graph $G$ where the blue vertex $x$ has an additional alternative neighbourhood. The \st alternative electrical flow $\falt$ with respect to this extra alternative neighbourhood is displayed, as well as the corresponding potential vector $\palt$. }\label{fig:alternative}
\end{figure}

We now consider the case where only the vertex $x \in V$ contains an additional alternative neighbourhood: let $\Psi_{\star}(x) = \{\ket{\psi_x},\ket{\psi_x^{\alt}}\}$ where
$$ \ket{\psi_x^{\alt}} = \sqrt{\frac{2}{3}} (\frac{1}{2}\ket{x,s} -\ket{x,y} + \frac{1}{2}\ket{x,t}),$$ 
visualised in \fig{alternative}. Alternative Kirchhoff's Law states that the flow state $\ket{\falt}$ of any unit \st flow $\falt$ must additionally be orthogonal to $\ket{\psi_x^{\textrm{alt}}}$. Together with being orthogonal to all the star states, meaning that the flow $\falt$ is conserved at the vertices $x$ and $y$, this leaves us with only a single option for $\falt$. This flow is visualised in \fig{alternative} and the corresponding flow vector is given by
\begin{align*}
  \ket{\falt} &= \frac{1}{\sqrt{2{\cal R}_{s,t}^{\alt}}}\sum_{(u,v) \in \E}\frac{\f_{u,v}}{\sqrt{\w_{u,v}}}\left(\ket{u,v} + \ket{v,u}\right) \\
  &= \frac{1}{\sqrt{8}}\left(\frac{1}{1}\left(\ket{s,x} + \ket{x,s}\right) + \frac{1/2}{1/2}\left(\ket{x,y} + \ket{y,x}\right) + \frac{1/2}{1/2}\left(\ket{x,t} + \ket{y,t}\right) + \frac{1/2}{1/2}\left(\ket{y,t} + \ket{t,y}\right)\right) \\
  &= \frac{1}{\sqrt{8}}\left(\ket{s,x} + \ket{x,s} + \ket{x,y} + \ket{y,x} + \ket{x,t} + \ket{t,x} + \ket{y,t} + \ket{t,y}\right)
\end{align*}

Since this $\falt$ is the only unit \st flow satisfying Alternative Kirchhoff's Law, it is by default the \st alternative electrical flow. For its alternative potential vector $\palt$, we construct $\ket{\palt}$ from the bottom up by creating the states from \eq{pot-reduced}:
\begin{align*}
  &\ket{\palt_{|s}} = 4 \ket{s,x}, &&\ket{\palt_{|x}} = - 3\ket{x,s} + 4\sqrt{\frac{1}{4}}\ket{x,y} + 2 \sqrt{\frac{1}{4}} \ket{x,t},\\
  &\ket{\palt_{|y}} = - 2\sqrt{\frac{1}{4}} \ket{y,x} + 2\sqrt{\frac{1}{4}} \ket{y,t}, &&\ket{\palt_{|t}} = - 0 \sqrt{\frac{1}{4}}\ket{t,x} - 0 \sqrt{\frac{1}{4}}\ket{t,y}.
\end{align*}
Note that constructing $\ket{\palt_{|s}}$ was not necessary, but we have added it for completeness. Each such $\ket{\palt_{|u}}$ for $u \in V \setminus \{s\}$ lies in $\mathrm{span}\{\Psi_\star(u)\}$ respectively. The alternative potential $\palt$ (see \fig{alternative} for all the edge potentials) satisfies $\palt_{s,x} = {\cal R}_{s,t}^{\alt} = 4$ and $\palt_{t,x} = \palt_{t,y} = 0$, as well as Alternative Ohm's Law, meaning that each $(u,v) \in E)$ satisfies $\palt_{u,v} - \palt_{v,u} = \falt_{u,v}/\w_{u,v}$. We have therefore found the alternative potential vector $\palt$ whose associated state $\ket{\palt}$ satisfies $\Pi_{\Balt}\ket{\palt} = \ket{\palt}$:
\begin{align*}
  \ket{\palt} &=\sqrt{\frac{2}{{\cal R}(\falt)}} \sum_{u \in V \setminus \{s\}}\sum_{v \in \N(u)}(-1)^{\Delta_{u,v}} \palt_{u,v} \sqrt{\w_{u,v}} \ket{u,v} \\
  &= -3\ket{x,s} + 2\ket{x,y}+ \ket{x,t} - \ket{y,x} + \ket{y,t} \\
  &= -\sqrt{\frac{3}{2}}\left(\frac{8}{3}\ket{\psi_x} + \frac{2}{3}\ket{\psi_x^{\alt}}\right) + \sqrt{2}\ket{\psi_y} + 0\ket{\psi_t}.
\end{align*}

As mentioned in \sec{kal}, depending on the alternative neighbourhoods in $\Psi_{\star}$, the \st alternative electrical flow might not exist, which is in contrast with regular electrical networks. As such a counterexample, we modify $G$ once more, this time removing the edge $(y,t)$ from $\E$. It is clear that any unit \st flow $\falt$ must satisfy $\falt_{s,x} = \falt_{x,t} = 1$ and $\falt_{x,y} = 0$, but in doing so, it will not satisfy Alternative Kirchhoff's Law, as the associated state $\ket{\falt}$ is not orthogonal to $\ket{\psi_x^{\alt}}$:
$$ \brakett{\psi_x^{\alt}}{\falt} = \frac{1}{\sqrt{2{\cal R}(\falt)}}\sqrt{\frac{2}{3}} = \sqrt{\frac{1}{6}}.$$
 
 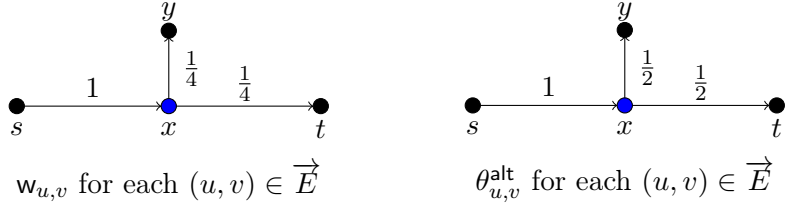
\begin{figure}
\centering
\begin{tikzpicture}
\node at (0,0) {\begin{tikzpicture}
  \draw[->] (-2,0)--(-0.1,0);
  \draw[->] (0,0)--(1.9,0);
  \draw[->] (0,0)--(0,.92);
  
  \filldraw (-2,0) circle (.1);
  \filldraw[fill=blue] (0,0) circle (.1);
  \filldraw (2,0) circle (.1);
  \filldraw (0,1) circle (.1);
  
  \node at (-2,-0.3) {$s$};
  \node at (0,-0.3) {$x$};
  \node at (0,1.25) {$y$};
  \node at (2,-0.3) {$t$};

  \node at (-1,0.25) {$1$};
  \node at (1,0.3) {$\frac{1}{4}$};
  \node at (0.3,0.5) {$\frac{1}{4}$};
  \node at (0,-1) {$\w_{u,v}$ for each $(u,v) \in \E$};
  \end{tikzpicture}};
   \node at (6,0) { \begin{tikzpicture}
  \draw[->] (-2,0)--(-0.1,0);
  \draw[->] (0,0)--(1.9,0);
  \draw[->] (0,0)--(0,.92);
  
  \filldraw (-2,0) circle (.1);
  \filldraw[fill=blue] (0,0) circle (.1);
  \filldraw (2,0) circle (.1);
  \filldraw (0,1) circle (.1);
  
  \node at (-2,-0.3) {$s$};
  \node at (0,-0.3) {$x$};
  \node at (0,1.25) {$y$};
  \node at (2,-0.3) {$t$};

  \node at (-1,0.25) {$1$};
  \node at (1,0.3) {$\frac{1}{2}$};
  \node at (0.3,0.5) {$\frac{1}{2}$};
  \node at (0,-1) {$\falt_{u,v}$ for each $(u,v) \in \E$};
  \end{tikzpicture}};
  
\end{tikzpicture}
\caption{Graph $G$ where the blue vertex $x$ has an additional alternative neighbourhood $\ket{\psi_x^{\textrm{alt}}}$. There is no unit flow from $s$ to $t$ satisfying Alternative Kirchhoff's Law possible in this graph.}\label{fig:counterexam}
\end{figure}

\renewcommand*{\arraystretch}{1.5}

\subsection{The alternative incidence matrix, Alternative Kirchhoff's Law and Alternative Ohm's Law} \label{sec:altincid}

In this section, inspired by the connection between the electrical network $G=(V,E,\w)$ and the incidence matrix $B$ of $G$ from \sec{inc}, we rebuild the connection between the multidimensional electrical network and its alternative incidence matrix $B_{\alt}$. We then use this connection to prove the uniqueness of the \st alternative flow $\falt$ and the existence of the alternative potential $\palt$ that satisfy Alternative Ohm's Law. 

What is it that makes the \st electrical flow $\f$ special, making it satisfy Ohm's Law. Why is Ohm's Law not necessarily true for our \st alternative flow? Even though all flow states live in the symmetric subspace ${\cal A}^{\perp}$ by construction, we saw in \eq{flow-projA} that the flow state $\ket{\f}$ of the \st electrical flow $\f$ can be written as 
$$ \ket{\f} = (I-\Pi_{\cal A})\sqrt{\frac{2}{{\cal R}_{s,t}}} \sum_{u \in V} \p_u\sqrt{\du_u}\ket{\psi_u},$$
meaning that $\ket{\f}$ in fact lives in the \textit{the symmetric star space} of ${\cal H}$, which is contained in ${\cal A}^{\perp}$:
\begin{align}\label{eq:star-space}
  H^{+\star} := \mathrm{span}\{(I - \Pi_{\cal A})\ket{\psi_u}: u \in V\} = \mathrm{span}\{\ket{\psi_u^+}: u \in V\}.
\end{align}

Out of all \st flows, the \st electrical flow is the unique unit flows such that $\ket{\f}$ is the only corresponding flow state that is an element of $H^{+\star}$ (see e.g. \cite{MR3616205}). We will not give a formal proof of this statement, but the intuition is that any other \st flow has a higher energy, i.e. higher norm, which is due to containing a component that is orthogonal to all of $H^{+\star}$, namely a circulation. The column space of the incidence matrix $B$ is in fact isomorphic to $H^{+\star}$, where the column of $B$ indexed by $u \in V$ represents $\sqrt{\du_u}\ket{\psi^+_u}$ through the isometry
\begin{equation}\label{eq:isometry}
  {\cal V}: \mathbb{C}^{|\E|} \mapsto {\cal A}^{\perp}, \text{ where } {\cal V}\left(u,v\right) = \sqrt{2}(I - \Pi_{\cal A})\ket{u,v} = \frac{1}{\sqrt{2}}\left(\ket{u,v} + \ket{v,u}\right). 
\end{equation}

Through the addition of alternative neighbourhoods (see \defin{alternative}), the space $H_G^{+\star}$ is effectively enlarged. Define
\begin{equation}
  V^{\alt} := \{(u,i) \in V \times \mathbb{N}: i \in \{0,1,\cdots,a_u-1\}\}.
\end{equation}
Instead of only considering the span of all $\ket{\psi_u^+}$ for $u \in V$, we now consider the span of all alternative neighbourhoods projected onto the symmetric subspace, meaning $\ket{\psi_{u,i}^+} := \sqrt{2}(I - \Pi_{\cal A})\ket{\psi_{u,i}}$ for $(u,i) \in V^{\alt}$:
\begin{align}\label{eq:alt-space}
  H^{+\alt} := \mathrm{span}\{\ket{\psi_{u,i}^+}: u \in V, i\in \{0,1,\cdots,a_u-1\}\}.
\end{align}

By modifying the incidence matrix $B$ to ensure that its column space still represents the newly modified $H_G^{+\alt}$, we obtain the alternative incidence matrix $B_{\alt}$.
\begin{definition}[Alternative incidence matrix]\label{def:incidence-alt}
  Let $G$ be a network and let $\Psi_{\star}$ be a collection of alternative neighbourhoods. Let $\{\ket{{\psi}_{u,0}},\dots,\ket{{\psi}_{u,a_u-1}}\}$ be an orthonormal basis for each $\Psi_{\star}(u) \in \Psi_{\star}$. The \textit{alternative incidence matrix} $B_{\alt} \in \mathbb{C}^{\E \times V^{\alt}}$ of $G$ is the matrix whose rows range over $(u,v) \in \E$, whose columns range over $(u,i)\in V^{\alt}$ and whose only non-zero entries are of the form
  \begin{align*}
    &{B_{\alt}}_{(u,v),(u,i)} = \sqrt{\du_u}\brakett{u,v}{\psi_{u,i}}, &{B_{\alt}}_{(u,v),(v,j)} = \sqrt{\du_v}\brakett{u,v}{\psi_{v,j}}.
  \end{align*}
\end{definition}

By \defin{alternative} we may assume that each $\ket{{\psi}_{u,i}}$ only has real coefficients and that $\ket{{\psi}_{u,0}} = \ket{\psi_u}$. By substituting $B$ with $B_{\alt}$ in both \eq{matrix-knl} and \eq{matrix-ohm}, we can recover both Alternative Kirchhoff's Law and Alternative Ohm's Law, showing that these are indeed their natural definitions with respect to perspective of the incidence matrix. Fix some ordering of the columns of $B$ of the form $s,(u_1,i_1),\dots,(u_2,i_2),t$ for some $u_1,u_2 \in V \backslash \{s,t\}$ such that $(u_1,i_1),(u_2,i_2) \in V^{\alt}$.

\begin{definition}[Alternative Kirchhoff's Law (incidence matrix)]\label{defin:knl-inc-alt} 
Let $\falt$ be any alternative unit \st flow on an electrical network $G = (V,E,\w)$ with respect to a collection of alternative neighbourhoods $\Psi_{\star}$. Let $B_{\alt}$ be the alternative incidence matrix of $G$. Then $\falt$ satisfies
  \begin{equation}\label{eq:matrix-knl-alt}
  B_{\alt}^{T} W\falt = \begin{bmatrix}
   \sum_{v \in \Gamma(s)} \falt_{s,v} \\ \sum_{v \in \Gamma(u_1)}\frac{\falt_{u_1,v}}{\sqrt{\w_{u_1,v}}}\sqrt{\du_{u_1}}\brakett{u_1,v}{\psi_{u_1,i_1}}\\ \vdots \\\sum_{v \in \Gamma(u_2)}\frac{\falt_{u_2,v}}{\sqrt{\w_{u_2,v}}}\sqrt{\du_{u_2}}\brakett{u_2,v}{\psi_{u_2,i_2}} \\ \sum_{v \in \Gamma(t)} \falt_{t,v} \end{bmatrix} = 
   \begin{bmatrix}
  1 \\ 0 \\ \vdots \\ 0 \\ -1 \end{bmatrix} = {\sf e}_s - {\sf e}_t.
  \end{equation}
\end{definition}

Recall from \defin{flow-alt} that the \st alternative electrical flow is the flow that minimises ${\cal E}(\falt)$ for all alternative unit \st flows $\falt$ (if any such flow exists). By applying the Moore-Penrose inverse of $B_{\alt}^T$ to \eq{matrix-knl-alt}, we prove that the \st electrical flow is unique and thus well defined:

\begin{theorem}\label{thm:elecflow-alt}
  Let $\falt$ be the \st alternative electrical flow on a network $G = (V,E,\w)$. Let $B_{\alt}$ be the alternative incidence matrix of $G$. Then $W\falt$ is given by 
  \begin{equation}\label{eq:transpose-alt}
    W\falt = B_{\alt}^{T+}({\sf e}_s - {\sf e}_t).
  \end{equation}
\end{theorem}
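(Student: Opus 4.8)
The plan is to mirror the proof of \thm{elecflow} almost verbatim, replacing the incidence matrix $B$ by the alternative incidence matrix $B_{\alt}$ and ordinary Kirchhoff's Law by its alternative counterpart. The one genuinely new ingredient is checking that the linear system $B_{\alt}^T W\falt = {\sf e}_s - {\sf e}_t$ exactly characterises the alternative unit \st flows; once that is in place, the argument is purely linear-algebraic.

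First I would record the correspondence between flows and the solution set of the system. Since $W$ is an invertible diagonal matrix, $\falt \mapsto y := W\falt$ is a bijection from flows on $G$ (viewed as vectors in $\mathbb{C}^{\E}$, the antisymmetric extension being automatic) onto $\mathbb{C}^{\E}$, and under it ${\cal E}(\falt) = \norm{W\falt}^2 = \norm{y}^2$. By \cref{defin:knl-inc-alt}, the row of $B_{\alt}^T$ indexed by $(s,0)$ (resp. $(t,0)$) evaluates to $\sum_{v\in\Gamma(s)}\falt_{s,v}$ (resp. $\sum_{v\in\Gamma(t)}\falt_{t,v}$), while each row $(u,i)$ with $u\in V\setminus\{s,t\}$ evaluates to a nonzero multiple of $\langle\psi_{u,i}|\falt\rangle$. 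Here I would stress that the rows $(u,0)$ recover ordinary flow conservation (since $\ket{\psi_{u,0}}=\ket{\psi_u}$) and the rows $(u,i)$ with $i\ge 1$ impose the extra orthogonality to $\mathrm{span}(\Psi_\star(u))$, while the standing assumption that $s,t$ carry no additional alternative neighbourhoods means they contribute only the unit in/out conditions. Hence $\falt$ is an alternative unit \st flow if and only if $y=W\falt$ solves $B_{\alt}^T y = {\sf e}_s - {\sf e}_t$.

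Next I would reduce the energy minimisation to a minimum-norm problem. The statement presupposes that the alternative electrical flow exists, so at least one alternative unit \st flow exists, and therefore the system is consistent, i.e. ${\sf e}_s - {\sf e}_t \in \textnormal{ran}(B_{\alt}^T)$. Its solution set is an affine subspace whose elements differ by vectors in $\textnormal{ker}(B_{\alt}^T)$, so minimising ${\cal E}(\falt)=\norm{y}^2$ over this set selects the unique solution lying in $\textnormal{ker}(B_{\alt}^T)^\perp$. Invoking the same property of the Moore--Penrose inverse used after \eq{property} --- that $A^+$ maps $\textnormal{ran}(A)$ into $\textnormal{ker}(A)^\perp$ and that $A^+b$ is the minimum-norm solution of a consistent system $Ax=b$ --- with $A=B_{\alt}^T$ and $b={\sf e}_s-{\sf e}_t$ yields $W\falt = B_{\alt}^{T+}({\sf e}_s-{\sf e}_t)$. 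Since the minimum-norm solution is unique, this simultaneously shows that the \st alternative electrical flow of \defin{flow-alt} is well defined.

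The main obstacle is not the pseudoinverse manipulation, which is identical to \thm{elecflow}, but the consistency hypothesis hidden in the phrase ``let $\falt$ be the \st alternative electrical flow''. The pseudoinverse always returns $B_{\alt}^{T+}({\sf e}_s-{\sf e}_t)$, but this equals $W\falt$ only when the system is solvable; if no alternative unit \st flow exists, then ${\sf e}_s-{\sf e}_t$ has a nonzero component orthogonal to $\textnormal{ran}(B_{\alt}^T)$ and the returned vector is merely a least-squares surrogate that is not a unit \st flow at all --- exactly the degenerate situation exhibited in \fig{counterexam}. I would therefore be careful to state the result under the standing existence assumption, and to verify the one structural point where the alternative case could have differed from the classical one: that every solution $y$ of $B_{\alt}^T y = {\sf e}_s - {\sf e}_t$ arises from an honest flow (automatic, since any $y\in\mathbb{C}^{\E}$ defines one), so that the solution set is genuinely the set of weighted alternative unit \st flows and nothing larger.
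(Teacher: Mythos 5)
Your proposal is correct and takes essentially the same route as the paper: recast Alternative Kirchhoff's Law as the linear system $B_{\alt}^{T} W\falt = {\sf e}_s - {\sf e}_t$ from \cref{defin:knl-inc-alt}, identify the \st alternative electrical flow as the minimum-norm (minimum-energy) solution lying in $\textnormal{ker}(B_{\alt}^T)^{\perp}$, and apply the Moore--Penrose inverse exactly as in \thm{elecflow}. The points you elaborate --- the row-by-row check that the system precisely characterises alternative unit \st flows, the consistency hypothesis hidden in assuming $\falt$ exists (cf.\ \fig{counterexam}), and uniqueness giving well-definedness --- are left implicit in the paper but agree with its argument.
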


Recall the isometry ${\cal V}$ defined in \eq{isometry}. The column of $B_{\alt}$ indexed by $(u,i) \in V^{\alt}$ is equal to ${\cal V}^{-1}\left(\sqrt{\du_u}\ket{\psi_{u,i}^+}\right)$, meaning that the column space of $B_{\alt}$ is equal to ${\cal V}^{-1}\left(H^{+\alt}\right)$. Moreover, the column space of $B_{\alt}$ is equal to the column space of $B_{\alt}^{T+}$, due to the properties of the Moore-Penrose inverse in \eq{property}. Combined with the fact that the state $\ket{\falt}$ is related to the vector $W\falt$ via the equality $\sqrt{{\cal R}_{s,t}^{\alt}}\ket{\falt} = {\cal V}\left(W\falt\right)$, we find that $\ket{\falt}$ is an element of $H^{+\alt}$. This means that there exist coefficients $\palt_{(u,i)}$ such that 
\begin{equation}\label{eq:flow-alt-projA}
\begin{split}
  \ket{\falt} = \frac{1}{\sqrt{{{\cal R}^{\alt}_{s,t}}}} \sum_{u \in V} \sum_{i=0}^{a_u-1}\palt_{(u,i)}\sqrt{\du_u}\ket{\psi_{u,i}^+}.
\end{split}
\end{equation}

The notation $\palt_{(u,i)}$ seems to hint that these coefficients are related to the alternative potential vector $\palt$. This is indeed the case: by defining the potential vector $\palt$ as
\begin{equation}\label{eq:pot-edge}
  \palt_{u,v} := \frac{(-1)^{\Delta_{u,v}}}{\sqrt{\w_{u,v}}}\sum_{i=0}^{a_u-1}\palt_{(u,i)}\sqrt{\du_u}\brakett{u,v}{\psi_{u,i}},
\end{equation}
we guarantee that the state $\ket{\palt}$ satisfies $\Pi_{\Balt}\ket{\palt}$:
\begin{equation*}
\begin{split}
  \ket{\palt} &=\sqrt{\frac{2}{{\cal R}(\falt)}} \sum_{u \in V \setminus \{s\}}\sum_{v \in \N(u)}(-1)^{\Delta_{u,v}} \palt_{u,v} \sqrt{\w_{u,v}} \ket{u,v} \\
  &= \sqrt{\frac{2}{{\cal R}^{\alt}_{s,t}}} \sum_{u \in V \setminus \{s\}}\sum_{v\in \N(u)} \sum_{i=0}^{a_u-1}\palt_{(u,i)}\sqrt{\du_u}\brakett{u,v}{\psi_{u,i}}\ket{u,v} \\
  &= \sqrt{\frac{2}{{\cal R}^{\alt}_{s,t}}} \sum_{u \in V \setminus \{s\}} \sum_{i=0}^{a_u-1}\palt_{(u,i)}\sqrt{\du_u}\ket{\psi_{u,i}}.
\end{split}
\end{equation*}

Due to the coefficients $\palt_{(u,i)}$, we can therefore consider the alternative potential vector $\palt$ as a vector in $\mathbb{C}^{V^{\alt}}$ with entries $\palt_{(u,i)}$ for the row indexed by $(u,i) \in V^{\alt}$. By substituting $B$ with $B_{\alt}$ in \eq{matrix-ohm} and combining this with \eq{pot-edge}, we recover Alternative Ohm's Law:

\begin{definition}[Alternative Ohm's Law (incidence matrix)]\label{defin:ohm-inc-alt} 
Let $\falt$ be any alternative unit \st flow on an electrical network $G = (V,E,\w)$ with respect to a collection of alternative neighbourhoods $\Psi_{\star}$. Let $B_{\alt}$ be the alternative incidence matrix of $G$. Then there exists an alternative potential vector $\palt$ such that $\Pi_{\Balt}\ket{\palt} = \ket{\palt}$ and
\begin{equation}\label{eq:matrix-ohm-alt}
  B_{\alt}\palt = \begin{bmatrix}
   \sqrt{\w_{u_1,v_1}}\left(\palt_{u_1,v_1}-\palt_{v_1,u_1}\right)\\ \vdots \\ \sqrt{\w_{u_2,v_2}}\left(\palt_{u_2,v_2}-\palt_{v_2,u_2}\right)
  \end{bmatrix} = \begin{bmatrix}
   \frac{\f_{u_1,v_1}}{\sqrt{\w_{u_1,v_1}}}\\ \vdots \\ \frac{\f_{u_2,v_2}}{\sqrt{\w_{u_2,v_2}}}
  \end{bmatrix}=W\falt.
\end{equation}
\end{definition}

Just like with the potential vector $\p$, we may assume that the alternative potential vector $\palt$ satisfying Alternative Ohm's Law also satisfies $\palt_s = {\cal R}_{s,t}^{\alt}$ and $\palt_t = 0$

\begin{theorem}\label{thm:potential-alt}
   Let $\falt$ be the \st alternative electrical flow on an electrical network $G = (V,E,\w)$ with respect to a collection of alternative neighbourhoods $\Psi_{\star}$. Let $B_{\alt}$ be the alternative incidence matrix of $G$. Then there exists a unique alternative potential vector $\palt$ satisfying Alternative Ohm's Law such that $\palt_s = {\cal R}_{s,t}^{\alt}$ and $\palt_t = 0$. 
\end{theorem}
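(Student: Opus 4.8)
The plan is to mirror the proof of \lem{potential}, replacing the incidence matrix $B$ by the alternative incidence matrix $B_{\alt}$ and the weighted Laplacian $B^TB$ by the \emph{alternative Laplacian} $B_{\alt}^TB_{\alt}$. The key is to regard $\palt$ as the vector in $\mathbb{C}^{V^{\alt}}$ with entries $\palt_{(u,i)}$ coming from the decomposition \eq{flow-alt-projA}, rather than as the edge function of \defin{ohm-inc-alt}: the correspondence \eq{pot-edge} between the two is linear and injective once $\Pi_{\Balt}\ket{\palt}=\ket{\palt}$ is imposed, so uniqueness of the coefficient vector is equivalent to uniqueness of the edge potential. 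Existence is then essentially already in hand: \thm{elecflow-alt} gives $W\falt = B_{\alt}^{T+}({\sf e}_s-{\sf e}_t)$, and since the Moore--Penrose inverse of $B_{\alt}^T$ maps into $\textnormal{ker}(B_{\alt}^T)^{\perp} = \textnormal{ran}(B_{\alt})$, the vector $W\falt$ lies in $\textnormal{ran}(B_{\alt})$. Hence some $\palt$ with $B_{\alt}\palt = W\falt$ exists, and by construction (via the isometry ${\cal V}$ of \eq{isometry} applied to \eq{flow-alt-projA}) its associated state satisfies $\Pi_{\Balt}\ket{\palt}=\ket{\palt}$, i.e. it obeys Alternative Ohm's Law.

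Next I would establish the boundary value $\palt_s = {\cal R}_{s,t}^{\alt}$, following the computation \eq{potential-s}. Left-multiplying the identity $W\falt = B_{\alt}^{T+}({\sf e}_s-{\sf e}_t)$ by $(W\falt)^T$ and substituting $W\falt = B_{\alt}\palt$ gives
\[ {\cal R}_{s,t}^{\alt} = \norm{W\falt}^2 = (B_{\alt}\palt)^T B_{\alt}^{T+}({\sf e}_s-{\sf e}_t) = \palt^T\, B_{\alt}^T B_{\alt}^{T+}({\sf e}_s - {\sf e}_t) = \palt^T\, \Pi_{\textnormal{ran}(B_{\alt}^T)}({\sf e}_s - {\sf e}_t). \]
The crucial input is that ${\sf e}_s - {\sf e}_t \in \textnormal{ran}(B_{\alt}^T)$, so that the projector acts as the identity; this is exactly Alternative Kirchhoff's Law in the form $B_{\alt}^T W\falt = {\sf e}_s - {\sf e}_t$ from \defin{knl-inc-alt}. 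The right-hand side therefore collapses to $\palt^T({\sf e}_s - {\sf e}_t) = \palt_s - \palt_t$, and imposing $\palt_t = 0$ yields $\palt_s = {\cal R}_{s,t}^{\alt}$. The same membership ${\sf e}_s - {\sf e}_t \in \textnormal{ran}(B_{\alt}^T) = \textnormal{ker}(B_{\alt})^{\perp}$ shows that every kernel vector has equal $s$- and $t$-entries, so fixing $\palt_t=0$ is consistent with (and in fact forces) $\palt_s = {\cal R}_{s,t}^{\alt}$ for every solution.

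The hard part will be uniqueness. In the regular case this was immediate because $\textnormal{ker}(B)=\textnormal{ker}(B^TB)$ is the one-dimensional all-ones space, so deleting the $t$-column of $B$ (equivalently, fixing $\p_t=0$) yields a matrix of full column rank that pins down $\p$. For $B_{\alt}$ I would study $\textnormal{ker}(B_{\alt})=\textnormal{ker}(B_{\alt}^TB_{\alt})$ via ${\cal V}$: a vector lies in the kernel precisely when $\sum_{(u,i)}\palt_{(u,i)}\sqrt{\du_u}\ket{\psi_{u,i}^{+}}=0$, and unwinding this through \eq{pot-edge} shows the kernel consists exactly of the alternative potentials whose edge function is \emph{symmetric}, $\palt_{u,v}=\palt_{v,u}$ for every edge (equivalently, those carrying zero flow under Alternative Ohm's Law). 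The delicate point, and where the alternative setting departs from the classical one, is that this kernel need not be one-dimensional: the relation $\sum_{u}\sqrt{\du_u}\ket{\psi_u^{+}}=0$ among the genuine star states always holds, but the extra alternative neighbours can in principle introduce further relations. I would therefore argue uniqueness by passing to the reduced matrix $\overline{B_{\alt}}$ obtained by deleting the $(t,0)$-column and setting $\palt_t=0$, reducing the claim to the statement that $\overline{B_{\alt}}$ has trivial kernel on the relevant solution space, i.e. that the only symmetric alternative potential compatible with $\palt_s = {\cal R}_{s,t}^{\alt}$ and $\palt_t=0$ is the trivial one. Verifying this kernel condition is the step I expect to require the most care, and I would sanity-check it against the worked examples of \sec{examples}, where the additional Fourier neighbour at $x$ collapses the kernel back to a single dimension and the argument goes through verbatim.
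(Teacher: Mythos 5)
Your proposal follows the same route as the paper's own proof: exhibit the kernel vector $\sum_{u\in V}e_{u,0}$ of $B_{\alt}$, fix $\palt_t=0$ by deleting the $t$-column (the passage to $\overline{B_{\alt}}$), get existence from the Moore--Penrose inverse, and obtain $\palt_s={\cal R}_{s,t}^{\alt}$ by left-multiplying \eq{transpose-alt} with $(W\falt)^T$. Your treatment of existence and of the boundary value is correct, and on one point more careful than the paper's \eq{potential-s-alt}: that chain of equalities silently uses that $B_{\alt}^{T}B_{\alt}^{T+}$ is the orthogonal projector onto $\textnormal{ran}(B_{\alt}^{T})$ and that ${\sf e}_s-{\sf e}_t$ lies in that range (which is exactly Alternative Kirchhoff's Law, \defin{knl-inc-alt}); you spell both out. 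One small omission on the existence side: an arbitrary solution of $B_{\alt}\palt=W\falt$ need not have vanishing $t$-entry, so the condition $\Pi_{\Balt}\ket{\palt}=\ket{\palt}$ is not automatic ``by construction''; you must first subtract a multiple of $\sum_{u\in V}e_{u,0}$, which is precisely what the paper's column-deletion trick accomplishes. You have this ingredient in hand, it just needs to be invoked at that point.

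On uniqueness, you have not missed anything that the paper supplies: the paper supplies nothing there either. Its proof only observes that $\sum_{u\in V}e_{u,0}$ is a $0$-eigenvector of $B_{\alt}$ and then ``applies the same trick'' as in \lem{potential}; the one-dimensionality of $\textnormal{ker}(B_{\alt})$, which classically follows from connectivity of $G$, is never established in the alternative setting. Moreover, your suspicion that this can fail is correct, so the step cannot be closed by a routine argument. Concretely, attach a $4$-cycle $a$--$b$--$c$--$d$--$a$ to the path $s$--$a$--$t$, all weights $1$, with directed edges $(s,a),(a,t),(a,b),(b,c),(c,d),(d,a)$, and give $a$ the single additional alternative neighbour $\ket{\psi_{a,1}}=\frac{1}{2}\left(\ket{a,s}-\ket{a,t}+\ket{a,b}-\ket{a,d}\right)$, a legitimate choice under \defin{alternative} since it is a unit vector orthogonal to $\ket{\psi_a}=\frac{1}{2}\left(-\ket{a,s}+\ket{a,t}+\ket{a,b}-\ket{a,d}\right)$. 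The path flow $s\to a\to t$ satisfies Alternative Kirchhoff's Law, so $\falt$ exists and is the unique energy minimiser. But the vector in $\mathbb{C}^{V^{\alt}}$ with entries $\frac{1}{2}$ at $(a,0)$ and $(a,1)$, entries $1$ at $(b,0),(c,0),(d,0)$, and $0$ at $s$ and $t$ lies in $\textnormal{ker}(B_{\alt})$: the corresponding combination $\sum_u\ket{\phi_u}$ equals the antisymmetric cycle circulation, which is annihilated by $I-\Pi_{\cal A}$. Adding this vector to any solution of \eq{matrix-ohm-alt} preserves $B_{\alt}\palt=W\falt$, $\palt_t=0$, $\palt_s={\cal R}_{s,t}^{\alt}$, and $\Pi_{\Balt}\ket{\palt}=\ket{\palt}$, yet changes the edge potentials on the cycle, so uniqueness genuinely fails there. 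Thus the theorem's uniqueness claim requires an additional hypothesis on $\Psi_\star$ (one which does hold in the paper's applications, and which your planned check against the worked examples would verify case by case); identifying this is a real finding of your write-up rather than a defect of it.
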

\begin{proof}
  Recall from \lem{potential} that $B$ has $\sum_{u \in V}e_u$ as an $0$-eigenvector. It is therefore straightforward to see that $\sum_{u \in V}e_{u,0}$ is a $0$-eigenvector of $B_{\alt}$. This allows us to apply the same trick as in \eq{matrix-pot}, meaning we remove the last column of $B_{\alt}$ and last row of $\palt$ to obtain $\overline{B_{\alt}}$ and $\overline{\palt}$, forcing $\palt_t = 0$ for the solution satisfying \eq{matrix-ohm-alt}:
  \begin{equation}\label{eq:matrix-pot-alt}
    \palt = \begin{bmatrix} \overline{\palt} \\ 0 \end{bmatrix} = \begin{bmatrix} \overline{B_{\alt}}^{+}W\falt \\ 0  \end{bmatrix}.
  \end{equation}

  \noindent By left-multiplying both sides of \eq{transpose-alt} with $(W{\falt})^T$ we obtain together with \eq{matrix-ohm-alt} that
  \begin{equation}\label{eq:potential-s-alt}
    {\cal R}_{s,t}^{\alt} = \norm{W\falt}^2 = {(W\falt)}^T B^{T+}({\sf e}_s - {\sf e}_t) = {\palt}^T({\sf e}_s - {\sf e}_t) = \palt_s - \palt_t = \palt_s.
  \end{equation}  
\end{proof}

Due to \thm{potential-alt}, we may now apply \lem{qwflows} with the parameters $\ket{\psi} = \ket{\psi_s^+}$, $\ket{\varphi} = \ket{\falt}$, $\ket{\phi} = -\frac{1}{\sqrt{{\cal R}_{s,t}^{\alt} \du_s}}\ket{\palt}$ and $p = \frac{1}{{\cal R}_{s,t}^{\alt} \du_s}$, proving the following generalisation of \cor{qwflows}:
\begin{theorem}\label{thm:qwflows}
  Let $\Psi_{\star}$ be a collection of alternative neighbourhoods on a network $G = (V,E,\w)$ and let $U_{\A\Balt}$ be the quantum walk operator with respect to $\Psi_{\star}$ as defined in \eq{walk-alt}. Then by performing phase estimation on the initial state $\ket{\psi_s^+}$ with the operator $U_{\A\Balt}$ and precision $O\left(\frac{\epsilon^2}{\sqrt{{\cal R}_{s,t}^{\alt} \du_s}\norm{\ket{\palt}}}\right)$, the phase estimation algorithm outputs ``$0$'' with probability $\Theta\left(\frac{1}{{\cal R}_{s,t}^{\alt} \du_s}\right)$, leaving a state $\ket{\f'}$ satisfying
  $$ \frac{1}{2}\norm{\proj{\f'} - \proj{\falt}}_1 \leq \epsilon. $$
\end{theorem}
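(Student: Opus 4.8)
The plan is to verify that $\falt$ and $\palt$ satisfy exactly the three structural hypotheses of \lem{qwflows}, and then to substitute the appropriate parameters, mirroring the proof of \cor{qwflows} with the alternative quantities replacing the standard ones.

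First I would check that $\ket{\falt}$ is a normalised $+1$-eigenvector of $U_{\A\Balt}$. Since $\ket{\falt}$ is a flow state it lies in the symmetric subspace ${\cal A}^{\perp}$ by construction, so $\Pi_{\cal A}\ket{\falt} = 0$. By Alternative Kirchhoff's Law (\defin{kcl-alt}), $\ket{\falt}$ is orthogonal to every state in $\Psi_{\star}(u)$ for each $u \in V \setminus \{s,t\}$, hence $\Pi_{\Balt}\ket{\falt} = 0$. Together these give $U_{\A\Balt}\ket{\falt} = (2\Pi_{\cal A}-I)(2\Pi_{\Balt}-I)\ket{\falt} = \ket{\falt}$, establishing the first hypothesis with $\ket{\varphi} = \ket{\falt}$.

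Next I would record the decomposition linking $\ket{\falt}$, the initial state $\ket{\psi_s^+}$, and the alternative potential state $\ket{\palt}$. This is precisely the chain of equalities computed just before \defin{ohm-alt}, which uses Alternative Ohm's Law $\palt_{u,v} - \palt_{v,u} = \falt_{u,v}/\w_{u,v}$ together with $\palt_{s,v} = {\cal R}^{\alt}_{s,t}$ for $v \in \N(s)$ to obtain
$$\ket{\falt} = (I - \Pi_{\cal A})\ket{\palt} + \sqrt{{\cal R}^{\alt}_{s,t}\du_s}\,\ket{\psi_s^+}.$$
Solving for $\ket{\psi_s^+}$ — analogously to \eq{rel-flow-pot} — puts it in the form $\ket{\psi} = \sqrt{p}\ket{\varphi} + (I-\Pi_{\cal A})\ket{\phi}$ required by \lem{qwflows}, with $\ket{\phi} = -\frac{1}{\sqrt{{\cal R}^{\alt}_{s,t}\du_s}}\ket{\palt}$ and $\sqrt{p} = 1/\sqrt{{\cal R}^{\alt}_{s,t}\du_s}$. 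For the third hypothesis, \thm{potential-alt} supplies the alternative potential vector $\palt$ with $\Pi_{\Balt}\ket{\palt} = \ket{\palt}$, so $\Pi_{\Balt}\ket{\phi} = \ket{\phi}$.

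With these three facts the final step is purely arithmetic: apply \lem{qwflows} with $\ket{\psi} = \ket{\psi_s^+}$, $\ket{\varphi} = \ket{\falt}$, $\ket{\phi} = -\frac{1}{\sqrt{{\cal R}^{\alt}_{s,t}\du_s}}\ket{\palt}$ and $p = \frac{1}{{\cal R}^{\alt}_{s,t}\du_s}$. Here $\norm{\ket{\phi}} = \norm{\ket{\palt}}/\sqrt{{\cal R}^{\alt}_{s,t}\du_s}$, so the precision requirement $\delta = O(p\epsilon^2/\norm{\ket{\phi}})$ collapses to $O(\epsilon^2/(\sqrt{{\cal R}^{\alt}_{s,t}\du_s}\,\norm{\ket{\palt}}))$, the success probability becomes $p' \in [p,\, p + O(\delta\norm{\ket{\phi}})] = \Theta(1/({\cal R}^{\alt}_{s,t}\du_s))$ for small $\epsilon$, and the trace-distance bound follows directly. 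I expect no serious obstacle, as all the heavy lifting — existence and uniqueness of $\falt$ and $\palt$, and the validity of Alternative Ohm's Law — is already done in \thm{elecflow-alt} and \thm{potential-alt}; the one point requiring care is tracking the factor $1/\sqrt{{\cal R}^{\alt}_{s,t}\du_s}$ through $\norm{\ket{\phi}}$ so that the precision and success probability emerge in the claimed clean form.
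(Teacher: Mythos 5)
Your proposal is correct and matches the paper's own argument essentially verbatim: the paper proves \thm{qwflows} by combining the eigenvector property from Alternative Kirchhoff's Law, the decomposition $\ket{\falt} = (I-\Pi_{\cal A})\ket{\palt} + \sqrt{{\cal R}_{s,t}^{\alt}\du_s}\ket{\psi_s^+}$ derived after \defin{ohm-alt}, and \thm{potential-alt}, then invokes \lem{qwflows} with exactly the parameters you chose, namely $\ket{\psi} = \ket{\psi_s^+}$, $\ket{\varphi} = \ket{\falt}$, $\ket{\phi} = -\frac{1}{\sqrt{{\cal R}_{s,t}^{\alt}\du_s}}\ket{\palt}$ and $p = \frac{1}{{\cal R}_{s,t}^{\alt}\du_s}$. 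Your final arithmetic translating $\norm{\ket{\phi}}$ into the stated precision and success probability is also exactly what the paper's parameter substitution yields.
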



\subsection{Examples}

We will now show how these results apply to the examples \fig{normal} and \fig{alternative} from section \sec{examples}, which we have restated here in \fig{normal-2} and \fig{alternative-2}. Consider the graph $G$, consisting of the vertex set $V = \{s,x,y,t\}$ and directed edge set $\E = \{(s,x),(x,y),(x,t),(y,t)\}$, where each edge $(u,v) \in \E$ has weight $\w_{u,v} = 1/4$, except for the edge $(s,x)$, which has weight $\w_{s,x}=1$. This graph is visualised in \fig{normal-2}. These directions and weight assignments give rise to the following star states for each of our $4$ vertices:
\begin{align*}
  &\ket{\psi_s} = \ket{s,x}, &&\ket{\psi_x} =\sqrt{\frac{2}{3}}\left(-\ket{x,s} + \frac{1}{2}\ket{x,y} + \frac{1}{2}\ket{x,t}\right),\\
  &\ket{\psi_y} =\sqrt{2}\left(-\frac{1}{2}\ket{y,x} + \frac{1}{2}\ket{y,t}\right), &&\ket{\psi_t} =\sqrt{2}\left(-\frac{1}{2}\ket{t,x} - \frac{1}{2}\ket{t,y}\right).
\end{align*}

By ordering the directed edges as $(s,x),(x,y),(x,t),(y,t)$ and the vertices as $s,x,y,t$, we have that the incidence matrix $B$ of $G$ and the Moore-Penrose inverse $B^{T+}$ of its transpose are equal to
\begin{align}
  &B = \begin{bmatrix}
   1 & -1 & 0 & 0\\ 0 & \frac{1}{2} & -\frac{1}{2} & 0 \\ 0 & \frac{1}{2} & 0 & -\frac{1}{2} \\ 0 & 0 & \frac{1}{2} & -\frac{1}{2}
  \end{bmatrix}, &B^{T+} = \begin{bmatrix}
   \frac{3}{4} & -\frac{1}{4} & -\frac{1}{4} & -\frac{1}{4}\\ \frac{1}{2} & \frac{1}{2} & -\frac{5}{6} & -\frac{1}{6} \\ \frac{1}{2} & \frac{1}{2} & -\frac{1}{6} & -\frac{5}{6} \\ 0 & 0 & \frac{2}{3} & -\frac{2}{3}
  \end{bmatrix}.
\end{align}

\noindent The weighted diagonal matrix $W$ is given by
\begin{align}
  &W = \begin{bmatrix}
   1 & 0 & 0 & 0\\ 0 & \frac{1}{2} & 0 & 0 \\ 0 & 0 & \frac{1}{2}& 0 \\ 0 & 0 & 0 & \frac{1}{2}
  \end{bmatrix} .
\end{align}
We can recover the electrical flow $\f$ in \fig{normal-2} using \thm{elecflow} to derive 
\begin{align*}
  &W\f = \begin{bmatrix}
  \frac{\f_{s,x}}{\sqrt{\w_{s,x}}} \\ \frac{\f_{x,y}}{\sqrt{\w_{x,y}}} \\ \frac{\f_{x,t}}{\sqrt{\w_{x,t}}}\\ \frac{\f_{y,t}}{\sqrt{\w_{y,t}}}
  \end{bmatrix} = B^{T+}({\sf e}_s - {\sf e}_t) = \begin{bmatrix}
   \frac{3}{4} & -\frac{1}{4} & -\frac{1}{4} & -\frac{1}{4}\\ \frac{1}{2} & \frac{1}{2} & -\frac{5}{6} & -\frac{1}{6} \\ \frac{1}{2} & \frac{1}{2} & -\frac{1}{6} & -\frac{5}{6} \\ 0 & 0 & \frac{2}{3} & -\frac{2}{3}
  \end{bmatrix}\begin{bmatrix}
   1 \\ 0 \\ 0 \\ -1
  \end{bmatrix} = \begin{bmatrix}
   1 \\ \frac{2}{3} \\ \frac{4}{3} \\ \frac{2}{3}
  \end{bmatrix}.
\end{align*}

This means that ${\cal R}_{s,t} = 1 + \frac{4}{9} + \frac{16}{9} +\frac{4}{9} = \frac{11}{3}$. By invoking \eq{matrix-pot}, where the matrix $\overline{B}$ and its Moore-Penrose inverse $\overline{B}^+$ are equal to
\begin{align}
  &\overline{B} = \begin{bmatrix}
   1 & -1 & 0\\ 0 & \frac{1}{2} & -\frac{1}{2}\\ 0 & \frac{1}{2} & 0 \\ 0 & 0 & \frac{1}{2}
  \end{bmatrix}, &\overline{B}^+ = \begin{bmatrix}
   1 & \frac{2}{3} & \frac{4}{3} & \frac{2}{3}\\ 0 & \frac{2}{3} & \frac{4}{3} & \frac{2}{3} \\ 0 & -\frac{2}{3} & \frac{2}{3} & \frac{4}{3}
  \end{bmatrix},
\end{align}

\noindent we obtain that the potential at each vertex is given by
\begin{equation}\label{eq:poten}
  \p = \begin{bmatrix} \overline{p} \\ 0  \end{bmatrix} = \begin{bmatrix} \overline{B}^{+}W\f \\ 0 \end{bmatrix} = \begin{bmatrix} \begin{bmatrix}
   1 & \frac{2}{3} & \frac{4}{3} & \frac{2}{3}\\ 0 & \frac{2}{3} & \frac{4}{3} & \frac{2}{3} \\ 0 & -\frac{2}{3} & \frac{2}{3} & \frac{4}{3}
  \end{bmatrix} \begin{bmatrix}
   1 \\ \frac{2}{3} \\ \frac{4}{3} \\ \frac{2}{3}
  \end{bmatrix}\\ 0 \end{bmatrix} = \begin{bmatrix}
   \frac{11}{3} \\ \frac{8}{3} \\ \frac{4}{3} \\ 0
  \end{bmatrix},
\end{equation}
meaning that the potential state $\ket{\p}$ is equal to
\begin{equation}
\begin{split}
  \ket{\p} = \sqrt{\frac{2}{{\cal R}_{s,t}}}\sum_{u \in V}\p_u\sqrt{\du_u}\ket{\psi_u} = \frac{11}{3}\ket{s,x} -\frac{8}{3}\ket{x,s} + \frac{4}{3}\ket{x,y}+ \frac{4}{3}\ket{x,t} - \frac{2}{3}\ket{y,x} + \frac{2}{3}\ket{y,t}.
\end{split}
\end{equation}

\begin{figure}
\centering
\begin{tikzpicture}
\node at (0,0) {\begin{tikzpicture}
  \draw[->] (-2,0)--(-0.1,0);
  \draw[->] (0,0)--(1.9,0);
  \draw[->] (1.08,.92)--(1.92,.08);
  \draw[->] (0,0)--(.92,.92);
  
  \filldraw (-2,0) circle (.1);
  \filldraw (0,0) circle (.1);
  \filldraw (2,0) circle (.1);
  \filldraw (1,1) circle (.1);
  
  \node at (-2,-0.3) {$s$};
  \node at (0,-0.3) {$x$};
  \node at (1,1.25) {$y$};
  \node at (2,-0.3) {$t$};

  \node at (-1,0.25) {$1$};
  \node at (1,0.3) {$\frac{1}{4}$};
  \node at (0.3,0.7) {$\frac{1}{4}$};
  \node at (1.7,0.7) {$\frac{1}{4}$};

  \node at (0,-1) {$\w_{u,v}$ for each $(u,v) \in \E$};
  \end{tikzpicture}};
  
  \node at (6,0) {\begin{tikzpicture}
  \draw[->] (-2,0)--(-0.1,0);
  \draw[->] (0,0)--(1.9,0);
  \draw[->] (1.08,.92)--(1.92,.08);
  \draw[->] (0,0)--(.92,.92);
  
  \filldraw (-2,0) circle (.1);
  \filldraw (0,0) circle (.1);
  \filldraw (2,0) circle (.1);
  \filldraw (1,1) circle (.1);
  
  \node at (-2,-0.3) {$s$};
  \node at (0,-0.3) {$x$};
  \node at (1,1.25) {$y$};
  \node at (2,-0.3) {$t$};

  \node at (-1,0.25) {$1$};
  \node at (1,0.3) {$\frac{2}{3}$};
  \node at (0.3,0.7) {$\frac{1}{3}$};
  \node at (1.7,0.7) {$\frac{1}{3}$};
  \node at (0,-1) {$\f_{u,v}$ for each $(u,v) \in \E$};
  \end{tikzpicture}};

  \node at (12,0){\begin{tikzpicture}
  \draw[->] (-2,0)--(-0.1,0);
  \draw[->] (0,0)--(1.9,0);
  \draw[->] (1.08,.92)--(1.92,.08);
  \draw[->] (0,0)--(.92,.92);
  
  \filldraw (-2,0) circle (.1);
  \filldraw (0,0) circle (.1);
  \filldraw (2,0) circle (.1);
  \filldraw (1,1) circle (.1);
  
  \node at (-2,-0.3) {$s$};
  \node at (0,-0.3) {$x$};
  \node at (1,1.25) {$y$};
  \node at (2,-0.3) {$t$};

  \node at (-2,0.5) {$\frac{11}{3}$};
  \node at (-0,0.5) {$\frac{8}{3}$};
  \node at (1,0.5) {$\frac{4}{3}$};
  \node at (2,0.5) {$0$};
  \node at (0,-1) {$\p_u$ for each $u \in V$};
  \end{tikzpicture}};
  
\end{tikzpicture}
\caption{Graph $G$ with its \st electrical flow $\f$ and corresponding potential $\p$ at each vertex.}\label{fig:normal-2}
\end{figure}

We now consider the case where the vertex $x \in V$ contains an additional alternative neighbourhood: let $\Psi_{\star}(x) = \{\ket{\psi_x},\ket{\psi_x^{\alt}}\}$ where
$$ \ket{\psi_x^{\textrm{alt}}} = \sqrt{\frac{2}{3}} (\frac{1}{2}\ket{s,x} -\ket{x,y} + \frac{1}{2}\ket{x,t}),$$ 
visualised in \fig{alternative-2}. By taking 
$$\sqrt{\du_x}\ket{\psi_{x,1}} = \sqrt{\frac{3}{2}} \sqrt{\frac{1}{2}} (-\ket{x,y} + \ket{x,t}) = \frac{1}{2}\sqrt{3}(-\ket{x,y} + \ket{x,t}),$$
we find that $\{\ket{\psi_{x,0}} = \ket{\psi_x},\ket{\psi_{x,1}}\}$ forms an orthonormal basis for $\Psi_{\star}(x)$. For this basis we find that the alternative incidence matrix $B_{\alt}$ of $G$ and $\Psi_{\star}$ and the Moore-Penrose inverse $B_{\alt}^{T+}$ of its transpose are equal to
\renewcommand*{\arraystretch}{1.5}
\begin{align}
  &B_{\alt} = \begin{bmatrix}
   1 & -1 & 0 & 0 & 0\\ 0 & \frac{1}{2} & -\frac{1}{2}\sqrt{3}& -\frac{1}{2} & 0 \\ 0 & \frac{1}{2} & \frac{1}{2}\sqrt{3} & 0 & -\frac{1}{2} \\ 0 & 0 & 0 & \frac{1}{2} & -\frac{1}{2}
  \end{bmatrix}, &B_{\alt}^{T+} = \begin{bmatrix}
   \frac{3}{4} & -\frac{1}{4} & 0 & -\frac{1}{4} & -\frac{1}{4}\\ \frac{1}{2} & \frac{1}{2} & -\frac{1}{3}\sqrt{3}& -\frac{1}{2} & -\frac{1}{2} \\ \frac{1}{2} & \frac{1}{2} & \frac{1}{3}\sqrt{3} & -\frac{1}{2} & -\frac{1}{2} \\ 0 & 0 & -\frac{1}{3}\sqrt{3} & 1 & -1
  \end{bmatrix}.
\end{align}

\noindent We can recover the electrical flow $\falt$ with respect to $\Psi_{\star}$ in \fig{alternative-2} using \thm{elecflow-alt} to derive 
\begin{align*}
  W\falt &= \begin{bmatrix}
  \frac{\falt_{s,x}}{\sqrt{\w_{s,x}}} \\ \frac{\falt_{x,y}}{\sqrt{\w_{x,y}}} \\ \frac{\falt_{x,t}}{\sqrt{\w_{x,t}}}\\ \frac{\falt_{y,t}}{\sqrt{\w_{y,t}}}
  \end{bmatrix} = B_{\alt}^{T+}({\sf e}_s - {\sf e}_t) =\begin{bmatrix}
   \frac{3}{4} & -\frac{1}{4} & 0 & -\frac{1}{4} & -\frac{1}{4}\\ \frac{1}{2} & \frac{1}{2} & -\frac{1}{3}\sqrt{3}& -\frac{1}{2} & -\frac{1}{2} \\ \frac{1}{2} & \frac{1}{2} & \frac{1}{3}\sqrt{3} & -\frac{1}{2} & -\frac{1}{2} \\ 0 & 0 & -\frac{1}{3}\sqrt{3} & 1 & -1
\end{bmatrix}\begin{bmatrix}
   1 \\ 0 \\ 0 \\ 0 \\ -1
  \end{bmatrix} = \begin{bmatrix}
   1 \\ 1 \\ 1 \\ 1
  \end{bmatrix}.
\end{align*}

This means that ${\cal R}^{\alt}_{s,t} = 1 + 1 + 1 +1 = 4$. By invoking \eq{matrix-pot-alt}, where the matrix $\overline{B_{\alt}}$ and its Moore-Penrose inverse $\overline{B}_{\alt}^+$ are equal to
\begin{align}
  &\overline{B_{\alt}} = \begin{bmatrix}
   1 & -1 & 0 & 0\\ 0 & \frac{1}{2} & -\frac{1}{2}\sqrt{3}& -\frac{1}{2}\\ 0 & \frac{1}{2} & \frac{1}{2}\sqrt{3} & 0\\ 0 & 0 & 0 & \frac{1}{2}
  \end{bmatrix}, &\overline{B_{\alt}}^+ = \begin{bmatrix}
   1 & 1& 1 & 1\\ 0 & 1 & 1 & 1 \\ 0 & -\frac{1}{2}\sqrt{3} & \frac{1}{2}\sqrt{3} & -\frac{1}{2}\sqrt{3} \\ 0 & 0 & 0 & 2
  \end{bmatrix},
\end{align}
we obtain that the alternative potential at each alternative neighbourhood is given by
\begin{equation}\label{eq:alterpotential}
  \palt = \begin{bmatrix} \overline{\palt} \\ 0  \end{bmatrix} = \begin{bmatrix} \overline{B_{\alt}}^{+}W \falt_{s,t} \\ 0 \end{bmatrix} = \begin{bmatrix} \begin{bmatrix}
   1 & 1& 1 & 1\\ 0 & 1 & 1 & 1 \\ 0 & -\frac{1}{3}\sqrt{3} & \frac{1}{3}\sqrt{3} & -\frac{1}{3}\sqrt{3} \\ 0 & 0 & 0 & 2
  \end{bmatrix} \begin{bmatrix}
   1 \\ 1 \\ 1 \\ 1
  \end{bmatrix}\\ 0 \end{bmatrix} = \begin{bmatrix}
   4 \\ 3 \\ -\frac{1}{3}\sqrt{3} \\ 2 \\ 0
  \end{bmatrix},
\end{equation}
meaning that the alternative potential state $\ket{\palt}$ is equal to
\begin{equation}
\begin{split}
  \ket{\palt} &= \sqrt{\frac{2}{{\cal R}^{\alt}_{s,t}}}\sum_{u \in V, i \in \{0,\dots,a_u-1\}}\palt_{(u,i)}\sqrt{\du_u}\ket{\psi_{(u,i)}} \\
  &= 4\ket{s,x} -3\ket{x,s} + \frac{3}{2}\ket{x,y}+ \frac{3}{2}\ket{x,t} + \frac{1}{2}\ket{x,y} - \frac{1}{2}\ket{x,t} - \ket{y,x} + \ket{y,t} \\
  &= 4\ket{s,x} -3\ket{x,s} + 2\ket{x,y}+ \ket{x,t} - \ket{y,x} + \ket{y,t} .
\end{split}
\end{equation}

\begin{figure}[h!]
\centering
\begin{tikzpicture}
\node at (0,0) {\begin{tikzpicture}
  \draw[->] (-2,0)--(-0.1,0);
  \draw[->] (0,0)--(1.9,0);
  \draw[->] (1.08,.92)--(1.92,.08);
  \draw[->] (0,0)--(.92,.92);
  
  \filldraw (-2,0) circle (.1);
  \filldraw[fill=blue] (0,0) circle (.1);
  \filldraw (2,0) circle (.1);
  \filldraw (1,1) circle (.1);
  
  \node at (-2,-0.3) {$s$};
  \node at (0,-0.3) {$x$};
  \node at (1,1.25) {$y$};
  \node at (2,-0.3) {$t$};

  \node at (-1,0.25) {$1$};
  \node at (1,0.3) {$\frac{1}{2}$};
  \node at (0.3,0.7) {$\frac{1}{2}$};
  \node at (1.7,0.7) {$\frac{1}{2}$};

  \node at (0,-1) {$\falt_{u,v}$ for each $(u,v) \in \E$};
  \end{tikzpicture}};
  
  \node at (6,0) {\begin{tikzpicture}
  \draw[->] (-2,0)--(-0.1,0);
  \draw[->] (0,0)--(1.9,0);
  \draw[->] (1.08,.92)--(1.92,.08);
  \draw[->] (0,0)--(.92,.92);
  
  \filldraw (-2,0) circle (.1);
  \filldraw[fill=blue] (0,0) circle (.1);
  \filldraw (2,0) circle (.1);
  \filldraw (1,1) circle (.1);
  
  \node at (-2,-0.3) {$s$};
  \node at (0,-0.3) {$x$};
  \node at (1,1.25) {$y$};
  \node at (2,-0.3) {$t$};

  \node at (-1,0.2) {$4$};
  \node at (1.65,0.6) {$2$};
  \node at (1,0.2) {$2$};
  \node at (0.25,0.6) {$4$};

  \node at (0,-1) {$\palt_{u,v}$ for each $(u,v) \in \E$};
  \end{tikzpicture}};  

  \node at (12,0) {\begin{tikzpicture}
  \draw[->] (0.0,0)--(-1.9,0);
  \draw[->] (2,0)--(0.1,0);
  \draw[->] (2,0)--(1.08,.92);
  \draw[->] (1,1)--(0.07,0.07);
  
  \filldraw (-2,0) circle (.1);
  \filldraw[fill=blue] (0,0) circle (.1);
  \filldraw (2,0) circle (.1);
  \filldraw (1,1) circle (.1);
  
  \node at (-2,-0.3) {$s$};
  \node at (0,-0.3) {$x$};
  \node at (1,1.25) {$y$};
  \node at (2,-0.3) {$t$};

  \node at (-1,0.2) {$3$};
  \node at (1.65,0.6) {$0$};
  \node at (1,0.2) {$0$};
  \node at (0.25,0.6) {$2$};
  
  \node at (0,-1) {$\palt_{u,v}$ for each $(v,u) \in \E$};
  \end{tikzpicture}}; 
  
\end{tikzpicture}
\caption{Graph $G$ where the blue vertex $x$ has an additional alternative neighbourhood. The the \st alternative electrical flow $\falt$ with respect to this extra alternative neighbourhood is displayed, as well as the corresponding potential vector $\palt$. }\label{fig:alternative-2}
\end{figure}
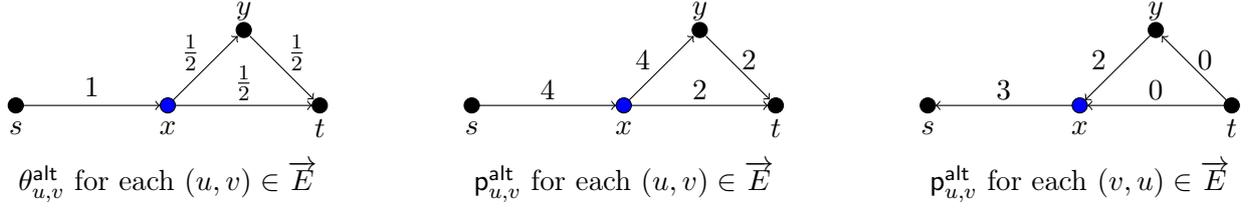

\section{Electrical flow sampling on one-dimensional random hierarchical graphs}\label{sec:1d}

Recently, \cite{balasubramanian2023exponential} have shown that there is an exponential separation between quantum and classical algorithms in finding a marked vertex in one-dimensional random hierarchical graphs, which is a generalization of the result of the welded tree problem \cite{childs2003ExpSpeedupQW}. In this section, we show that for one-dimensional random hierarchical graphs, we can efficiently generate a set of alternative neighbourhoods $\Psi_{\star}$ such that the resulting \st alternative electrical flow matches the \st electrical flow, meaning it satisfies Ohm's Law. We show that this allows us to invoke \lem{qwflows} with similar parameters as in \cor{qwflows}, allowing us to efficiently approximate the \st electrical flow and sample from it to find a marked vertex, recovering some of the results from \cite{balasubramanian2023exponential}. Note that we could have invoked \thm{qwflows} directly, but instead aim for this approach to provide more intuition on how our results generalise \cor{qwflows}.

Following \cite{balasubramanian2023exponential}, we now define the one-dimensional random hierarchical graph model with nodes $S_0,S_1,\dots, S_{n}$. 

\begin{definition}[Hierarchical graph on a line supergraph $\mathcal{G}$] A \emph{hierarchical graph on a line supergraph} $\mathcal{G}=(\mathcal{V} = \{0,\dots,n\},\mathcal{E})$ of length $n$ is defined by a set of nodes $S_{v}$ for each $v\in \mathcal{V}$ and a set of edges $ E_{u,v}$ for each $(u,v)\in \mathcal{E}$ such that $s_v=|S_v|$ and $e_{(u,v)}=|E_{u,v}|$. There are two special start and exit nodes $S_0 = \{s\}$ and $S_n = \{t\}$, meaning $s_{0}=s_{n}=1$.
Define $V=\bigcup_{v\in \mathcal{V}} S_v, E=\bigcup_{(u,v)\in \mathcal{E}} E_{u,v}$ and $G=(V,E)$. For each $(u,v)\in E(G)$, the edge set $E_{u,v}$ denotes the set of edges between the nodes between $S_u$ and $S_v$. 
\end{definition}

In \sec{weexample}, we discuss the welded tree graph, which is an example of a one-dimensional random hierarchical graph. 

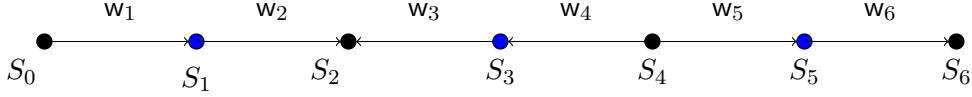
\begin{figure}
  \centering
  \begin{tikzpicture}
    \filldraw[fill] (-5.5,0) circle (.1); 	\node at (-5.8,-.4) {$S_0$};

	\draw[->] (-5.5,0) -- (-3.6,0); \node at (-4.5,0.4) {$\w_1$};

\filldraw[fill=blue] (-3.5,0) circle (.1); 
  \node at (-3.5,-0.5) {$S_1$};

  \draw[->] (-3.5,0) -- (-1.6,0); 	
   \node at (-2.5,0.4) {$\w_2$};

 \filldraw[fill] (-1.5,0) circle (.1); 	\node at (-1.8,-.4) {$S_2$};

 \draw[->] (0.5,0) -- (-1.4,0); 	
   \node at (-0.5,0.4) {$\w_3$};

\filldraw[fill=blue] (0.5,0) circle (.1);
 \node at (0.5,-0.4) {$ S_{3}$};

 \draw[->] (2.5,0) -- (0.6,0); 	
   \node at (1.5,0.4) {$\w_4$};
 
\filldraw[fill] (2.5,0) circle (.1);
 \node at (2.5,-0.4) {$ S_{4}$};

 \draw[->] (2.5,0) -- (4.4,0); 	
   \node at (3.5,0.4) {$\w_5$};
   
\filldraw[fill=blue] (4.5,0) circle (.1);
 \node at (4.5,-0.4) {$ S_{5}$};
 
 \draw[->] (4.5,0) -- (6.4,0); 	
   \node at (5.5,0.4) {$\w_6$};
 
\filldraw[fill] (6.5,0) circle (.1);
 \node at (6.5,-0.4) {$ S_{6}$};
	 
  \end{tikzpicture}
  \caption{A line supergraph $\mathcal{G}$ with nodes $S_0,S_1,\dots, S_{6}$. The black nodes are subsets of $V_{\mathrm{even}}$, where the edge directions are reversed and where all adjacent edges have the same weight and direction.}
  \label{fig:pathgraph}
\end{figure}

\begin{definition}[Balanced hierarchical graph] A hierarchical graph on a supergraph $G$ is said to be \emph{balanced} if for every $(u,v)\in E(G)$, the number of edges connecting a fixed node $\alpha \in S_u$ to nodes in $S_v$ is the same for each $\alpha$. 
\end{definition}

\begin{definition}[Edge-edge ratio]
Consider a hierarchical graph on the line supergraph $G$ which has nodes $S_0, S_1,\dots, S_{n}$ where each node $S_i$ contains $s_0, s_1,\dots, s_{n}$ many vertices. Let $e_k$ and $\mathcal{E}_k$ denote the number of edges and the set of edges between the nodes $S_{k-1}$ and $S_k$ respectively.  Then the \emph{edge ratios} $r_k$ for $k\in \{1,\dots, n-1\} $ are defined as 
\begin{equation*}
 r_k=\frac{e_{k+1}}{e_{k}}. 
\end{equation*}
\end{definition}

\begin{definition}[Edge-vertex ratio] A hierarchical graph on the line supergraph $G=(V,E)$ which has nodes $S_0,S_1,\dots,S_{n}$ possesses \emph{edge-vertex ratios} $\kappa_0,\kappa_1,\dots,\kappa_{n}$ given by 
\begin{equation}
  \kappa_j=\frac{e_j}{s_j}.
\end{equation}
\end{definition}

For a $D$-regular random balanced hierarchical graph on a line supergraph $G=(V,E)$, we have 
$e_i+e_{i+1}= e_i+r_ie_i=\kappa_is_i +r_i \kappa_is_i =Ds_i$ and $\kappa_i(1+r_i)=D.$ Let $\ell=\Theta(n)$ be an integer such that $2^{\ell} \gg |V|$, where $|V|$ is the number of vertices in the one-dimensional random hierarchical graph $G$. This does impose the restriction that $|V|$ can be at most exponential in $n$. To each vertex in $V$, we assign a random name from the set $\{0,1\}^{\ell}$. To access the neighbours of a particular vertex, we are given quantum access to an adjacency list oracle $O_G$ for the graph $G$. Given an $\ell$-bit string $\sigma \in \{0,1\}^{\ell}$ corresponding to a vertex $u \in V$, the adjacency list oracle $O_G$ provides the bit strings of the neighbouring vertices in $\N(u)$. If $\sigma$ does not correspond to any vertex, which will most often be the case than not since $2^{\ell} \gg |V|$, the oracle instead returns $\perp$. This oracle structure effectively forces any algorithm to start in $s$ and traverse the graph $G$ from there, as it is infeasible to try and guess the name of any other vertex in $V$. 

\begin{problemo}[One-dimensional random hierarchical graph problem]\label{probl:random}
We are given an adjacency list oracle $O_G$ to the one-dimensional random hierarchical graph $G$ ($D$-regular) on the line supergraph of length $n$ and the possibility to check whether any vertex $u$ is equal to $t$. Given the $\ell$-bit string associated to the starting vertex $s \in \{0,1\}^{\ell}$, the goal is to output the $\ell$-bit string corresponding to the other root $t$.
\end{problemo}

Before we can use \lem{qwflows} to tackle this problem, we must turn $G$ into an electrical network (see \defin{elec-network}), meaning we have to assign a weight and direction to each of its edges. We assign all edges in $E_{k} = \bigcup_{(u,v)\in \mathcal{E}_k} E_{u,v}$ the same weight for $k \in \{1,2,\dots, n\}$ and the weight $\w_k$ changes every two layers (starting at $\w_2$). Without loss of generality, we assume that $n$ is an even number and set $\w_1=1$ and

\begin{equation}\label{eq:weight-super}
\w_k= \prod_{i=1}^{\lfloor k/2 \rfloor} \left(\frac{1}{r_{2i-1}}\right)^2.
\end{equation} 

\noindent For each vertex $u\in S_i$ where $i\in\{1,\dots,n-1\}$, we find that 
$$\du_u=\sum_{v\in \Gamma(u)} \w_{u,v} = \kappa_i\w_i+(D-\kappa_i)\w_{i+1}.$$
We define the set of directed edges as follows:
\begin{equation}\label{eq:direct-super}
\E=\bigcup_{k\text{ mod } 4\in\{1,2\}}\{(u,v): u\in S_{k-1}, v\in S_k
 \} \cup \bigcup_{k\text{ mod } 4 \in\{0,3\}}\{(u,v): v\in S_{k-1}, u\in S_k\}.
\end{equation}
See \fig{pathgraph} for an example of a line supergraph where this edge orientation and weight assignments is visualised. By viewing $G$ as an electrical network, it is straightforward to directly compute the effective resistance ${\cal R}_{s,t}$ via the resistance laws for electrical circuits in series and parallel \cite{siebert1986circuits}. As a result we find for the weight assignment from \eq{weight-super} that
\begin{equation}\label{eq:eff-resis}
 {\cal R}_{s,t}= \frac{1}{D}+\sum_{k=2}^{n} \frac{1}{e_k\w_k}.
\end{equation}

Since one-dimensional random hierarchical graphs generalise the welded tree graph, it should come as no surprise that we will use a collection of alternative neighbourhood that generalises the one used in \cite{jeffery2023multidimensional} to traverse the welded tree graph:
\begin{definition}[Alternative Fourier Neighbourhood]\label{def:fourier}
  Let $G$ be a network. For any vertex $u\in V(G)$ with neighbours $\Gamma(u) = \{v_0,v_1,\dots v_{D-1}\}$. Let $\omega_D=\exp( 2\pi i/D)$ be the $D$-th root of unity. Then for each $j\in \{0,1,2, \dots, D-1 \}$, the $j$'th Fourier basis state is given by: 
  \[
  \ket{ \hat{\psi}^j_u} := \frac{1}{\sqrt{D}} \sum_{i=0}^{D-1} \omega_D^{i\cdot j}\ket{u,v_i} \]
  \noindent We define the \emph{alternative Fourier neighbourhood of dimension $D$} of the vertex $u$ as 
  $$\hat{\Psi}_\star(u)= \{\ket{\hat{\psi}^1_u},\ket{\hat{\psi}^1_u}, \dots \ket{\hat{\psi}^{D-1}_u}\}.$$

\end{definition}

Recall that we defined the weights in \eq{weight-super} and the edge directions in \eq{direct-super} in an alternating fashion. This induces a partition of $V$ into $V_{\mathrm{even}} = \bigcup_{v\in \mathcal{V}: v \text{ is even }} S_v$ and $V_{\mathrm{odd}} = \bigcup_{v\in \mathcal{V}: v \text{ is odd }} S_v$. We can assume without loss of generality that we know for any $u \in V$ whether it belongs to $V_{\mathrm{even}}$ or $V_{\mathrm{odd}}$ by keeping track of the parity of the distance from $s$ that is initially 0, and flips every time the algorithm takes a step. For a more detailed argument why this assumption is without loss of generality, we refer the reader to the end of Section 4 in \cite{jeffery2023multidimensional}. Note that at each vertex in $V_{\mathrm{odd}}$ the edge directions are reversed and all adjacent edges have the same weight (see \fig{pathgraph}). It is therefore straightforward to generate the star state $\ket{\psi_u}$ for each $u \in V_{\mathrm{odd}}$, since $\ket{\psi_u} \propto \ket{\hat{\psi}^0_u}$. For these vertices, we therefore do not consider any additional alternative neighbourhoods, meaning $\Psi_{\star}(u) = \{ \ket{\psi_u}\}$. For any $u \in V_{\mathrm{even}} \backslash \{t\}$, we let the set of alternative neighbourhoods be the alternative Fourier neighbourhood (see \defin{fourier}): $\Psi_{\star}(u) = \hat{\Psi}_\star(u).$ This means that for every $u \in V_{\mathrm{even}} \backslash \{t\}$, generating $\Psi_{\star}(u)$ does not depend on $u$ (apart from making the query to obtain its neighbours). This allows us to prove the following lemma:

\begin{lemma}\label{lem:generate-fourier}
The quantum walk operator $U_{\A\Balt}$ as defined in \eq{walk-alt} can be implemented in $O(1)$ queries to $O_G$ and $O(nD)$ elementary operations. 
\end{lemma}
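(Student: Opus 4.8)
The plan is to implement the two reflections comprising $U_{\A\Balt} = (2\Pi_{\cal A}-I)(2\Pi_{\Balt}-I)$ separately. The first factor equals $-{\sf SWAP}$, which merely exchanges the two $\ell$-bit vertex-name registers and applies a global phase; since $\ell = \Theta(n)$, this costs $O(n)$ elementary operations and no queries. All of the work therefore lies in implementing the reflection $2\Pi_{\Balt}-I$, and I would first observe that this operator is block-diagonal with respect to the first register: each state $\ket{\psi_{u,i}}$ is supported only on $\{\ket{u,v} : v \in \N(u)\}$, so $\Pi_{\Balt}$ decomposes as a sum of projectors onto $\mathrm{span}(\Psi_\star(u))$ over $u \in V\setminus\{s,t\}$, and the reflection acts independently on each $u$-block.

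Next I would characterise each block using the Fourier structure from \defin{fourier}. For $u \in V_{\mathrm{odd}}\setminus\{s,t\}$ we have $\Psi_\star(u) = \{\ket{\psi_u}\}$ with $\ket{\psi_u}\propto\ket{\hat{\psi}^0_u}$, so the block equals $2\proj{\hat{\psi}^0_u}-I$; for $u \in V_{\mathrm{even}}\setminus\{t\}$ the span $\mathrm{span}(\Psi_\star(u)) = \mathrm{span}\{\ket{\hat{\psi}^j_u}:j\geq 1\}$ is the orthogonal complement of $\ket{\hat{\psi}^0_u}$ inside the neighbourhood, so the block equals $I - 2\proj{\hat{\psi}^0_u}$; finally for $u\in\{s,t\}$ the sum contains no projector and the block is simply $-I$. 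In every case the block is $\pm(2\proj{\hat{\psi}^0_u}-I)$, i.e. a reflection, up to an overall sign, around the uniform neighbour superposition $\ket{\hat{\psi}^0_u}$. The sign is determined solely by the parity of $u$, and I would use the parity bit, maintained without additional queries by the argument recalled from \cite{jeffery2023multidimensional}, together with a comparison of $u$ against $s$ and $t$, to select the correct sign.

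The remaining task is to implement the controlled reflection around $\ket{\hat{\psi}^0_u} = \tfrac{1}{\sqrt{D}}\sum_{i=0}^{D-1}\ket{u,v_i}$. I would query $O_G$ once to obtain the ordered list of the $D$ neighbours of $u$, and use it to reversibly convert the second register from the neighbour name $v$ (an $\ell$-bit string) into its index $i\in\{0,\dots,D-1\}$ in this list. Writing the index into an ancilla requires comparing $v$ against all $D$ names, each of length $\ell = \Theta(n)$, costing $O(nD)$ elementary operations; this comparison step is the dominant cost and the source of the $O(nD)$ bound. On the index register the reflection is $F_D(2\proj{0}-I)F_D^{\dagger}$, where $F_D$ is the quantum Fourier transform over $\mathbb{Z}_D$ (implementable in $O(\mathrm{polylog}\,D)$ operations) and $2\proj{0} - I$ is an $O(\log D)$ reflection about a single basis state. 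After applying this, I would uncompute the index and the oracle query using a second and final $O(1)$ queries and a further $O(nD)$ operations. Composing $-{\sf SWAP}$ with this construction yields $U_{\A\Balt}$ in $O(1)$ queries and $O(nD)$ elementary operations.

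The step I expect to require the most care is the reversible name-to-index conversion: one must ensure it uses only $O(1)$ queries (the adjacency list of $u$ is returned in a single call), that it is cleanly uncomputed so that no garbage entangles with the walk register, and that it applies the Fourier reflection relative to the correct ordering of $u$'s neighbours. Verifying that the resulting operation agrees with $2\Pi_{\Balt}-I$ on the $s,t$ blocks, where it must act as $-I$, and respects the parity-dependent sign on the remaining blocks is routine once the block decomposition above is in place.
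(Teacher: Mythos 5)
Your proof is correct, and it arrives at the stated bounds through the same underlying resources as the paper: one query to fetch the ordered neighbour list of $u$, $O(nD)$ elementary operations to move $D$ neighbour names of length $\ell = \Theta(n)$ between name and index representations, and a Fourier transform over $\mathbb{Z}_D$ whose cost is subdominant. The difference is in how the reflection $2\Pi_{\Balt}-I$ is organised. The paper does not construct it from scratch: it invokes Lemma~4.4 of \cite{jeffery2023multidimensional} (which proves the $D=3$ welded-tree case), asserts that the construction carries over to general $D$, and identifies the dominant cost as the indexed-copy map $\ket{i,0}\ket{v_0,\dots,v_{D-1}}\mapsto\ket{i,v_i}\ket{v_0,\dots,v_{D-1}}$ needed to synthesise the states $\ket{\hat{\psi}^j_u}$. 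You instead give a self-contained construction: you block-diagonalise $\Pi_{\Balt}$ over the first register and observe that, since $\mathrm{span}\{\ket{\hat{\psi}^j_u}:j\ge 1\}$ is exactly the orthogonal complement of $\ket{\hat{\psi}^0_u}$ inside the block of $u$, each block of the reflection is (up to a parity-dependent sign) the rank-one reflection $2\proj{\hat{\psi}^0_u}-I$ about the uniform neighbour superposition, which you then implement as a name-to-index conversion conjugating $F_D(2\proj{0}-I)F_D^{\dagger}$. This buys a cleaner and more elementary implementation — a single one-dimensional reflection rather than a reflection about a $(D-1)$-dimensional span — at the price of verifying the block structure yourself, which the paper's citation route inherits from \cite{jeffery2023multidimensional}.

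Two small points to tighten. First, your summary sentence that ``in every case the block is $\pm(2\proj{\hat{\psi}^0_u}-I)$'' is literally false at $u\in\{s,t\}$, where the block is $-I$ and neither sign of $2\proj{\hat{\psi}^0_u}-I$ equals $-I$; you do treat these blocks separately afterwards, so only the phrasing needs an exemption. Second, selecting the sign requires coherently recognising $s$ and $t$: the name of $s$ is given, but the name of $t$ is exactly what the algorithm is searching for, so the ``comparison against $t$'' must be routed through the promised ability to test whether a vertex equals $t$ in \cref{probl:random} (or, in the welded tree circuit graph, through the distinctive degree of $t$). The paper's proof needs the same capability implicitly, since $\Balt$ excludes $s$ and $t$ there as well, so this is a point to make explicit rather than a gap.
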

\begin{proof}
  The unitary $U_{\A\Balt}$ consists of the two reflections $2\Pi_{\cal A} - 1$ and $2\Pi_{\Balt} - 1$. Since the former is (up to a sign difference) equal to the ${\sf SWAP}$ operator on two registers, each containing bit strings of length $\ell = \Theta(n)$, it can be implemented in $0$ queries and $O(n)$ elementary operations. The cost of implementing $2\Pi_{\Balt} - 1$ follows almost directly from the proof of Lemma 4.4 in \cite{jeffery2023multidimensional}, which proves the $D = 3$ case. By considering general $D$ in their proof, it still holds that we only need $O(1)$ queries to $O_G$ to apply $2\Pi_{\Balt} - 1$. The number of elementary operations needed in their proof is in the general case dominated by the cost of the following operation (needed to generate the state $\ket{\hat{\psi}^j_u}$), which for each $j \in \{0,\dots,D-1\}$ applies the map
  $$\ket{j}\left(\sum_{i=0}^{D-1} \omega_D^{i\cdot j}\ket{i,0}\right)\ket{v_0,v_1,\dots v_{D-1}} \mapsto \ket{j}\left(\sum_{i=0}^{D-1} \omega_D^{i\cdot j}\ket{i,v_i}\right)\ket{v_0,v_1,\dots v_{D-1}}.$$
  By conditioning on the value $i$, we can copy over the $i$'th value in the $\ket{v_0,v_1,\dots v_{D-1}}$ register, but this will require $O(nD)$ elementary operations. This far exceeds the complexity of implementing the Quantum Fourier Transform $F_D$, which requires $O(\log(D)\log\log(D))$ elementary operations \cite{hales2000improved}.
\end{proof}

We now show how to apply \lem{qwflows} with the quantum walk operator $U_{\A\Balt}$, where $\Balt = \mathrm{span}\{\mathrm{span}(\Psi_{\star}(u)) : u \in V \backslash \{s,t\}\}$. We choose the same parameters as in \cor{qwflows}. This means that for $\ket{\f}$ we choose the state corresponding to the \st electrical flow $\f$ on $G$, which sends one unit of flow from $s$ to $t$ by evenly distributing the one unit of flow available at each layer $S_{i}$ to the next layer $S_{i+1}$ for each layer $i \in \{0,\dots,n-1\}$. By \eq{flowstate} we obtain that 
\begin{equation}\label{eq:elecflow1d}
  \ket{\f}= \frac{1}{\sqrt{{2\cal R}_{s,t}}}\sum_{(u,v)\in \E} \frac{\f_{u,v}}{\sqrt{\w_{u,v}}} (\ket{u,v}+\ket{v,u}) = \frac{1}{\sqrt{{2\cal R}_{s,t}}}\sum_{k = 1}^{n}\sum_{(u,v)\in \E_k} (-1)^{\Delta_{u,v}}\frac{1}{e_k\sqrt{\w_k}} (\ket{u,v}+\ket{v,u}),
\end{equation} 
and it is straightforward to verify that $\ket{\f}$ is normalised using \eq{eff-resis}, confirming that $\f$ is in indeed the \st electrical flow. By \eq{rel-flow-pot} we know that for its corresponding potential vector $\p$ and with potential state $\ket{\p} \in {\cal B} \subseteq \Balt$ (see \eq{potstate}) we have
\begin{equation*}
  \ket{\psi_s^+} = \frac{1}{\sqrt{{\cal R}_{s,t} D}}\ket{\f} - (I-\Pi_{\cal A})\frac{1}{\sqrt{{\cal R}_{s,t} D}}\ket{\p}.
\end{equation*}

Hence we can apply \lem{qwflows} by choosing $\ket{\psi} = \ket{\psi_s^+}$, $\ket{\phi} = -\frac{1}{\sqrt{{\cal R}_{s,t} Ds}}\ket{\p}$ and $p = \frac{1}{{\cal R}_{s,t} D}$ for the remaining parameters, if we manage to show that $\Pi_{\Balt}\ket{\f} = 0$. We achieve this with the following claim.

\begin{claim}\label{clm:orthogonal-flow}
For any $u \in V$, define $\ket{\f_u}= (\ket{u}\bra{u}\otimes I)\ket{\f}$. If $u \in V_{\mathrm{even}}$, then $\ket{\f_u} \propto \ket{\hat{\psi}^0(u)}$. If $u \in V_{\mathrm{odd}}$, then $\ket{\f_u} \propto \sum_{v \in \Gamma(u)} \f_{u,v}\ket{u,v}$. As a consequence, for every $u \in V$ and $\ket{\psi_{\star}} \in \Psi_{\star}(u)$ the state $\ket{\f_u}$ satisfies $\brakett{\psi_{\star}}{\f_u} =0$.
\end{claim}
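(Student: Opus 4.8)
The plan is to compute $\ket{\f_u}$ in closed form and then read off the two cases. First I would project the flow state \eq{elecflow1d} onto the first register: the basis vector $\ket{u,v}$ collects a contribution from the $\ket{u,v}$ summand of the edge $(u,v)$ and from the $\ket{v,u}$ summand of the edge $(v,u)$, and after substituting $\f_{v,u}=-\f_{u,v}$ and $\w_{v,u}=\w_{u,v}$ these collapse to
\begin{equation*}
  \ket{\f_u}=\frac{1}{\sqrt{2{\cal R}_{s,t}}}\sum_{v\in\N(u)}(-1)^{\Delta_{u,v}}\frac{\f_{u,v}}{\sqrt{\w_{u,v}}}\ket{u,v}.
\end{equation*}
This identity is the heart of the claim; everything else is determining the coefficient $(-1)^{\Delta_{u,v}}\f_{u,v}/\sqrt{\w_{u,v}}$ for each neighbour $v$ from the weight assignment \eq{weight-super}, the orientation \eq{direct-super}, and the fact that the electrical flow carries $1/e_k$ along each edge of layer $k$ (so for $u\in S_i$ one has $\f_{u,v}=-1/e_i$ towards $S_{i-1}$ and $\f_{u,v}=+1/e_{i+1}$ towards $S_{i+1}$).

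Next I would split the neighbours of $u\in S_i$ into those in $S_{i-1}$ (layer-$i$ edges, weight $\w_i$) and those in $S_{i+1}$ (layer-$(i+1)$ edges, weight $\w_{i+1}$), and treat the two vertex types separately. For $u\in V_{\mathrm{even}}$, which carry the Fourier alternative neighbourhood $\hat{\Psi}_\star(u)$, I would use $e_k=e_1\prod_{j<k}r_j$ together with \eq{weight-super} to verify that the two incident layers satisfy $e_i\sqrt{\w_i}=e_{i+1}\sqrt{\w_{i+1}}$, so that all $D$ coefficient magnitudes $1/(e_i\sqrt{\w_i})$ and $1/(e_{i+1}\sqrt{\w_{i+1}})$ coincide, and that the orientation rule \eq{direct-super} makes the factor $(-1)^{\Delta_{u,v}}$ absorb the sign of $\f_{u,v}$ in the same way for both groups; together these force every coefficient to be equal, i.e. $\ket{\f_u}\propto\ket{\hat{\psi}^0_u}$. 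For $u\in V_{\mathrm{odd}}$, the two incident layers instead share a common weight $\w_i=\w_{i+1}$ and a common orientation, so $(-1)^{\Delta_{u,v}}/\sqrt{\w_{u,v}}$ is a single constant that factors out, leaving $\ket{\f_u}\propto\sum_{v\in\N(u)}\f_{u,v}\ket{u,v}$.

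The final step is to deduce orthogonality to $\Psi_\star(u)$ for $u\in V\setminus\{s,t\}$. When $u\in V_{\mathrm{even}}$ we have $\Psi_\star(u)=\{\ket{\hat{\psi}^1_u},\dots,\ket{\hat{\psi}^{D-1}_u}\}$, and since $\ket{\f_u}\propto\ket{\hat{\psi}^0_u}$, orthonormality of the Fourier basis immediately gives $\brakett{\hat{\psi}^j_u}{\f_u}=0$ for every $j\neq0$. When $u\in V_{\mathrm{odd}}$ we have $\Psi_\star(u)=\{\ket{\psi_u}\}$ with $\ket{\psi_u}\propto\ket{\hat{\psi}^0_u}$, so $\brakett{\hat{\psi}^0_u}{\f_u}\propto\sum_{v\in\N(u)}\f_{u,v}=\f_u=0$ by conservation of the flow at $u$ (Kirchhoff's Law, \defin{kcl})---this is exactly the computation already carried out in \eq{flow-projB}. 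In either case $\ket{\f_u}$ is orthogonal to $\mathrm{span}(\Psi_\star(u))$, which is the assertion of the claim.

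The hard part will be the sign-and-weight bookkeeping in the $V_{\mathrm{even}}$ case. One has to confirm that the geometric weight choice \eq{weight-super} is exactly what equalizes $e_k\sqrt{\w_k}$ across the two layers adjacent to $u$---a short telescoping in $r_k=e_{k+1}/e_k$---while simultaneously the mod-$4$ orientation pattern of \eq{direct-super} forces the $(-1)^{\Delta_{u,v}}$ factors to absorb the opposite signs of the incoming and outgoing flow, so that no cancellation destroys the uniform superposition. It is precisely this interplay that has to be tracked carefully, and it is also the point at which the even/odd labelling must be reconciled with the weight and orientation conventions. The endpoints $s,t$ and their incident layers lie outside $\Balt$ and need only a trivial separate check.
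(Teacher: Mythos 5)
Your proposal follows essentially the same route as the paper's own proof: project $\ket{\f}$ onto the vertex $u$, split $\Gamma(u)$ into the neighbours in $S_{k-1}$ and $S_{k+1}$, use \eq{weight-super} to obtain $e_k\sqrt{\w_k}=e_{k+1}\sqrt{\w_{k+1}}$ at the Fourier vertices and the orientation pattern of \eq{direct-super} to align the signs, and then conclude via orthonormality of the Fourier basis in the one case and conservation of the flow (as in \eq{flow-projB}) in the other. The only difference is presentational: you keep the factor $(-1)^{\Delta_{u,v}}$ explicit in the formula for $\ket{\f_u}$, whereas the paper absorbs it into the flow coefficients by writing $\f_{u,v_i}=(-1)^{\Delta_{u,v_i}}/e_k$; both bookkeeping conventions lead to the same sign condition, and your explicit treatment of the endpoints $s,t$ is a harmless (slightly more careful) addition.
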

\begin{proof}
By construction of $\ket{\f}$ (see \eq{elecflow1d}), we see for any $u \in V$ that the state $\ket{\f_u}$ is equal to
$$\ket{\f_u}= \frac{1}{\sqrt{{2\cal R}_{s,t}}}\sum_{v \in \Gamma(u)}\frac{\f_{u,v}}{\sqrt{\w_{u,v}}} \ket{u,v}.$$

Let $k$ such that $u \in S_k$ and let $v_1,v_2,\dots, v_{l} \in \Gamma(u) \cap S_{k-1}$ be the neighbours of $u$ that lay in the node $S_{k-1}$ and similarly let $v_{l+1},\dots, v_{D}\in \Gamma(u) \cap S_{k+1}$ be the neighbours of $u$ that lay in the node $S_{k+1}$, where $l= D/(1+r_k)$, which is in fact an integer. This means that $\f_{u,v_i} = (-1)^{\Delta_{u,v_i}}/e_k$ for $i \in [l]$ and $\f_{u,v_i} = (-1)^{\Delta_{u,v_i}}/e_{k+1}$ for $i \in [D]\backslash[l]$. If $u\in V_{\mathrm{even}}$, then the weights (see \eq{weight-super}) satisfy 
$$\sqrt{\frac{\w_{k+1}}{\w_{k}}}=\frac{e_{k}}{e_{k+1}}=\frac{1}{r_{k}},$$
meaning
$$\frac{1}{e_k\sqrt{\w_k}}=\frac{1}{e_{k+1}\sqrt{\w_{k+1}}}.$$
Additionally, since $u \in V_{\mathrm{even}}$, it holds that $(-1)^{\Delta_{u,v_i}}(-1)^{\Delta_{u,v_j}} = -1$ for any $i \in [\ell]$ and $j \in [D]\backslash[\ell]$, meaning $\ket{\f_u} \propto \ket{\hat{\psi}^0(u)}$.
Since for $u\in V_{\mathrm{even}}$ we defined $\Psi_{\star}(u)$ to be the alternative Fourier neighbourhood (see \defin{fourier}) and the Fourier basis states form an orthonormal basis, it follows that $\brakett{\psi_{\star}}{\f_u} =0 $.

Now if instead $u\in V_{\mathrm{odd}}$, then we know that $\w_k=\w_{k+1}$ and $(-1)^{\Delta_{u,v_i}}(-1)^{\Delta_{u,v_j}} = 1$ for any $i \in [\ell]$ and $j \in [D]\backslash[\ell]$. So $\ket{\f_u} \propto \sum_{v \in \Gamma(u)} \f_{u,v}\ket{u,v}$. Since for $u\in V_{\mathrm{odd}}$ we defined $\Psi_{\star}(u) = \{\ket{\psi_u} = \ket{\hat{\psi}^0(u)}\}$, it follows by the conservation of the flow $\f$ that $\brakett{\psi_u}{\f_u} = \sum_{v \in \Gamma(u)}\f_{u,v} = 0$.
\end{proof}

Knowing that we can apply \lem{qwflows} for our multidimensional electrical network, we now show how to use this information to solve \probl{random}.

\subsection{The algorithm}
In this section, we provide a quantum algorithm that approximates the \st electrical flow state and samples from it to find the ending vertex $t \in V$ in a one-dimensional random hierarchical graph. As an example of such a one-dimensional random hierarchical graph, we then apply our algorithm to the welded tree graph.

\begin{algorithm}
\caption{Solving the one-dimensional random hierarchical graph problem}
\begin{algorithmic}\label{alg:1drandomvertexfinding}
\REQUIRE One-dimensional random hierarchical graph $G = (V,E)$ with adjacency list oracle $O_G$, the $\ell$-bit string corresponding to the starting vertex $s \in V$, a success probability parameter $\delta$.
\ENSURE The $\ell$-bit string corresponding to the ending vertex $t \in V$.
\begin{enumerate}
  \STATE Set $i = 1$, $T_1 = \Theta({\cal R}_{s,t} D)$ and $T_2 = \Theta({\cal R}_{s,t} D\w_n\log(1/\delta))$. 
  \STATE For $j = 0$ to $T_1$, run phase estimation on the multidimensional quantum walk operator $U_{\A\Balt}$ and state $\ket{\psi_s^+}$ to precision $O(\frac{\epsilon^2}{\sqrt{{\cal R}_{s,t} \du_s}\norm{\ket{\p}}})$, where $\epsilon=\frac{1}{2{\cal R}_{s,t} D\w_n}$, and measure the phase register. If the output is ``$0$'', return the resulting state $\ket{\f'}$ and immediately continue to Step $3$. 
  \STATE Measure $\ket{\f'}$ to obtain an outcome $\ket{u,v}$, representing the edge $(u,v) \in E$. Check if $u$ or $v$ is equal to $t$ and if this is the case, return the $\ell$-bit string corresponding to $t$. Otherwise, if $i < T_2$, increment $i$ by $1$ and return to Step $2$.
\end{enumerate}
\end{algorithmic}
\end{algorithm} 

\begin{theorem}\label{thm:randomgraph}
  Let $G$ be a $D$-regular one-dimensional random hierarchical graph on the line supergraph of length $n$ with edge ratios $r_0,\dots,r_{n-1}$. Let $\w_n= \prod_{k=1}^{\lfloor n/2 \rfloor} (\frac{1}{r_{2k-1}})^2$ and let each vertex in $G$ be identified by an $\ell$-bit string where $\ell = \Theta(n)$. Given access to an adjacency list oracle $O_G$ to the graph $G$, there exists a quantum algorithm that solves \cref{probl:random} with success probability $1 - O(\delta)$ with
 \begin{align*}
    &O\left(\|\ket{{\sf p}}\|{\cal R}_{s,t}^{4.5} D^{4.5} {\sf w}_n^3\log(1/\delta)\right) \text{ queries}&O\left(n\|\ket{{\sf p}}\|{\cal R}_{s,t}^{4.5} D^{5.5} {\sf w}_n^3\log(1/\delta)\right) \text{ time}
    \end{align*}
\end{theorem}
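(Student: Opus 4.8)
The plan is to separate the argument into a correctness part and a complexity part, both built on \clm{orthogonal-flow} and on \lem{qwflows} in the form specialised by \cor{qwflows}. Since $s = S_0$ has degree $D$ and every edge incident to $s$ has weight $\w_1 = 1$, we have $\du_s = D$. By \clm{orthogonal-flow} the electrical flow state $\ket{\f}$ of \eq{elecflow1d} satisfies $\Pi_{\Balt}\ket{\f} = 0$, and every flow state satisfies $\Pi_{\cal A}\ket{\f} = 0$ by construction, so $\ket{\f}$ is a $+1$-eigenvector of $U_{\A\Balt}$ and \lem{qwflows} applies exactly as in the proof of \cor{qwflows}, now with $\Balt$ in place of ${\cal B}$ and $p = 1/({\cal R}_{s,t}D)$. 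Thus a single phase-estimation run at the stated precision outputs ``$0$'' with probability $\Theta(1/({\cal R}_{s,t}D))$ and, conditioned on this, leaves a state $\ket{\f'}$ with $\tfrac12\norm{\proj{\f'} - \proj{\f}}_1 \le \epsilon$. Since Step~2 repeats this up to $T_1 = \Theta({\cal R}_{s,t}D) = \Theta(1/p)$ times, one pass of Step~2 produces such an approximate flow state with probability $\Omega(1)$.

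For the sampling in Step~3, I would first compute the measurement statistics of the exact flow state. From the explicit form of $\ket{\f}$ in \eq{elecflow1d} together with \eq{eff-resis}, the probability that measuring $\ket{\f}$ returns an edge lying between $S_{k-1}$ and $S_k$ equals the fraction of the energy carried by that layer, namely $1/(e_k \w_k {\cal R}_{s,t})$. The vertex $t = S_n$ has degree $D$, so the layer adjacent to $t$ consists of $e_n = D$ edges of weight $\w_n$, giving probability $\Theta(1/(D\w_n {\cal R}_{s,t}))$ of sampling an edge incident to $t$. Because $\ket{\f'}$ is $\epsilon$-close to $\ket{\f}$ in trace distance with $\epsilon = 1/(2{\cal R}_{s,t}D\w_n)$, its measurement statistics differ from those of $\ket{\f}$ by at most $\epsilon$, which is half the target probability, so an edge incident to $t$ is still sampled with probability $\Theta(1/(D\w_n {\cal R}_{s,t}))$; checking both endpoints against $t$ then reveals its $\ell$-bit name. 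Combining this with the $\Omega(1)$ success of Step~2, each pass of the outer loop succeeds with probability $\Theta(1/(D\w_n {\cal R}_{s,t}))$, so over its $T_2 = \Theta({\cal R}_{s,t}D\w_n\log(1/\delta))$ iterations a standard bound on the product of per-pass failure probabilities yields an overall success probability of $1 - O(\delta)$.

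For the complexity I would count applications of $U_{\A\Balt}$. Phase estimation to precision $\delta_{\mathrm{prec}} = O\!\left(\epsilon^2/(\sqrt{{\cal R}_{s,t}\du_s}\,\norm{\ket{\p}})\right)$ uses $O(1/\delta_{\mathrm{prec}})$ applications of the walk operator; substituting $\du_s = D$ and $\epsilon = 1/(2{\cal R}_{s,t}D\w_n)$ gives $1/\delta_{\mathrm{prec}} = O({\cal R}_{s,t}^{2.5}D^{2.5}\w_n^2\norm{\ket{\p}})$. The total number of walk-operator applications is then at most $T_1 T_2/\delta_{\mathrm{prec}} = O({\cal R}_{s,t}^{2}D^{2}\w_n\log(1/\delta)) \cdot O({\cal R}_{s,t}^{2.5}D^{2.5}\w_n^2\norm{\ket{\p}})$. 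By \lem{generate-fourier}, each application costs $O(1)$ queries and $O(nD)$ elementary operations, which multiplies out to the claimed $O(\norm{\ket{\p}}{\cal R}_{s,t}^{4.5}D^{4.5}\w_n^3\log(1/\delta))$ queries and $O(n\norm{\ket{\p}}{\cal R}_{s,t}^{4.5}D^{5.5}\w_n^3\log(1/\delta))$ time.

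The main obstacle is the tension surrounding the choice of $\epsilon$: it must be small enough that the $O(\epsilon)$ perturbation of the measurement statistics does not wash out the already tiny $\Theta(1/(D\w_n{\cal R}_{s,t}))$ weight on the edges at $t$, yet each shrinkage of $\epsilon$ enters $1/\delta_{\mathrm{prec}}$ quadratically and thus directly inflates the runtime; carrying the precise $D$- and $\w_n$-dependence through this chain (in particular using $e_n = D$ and $\du_s = D$ at the right places) is the delicate bookkeeping. A secondary point needing care is verifying that the two nested loops compose into a $1 - O(\delta)$ success probability rather than merely a constant, which I would settle by bounding each pass's failure probability and taking the product over the $T_2$ passes.
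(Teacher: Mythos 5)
Your proposal is correct and follows essentially the same route as the paper's proof: establish that $\ket{\f}$ lies in the $+1$-eigenspace of $U_{\A\Balt}$ via \clm{orthogonal-flow} so that \lem{qwflows} applies with $p = 1/({\cal R}_{s,t}D)$, compute the $\Theta\left(1/({\cal R}_{s,t}D\w_n)\right)$ probability of sampling an edge at $t$, choose $\epsilon = 1/(2{\cal R}_{s,t}D\w_n)$ so the trace-distance error does not wash this out, and multiply $T_1 T_2$ by the per-run phase-estimation cost and the per-call cost from \lem{generate-fourier}. All constants and exponents match the paper's accounting, so no further changes are needed.
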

\begin{proof}
  The proof consists of a cost and success probability analysis of \cref{alg:1drandomvertexfinding}. By \cref{lem:qwflows}, each run of the phase estimation algorithm in Step $2$ succeeds with probability at least $\Theta\left(\frac{1}{{\cal R}_{s,t} D}\right)$. Hence, the probability that at least a single out of the $T_1= \Theta({\cal R}_{s,t} D)$ runs succeed is constant.

  Suppose that we had a perfect copy of $\ket{\theta}$, then after measuring it we would obtain an edge $(u,v)\in E$ containing the vertex $t$ with probability 
  $$\frac{1}{{\cal R}_{s,t}}\sum_{u \in \Gamma(t)} \frac{\f_{u,t}^2}{\w_{u,t}} = \frac{1}{{\cal R}_{s,t} D\w_n}.$$
  
  \noindent Instead, we have access to a state $\ket{\f'}$, which by \cref{lem:qwflows} satisfies 
  $$ \frac{1}{2}\norm{\proj{\f'} - \proj{\f}}_1 \leq \epsilon = \frac{1}{2{\cal R}_{s,t} D\w_n}.$$
  Hence by measuring $\ket{\f'}$, we obtain an edge $(u,v) \in E$ that contains the vertex $t$ with probability at least $\Theta\left(\frac{1}{{\cal R}_{s,t} D\w_n}\right)$. The probability that a single out of the at most $T_2= \Theta({\cal R}_{s,t} D\w_n\log(1/\delta))$ repetitions succeeds in returning the vertex $t$ is therefore at least
  $$1 - \left(1 - O\left(\frac{1}{{\cal R}_{s,t} D\w_n}\right)\right)^{T_2} \geq 1 - O(\delta).$$

  \noindent For the cost of Step $2$, each iteration of the phase estimation requires 
  $$O\left(\frac{\norm{\ket{\p}}\sqrt{{\cal R}_{s,t} D}}{\epsilon^2}\right) = O(\norm{\ket{\p}}{\cal R}_{s,t}^{2.5} D^{2.5} \w_n^2)$$
  calls to $U_{\A\Balt}$. By \cref{lem:generate-fourier}, each such call has a cost of $O(1)$ queries and $O(nD)$ elementary operations. Since we can set up the initial state $\ket{\psi_s}$ in the same cost and we run at most $T_1\cdot T_2$ iterations of phase estimation, we find that the total contribution of Step $2$ to the cost is 
\begin{align*}
    &O\left(\|\ket{{\sf p}}\|{\cal R}_{s,t}^{4.5} D^{4.5} {\sf w}_n^3\log(1/\delta)\right) \text{ queries}&O\left(n\|\ket{{\sf p}}\|{\cal R}_{s,t}^{4.5} D^{5.5} {\sf w}_n^3\log(1/\delta)\right) \text{ time}
\end{align*}

  For the cost of Step $3$, we must only verify whether $u$ or $v$ is equal to $t$, which can be done in zero queries and $O(\ell) = O(n)$ elementary operations. So the cost of Step $2$ dominates the total cost of the algorithm.
\end{proof}

\subsubsection{Welded tree problem}\label{sec:weexample}

As an example to show the power of this electrical flow sampling approach, we show that \alg{1drandomvertexfinding} can be used to solve the welded tree problem in polynomial time, thus achieving an exponential speedup compared to any classical algorithms, which was originally shown in \cite{childs2003ExpSpeedupQW}. 

A welded tree graph consists of two full binary trees of depth $h$ and contains $2^{h+2}-2$ vertices. See \fig{welded-tree} for an example of such a graph. The leaves of both trees are connected via two disjoint perfect matchings. This makes it a one-dimensional random hierarchical graph on the line supergraph of length $n=2h+1$. For each $k \in \{0,\dots,2h+1\}$, every node $S_k$ contains 
\begin{equation*}
  s_k=\left\{\begin{array}{ll}
  2^{k} & \mbox{if }k\in\{0,\dots,h\}\\
  2^{2h+1-k} & \mbox{if }k\in\{h+1,\dots,2h+1\},
\end{array}\right.
\end{equation*}
vertices, meaning that its edge ratios are equal to
\begin{equation*}
  r_k=\left\{\begin{array}{ll}
  2 & \mbox{if }k\in\{1,\dots,h\}\\
  \frac{1}{2}& \mbox{if }k\in\{h+1,\dots,2h+1\}.
\end{array}\right.
\end{equation*}

Since $V = 2^{h+2}-2$, we find that $\ell = 2h$ satisfies $2^{\ell} \gg |V|$, meaning each vertex is assigned a $2h$-bit string as an identifier.

\begin{problemo}[The welded tree problem]\label{probl:welded} Given an adjacency list oracle $O_G$ for the welded tree graph $G$ of depth $h$ and the $2h$-bit string associated to the starting vertex $s \in \{0,1\}^{2h}$, the goal is to output the $2h$-bit string associated to the other root $t$.
\end{problemo}

Before we apply \thm{randomgraph} to the welded tree graph, we first obtain a little more insight about its weights $\w_k$, where we assume without loss of generality that $h$ is odd:
\begin{equation}\label{eq:w-welded}
  \w_k=\left\{\begin{array}{ll}
  2^{-2\lfloor k/2\rfloor} & \mbox{if }k\in\{1,\dots,h+1\}\\
  2^{-2(h+1-\lfloor k/2\rfloor)} & \mbox{if }k\in\{h+2,\dots,2h+1\}.
\end{array}\right.
\end{equation}

\begin{theorem}\label{thm:welded}
  Given an adjacency list oracle $O_G$ to the welded tree graph $G$, there exists a quantum algorithm that solves \cref{probl:welded} with success probability $1 - O(\delta)$ and cost
  \begin{align*}
    &O\left(n^{5.5}\log(1/\delta)\right) \text{ queries}, &O\left(n^{6.5}\log(1/\delta)\right) \text{ time}
    \end{align*}
  
\end{theorem}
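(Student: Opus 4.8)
The plan is to treat the welded tree graph as a particular one-dimensional random hierarchical graph and apply \thm{randomgraph} as a black box, so that the whole argument reduces to pinning down the four quantities governing its complexity: the regularity $D$, the effective resistance ${\cal R}_{s,t}$, the terminal weight $\w_n$, and the potential norm $\norm{\ket{\p}}$. The combinatorial data is already assembled in \sec{weexample}: the depth-$h$ welded tree is $3$-regular (up to the two degree-$2$ roots, a boundary effect absorbed into constants) on a line supergraph of length $n = 2h+1$, with layer sizes $s_k$, edge counts $e_k$, and edge ratios $r_k \in \{2,1/2\}$ as listed, so that $D = \Theta(1)$. Using the explicit weights of \eq{w-welded} one checks directly that $\w_n = \w_{2h+1} = \Theta(1)$ (in fact $1/4$), since the alternating factors of $r_k$ in \eq{weight-super} telescope. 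For the effective resistance I would use \eq{eff-resis}: the crucial observation is that every summand is constant, since in the binary-tree regions $e_k\w_k = 2^k \cdot 2^{-2\lfloor k/2\rfloor} \in \{1,2\}$, and the same $\Theta(1)$ bound holds at the weld and throughout the mirrored region. Summing $n$ terms of size $\Theta(1)$ gives ${\cal R}_{s,t} = \Theta(n)$.

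The main obstacle is the potential norm, where $\norm{\ket{\p}}^2 = \frac{2}{{\cal R}_{s,t}} \sum_{u} \p_u^2 \du_u$ forces me to control the potentials explicitly rather than just their aggregate energy. By the layer symmetry of the electrical flow every vertex of $S_k$ carries a common potential $\p^{(k)}$, and Ohm's Law across layer $k$ yields the drop $\p^{(k-1)} - \p^{(k)} = 1/(e_k\w_k) = \Theta(1)$; telescoping from $\p^{(n)} = 0$ then gives $\p^{(k)} = \Theta(n-k)$. I would pair this with the per-layer weighted-degree mass $\sum_{u \in S_k} \du_u = s_k(\w_k + 2\w_{k+1}) = e_k\w_k + 2e_k\w_{k+1} = \Theta(1)$, using $e_{k+1} = 2e_k$ in the tree region. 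Combining the two estimates yields $\sum_u \p_u^2 \du_u = \sum_k \Theta((n-k)^2) = \Theta(n^3)$, whence $\norm{\ket{\p}}^2 = \Theta(n^3)/\Theta(n) = \Theta(n^2)$, i.e. $\norm{\ket{\p}} = \Theta(n)$.

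Finally I would substitute $D = \Theta(1)$, $\w_n = \Theta(1)$, ${\cal R}_{s,t} = \Theta(n)$ and $\norm{\ket{\p}} = \Theta(n)$ into the two bounds of \thm{randomgraph}. The query bound $O(\norm{\ket{\p}}\,{\cal R}_{s,t}^{4.5} D^{4.5} \w_n^3 \log(1/\delta))$ collapses to $O(n \cdot n^{4.5}\log(1/\delta)) = O(n^{5.5}\log(1/\delta))$, and the time bound $O(n\,\norm{\ket{\p}}\,{\cal R}_{s,t}^{4.5} D^{5.5} \w_n^3 \log(1/\delta))$ collapses to $O(n^{6.5}\log(1/\delta))$, as claimed. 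Since $|V| = 2^{h+2}-2$ is exponential in $n = \Theta(h)$, this polynomial-in-$n$ cost is an exponential speedup over the classical query lower bound.
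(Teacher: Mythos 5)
Your proposal is correct and follows essentially the same route as the paper's proof: treat the welded tree as a one-dimensional random hierarchical graph, apply \thm{randomgraph} as a black box, and bound the four quantities $D = \Theta(1)$, $\w_n = \Theta(1)$, ${\cal R}_{s,t} = \Theta(n)$ (via $e_k\w_k = \Theta(1)$ in \eq{eff-resis}) and $\norm{\ket{\p}} = O(n)$, then substitute. The only difference is local and harmless: for the potential norm the paper uses the crude bound $\p_u \le \p_s = {\cal R}_{s,t}$ together with the per-layer mass $\sum_{u \in S_k}\du_u = \Theta(1)$, whereas you compute the layer potentials $\p^{(k)} = \Theta(n-k)$ explicitly by telescoping Ohm's Law, which is slightly sharper (yielding $\norm{\ket{\p}} = \Theta(n)$ rather than $O(n)$) but leads to the same $O(n^{5.5}\log(1/\delta))$ query and $O(n^{6.5}\log(1/\delta))$ time bounds.
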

\begin{proof}
  The theorem can be derived by bounding the quantities ${\cal R}_{s,t}, D, \w_n$ and $\norm{\ket{\p}}$ in \cref{thm:randomgraph}. From \cref{eq:w-welded} we see that $\w_n = 1/2$. Additionally, the effective resistance from \cref{eq:eff-resis} can be computed to find that ${\cal R}_{s,t} =\Theta(n)$. Since $D = 3$ and $\p_s={\cal R}_{s,t}$ is the largest potential value, we only need to bound $\norm{\ket{\p}}$:
  $$\norm{\ket{\p}}^2 = \frac{2}{{\cal R}_{s,t}}\sum_{k=0}^{n}\sum_{u \in S_k} \p_u^2\du_u \leq {\cal R}_{s,t} \sum_{k=0}^{n} s_k \du_u=O(n^2).$$

\end{proof}

The result of \thm{randomgraph} is worse than the state of the art algorithm for the welded tree problem by \cite{jeffery2023multidimensional}, which has cost $O(n)$ queries and $O(n^2)$ time. By approximating the electrical flow, we infer much more information than is actually needed to recover the bit string associated to $t$, but it exemplifies how sampling from the electrical flow can provide an exponential speedup.

\begin{figure}
\centering
\begin{tikzpicture}

\filldraw (-5.5,0) circle (.1); 	\node at (-5.8,.4) {$s$};

	\draw[->] (-5.4,0) -- (-3.6,2); 	\node at (-4.5,1.35) {$\frac{1}{4}$};
	\draw[->] (-5.4,0) -- (-3.6,-2);	\node at (-4.5,-1.35) {$\frac{1}{4}$};

\filldraw[fill=blue] (-3.5,2) circle (.1); 
\filldraw[fill=blue] (-3.5,-2) circle (.1);

	\draw[<-](-2.1,3) -- (-3.4,2);	\node at (-2.75,2.85) {$\frac{1}{4}$};
	\draw[<-] (-2.1,1) --(-3.4,2);
	\draw[<-] (-2.1,-3)-- (-3.4,-2);
	\draw[<-] (-2.1,-1)-- (-3.4,-2) ;

\filldraw (-2,3) circle (.1);
\filldraw (-2,1) circle (.1);
\filldraw (-2,-1) circle (.1);
\filldraw (-2,-3) circle (.1);

	\draw[<-] (-1.9,3) -- (-1.1,3.75);		\node at (-1.6,3.65) {$\frac{1}{16}$};
	\draw[<-] (-1.9,3) -- (-1.1,2.25);
	 \draw[<-] (-1.9,1) -- (-1.43,1.5);
	 \draw[<-] (-1.9,1) -- (-1.43,.5);
	 \draw[<-] (-1.9,-1) -- (-1.43,-.5);
	 \draw[<-] (-1.9,-1) -- (-1.43,-1.5);
	\draw[<-] (-1.9,-3) -- (-1.1,-3.75);
	\draw[<-] (-1.9,-3) -- (-1.1,-2.25);

\filldraw[fill=blue] (-1,3.75) circle (.1);
\filldraw[fill=blue] (-1,2.25) circle (.1);
\filldraw[fill=blue] (-1,-2.25) circle (.1);
\filldraw[fill=blue] (-1,-3.75) circle (.1);

\draw[<-] (-.9,3.75) -- (.9,3.75);		\node at (0,4.2) {$\frac{1}{16}$};
\draw[<-] (-.9,3.75) -- (.9,-2.25);
\draw[<-] (-.9,2.25) -- (.9,2.25);
\draw[<-] (-.9,2.25) -- (.9,3.75);

\filldraw[fill=blue] (5.5,0) circle (.1); \node at (5.9,.4) {$t$};

	\draw[->] (5.4,0) -- (3.6,2);	\node at (4.5,1.35) {$1$};
	\draw[->] (5.4,0) -- (3.6,-2);

\filldraw (3.5,2) circle (.1); 
\filldraw (3.5,-2) circle (.1);

	\draw[<-] (3.4,2) -- (2.1,3);	\node at (2.75,2.85) {$\frac{1}{4}$};
	\draw[<-] (3.4,2) -- (2.1,1);
	\draw[<-] (3.4,-2) -- (2.1,-3);
	\draw[<-] (3.4,-2) -- (2.1,-1);

\filldraw[fill=blue] (2,3) circle (.1);
\filldraw[fill=blue] (2,1) circle (.1);
\filldraw[fill=blue] (2,-1) circle (.1);
\filldraw[fill=blue] (2,-3) circle (.1);

	\draw[<-](1.9,3) -- (1.1,3.75);	\node at (1.6,3.65) {$\frac{1}{4}$};
	\draw[<-] (1.9,3) -- (1.1,2.25);
	 \draw[<-] (1.9,1) -- (1.43,1.5);
	 \draw[<-] (1.9,1) -- (1.43,.5);
	 \draw[<-] (1.9,-1) -- (1.43,-.5);
	 \draw[<-] (1.9,-1) -- (1.43,-1.5);
	\draw[<-] (1.9,-3) -- (1.1,-3.75);
	\draw[<-] (1.9,-3) -- (1.1,-2.25);

\filldraw (1,3.75) circle (.1);
\filldraw (1,2.25) circle (.1);
\filldraw (1,-2.25) circle (.1);
\filldraw (1,-3.75) circle (.1);

\end{tikzpicture}
\caption{The welded tree graph with depth $h=3$: the black vertices are the vertices in $V_{\mathrm{even}}$, where the edge directions are reversed and where all adjacent edges have the same weight and direction.}\label{fig:welded-tree}
\end{figure}
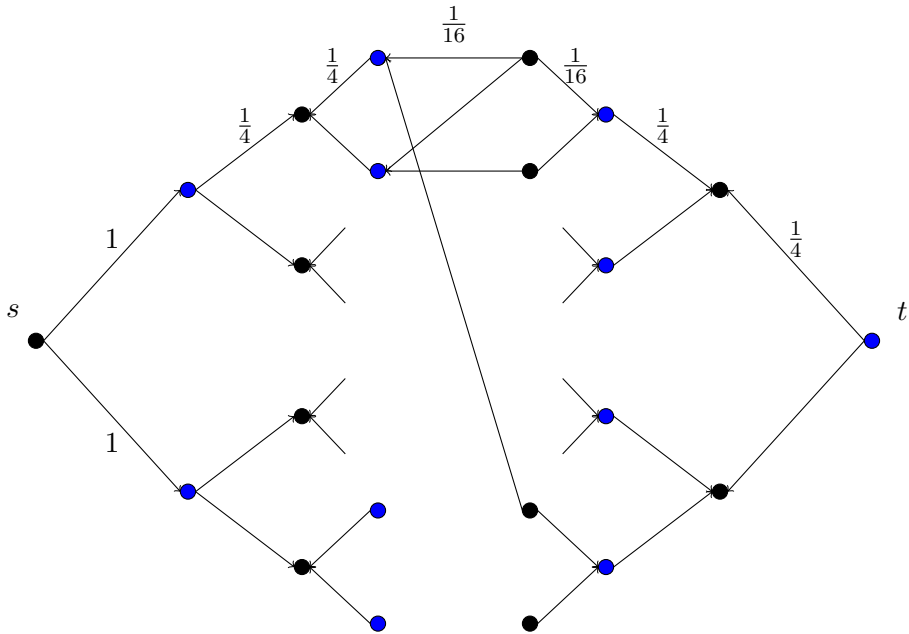

\section{An exponential speedup for pathfinding using alternative electrical flow sampling}\label{sec:path}

In this section, we show that the electrical flow in a multidimensional electrical network can also be used to show an exponential quantum-classical separation for the pathfinding problem relative to an oracle. We achieve this by constructing, and sampling from the \st \textit{alternative} electrical flow that we defined in \defin{flow-alt}, which is the flow achieving minimal energy out of all unit \st flows satisfying Alternative Kirchhoff's law, and we show that it also satisfies Alternative Ohm's Law through explicitly constructing the alternative potential $\palt$. In all of this section we assume that the parameter $n$ is odd for readability, but the everything can be slightly modified to also hold for even $n$.

\subsection{Example graph $G_1$} \label{sec:exaG1}

Since the graph that we will try to find an \st path for is quite large, we start by analysing the \st alternative flow and alternative potential for smaller graphs that will form the building blocks for the larger graph. We start with a network $G_1 = (V,E,\w)$, whose vertex set is given by $V = \{s,v_1,v_2,v_3,v_4,v_5,v_6,v_7,v_8,t\}$. We have visualised $G_1$, with its directed edge set and weights in \fig{pathgraph1}. These directions and weights give rise to the star states $\ket{\psi_u}$ for each $u \in V$, but we will also consider the following additional alternative neighbourhoods for the vertices $v_2,v_3,v_8 \in V$:
\begin{equation}\label{eq:G-star-alt}
\begin{split}
  \ket{\psi_{v_2}^{\textrm{alt}}} &= \sqrt{\frac{2}{3}} \left(-\ket{v_2,v_4} +\frac{1}{2}\ket{v_2,s} +\frac{1}{2}\ket{v_2,v_5}\right),\\
  \ket{\psi_{v_3}^{\textrm{alt}}} &= \sqrt{\frac{2}{3}} \left(\frac{1}{2}\ket{v_3,v_1} -\ket{v_3,v_6} + \frac{1}{2}\ket{v_3,v_7}\right),\\
  \ket{\psi_{v_8}^{\textrm{alt}}} &= \sqrt{\frac{2}{3}} \left(\frac{1}{2}\ket{v_8,t} -\ket{v_8,v_5} + \frac{1}{2}\ket{v_8,v_6}\right).
\end{split}
\end{equation}

Any \st alternative unit flow $\falt$ must be conserved at every vertex and satisfy $\falt_{s,v_1} = x$, $\falt_{s,v_2} = y$ for some $x,y \in [0,1]$ such that $x+y = 1$. For $\falt$ to also satisfy Alternative Kirchhoff's Law (see \defin{kcl-alt}), the flow coming into any vertex $v_2,v_3,v_8$ through the edge with the highest weight, must evenly be distributed along the other two neighbours. This is visualised in \fig{pathgraph1} and we end up with a single parameter $x$ (because $y = 1-x$) that parametrises all possible \st alternative unit flows $\falt$ on $G_1$. The energy of each such $\falt$ can be explicitly calculated to see that ${\cal E} (\falt)= 5y^2+4x^2+3$, and the energy is therefore minimised for $x= 5/9$, resulting in the alternative effective resistance to be ${\cal R}_{s,t}^{\alt} = 47/9$.

\begin{figure}[h!]
\centering
\begin{tikzpicture}

\filldraw (0,0) circle (.1);     \node at (0.2,-0.3) {$s$};
\filldraw[fill=blue] (2,0) circle (.1);          \node at (2,-0.3) {$v_2$};
\filldraw (4,0) circle (.1);     \node at (4.6,0) {$v_{5}$};
\filldraw[fill=blue] (0,4) circle (.1);          \node at (0,4.3) {$v_{3}$};
\filldraw (2,4) circle (.1);     \node at (2,4.3) {$v_{6}$};
\filldraw[fill=blue] (4,4) circle (.1);          \node at (4,4.3) {$v_{8}$};
\filldraw (6,4) circle (.1);  \node at (6,4.3) {$t$};

\filldraw (0,2) circle (.1);         \node at (0.4,2) {$v_{1}$};
\filldraw (2,2) circle (.1);    \node at (2.4,2) {$v_{4}$};
\filldraw (-1.6,2.1) circle (.1);  \node at (-2,2.6) {$v_{7}$};

\draw[->] (0.1,0)--(1.9,0);       \node at (1,0.3) {$1$ {\color{red} $x$}};
\draw[->] (0,0.1)--(0,1.9);       \node at (0.3,1) {$1$ {\color{red} $y$}};
\draw[->] (2.1,0)--(3.9,0);       \node at (3,0.3) {$\frac{1}{4}$ {\color{red} $\frac{x}{2}$}};
\draw[->] (2,0.1)--(2,1.9);       \node at (2.3,0.7) {$\frac{1}{4}$ {\color{red} $\frac{x}{2}$}};
\draw[->] (4,3.9)--(4,0.1);       \node at (4.3,2) {$\frac{1}{4}$ {\color{red} $\frac{1}{2}$}};
\draw[->] (0,2.1)--(0,3.9);       \node at (0.3,3) {$1$ {\color{red} $y$}};
\draw[->] (2,2.1)--(2,3.9);       \node at (2.3,3) {$\frac{1}{4}$ {\color{red} $\frac{x}{2}$}};
\draw[->] (-.05,3.95)--(-1.5,2.15);  \node at (-1,3.5) {$\frac{1}{4}$ {\color{red} $\frac{y}{2}$}};
\draw[->] (0.1,4)--(1.9,4);       \node at (1,4.3) {$\frac{1}{4}$ {\color{red} $\frac{y}{2}$}};
\draw[->] (3.9,4)--(2.1,4);       \node at (3,4.3) {$\frac{1}{4}$ {\color{red} $\frac{1}{2}$}};
\draw[->] (5.9,4)--(4.1,4);       \node at (5,4.3) {$1$ {\color{red} $1$}};
\draw[->] (-1.6,2) to[out=-120,in=-90] (3.95,-0.1); \node at (3.,-1.3) {$\frac{1}{4}$ {\color{red} $\frac{y}{2}$}};

\end{tikzpicture}
\caption{The graph $G_1$ with corresponding edge directions where the blue vertices have an additional alternative neighbour as defined in \eq{G-star-alt}. For each $(u,v) \in \E$, the weights $\w_{u,v}$ are denoted in black and the flow values $\falt_{u,v}$ in red for any valid unit \st alternative flow are parametrised by $x$ and $y = 1-x$.}\label{fig:pathgraph1}
\end{figure}
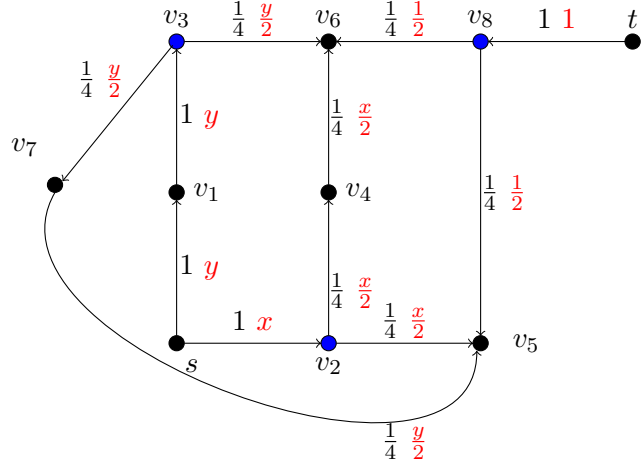

We now explicitly construct the alternative potential $\palt$ corresponding to this \st alternative electrical flow, that satisfies $\palt_{s,v_1} = \palt_{s,v_2} = {\cal R}_{s,t}^{\alt} = 47/9$, $\palt_{t,v_8} = 0$ and Alternative Ohm's Law (see \defin{ohm-alt}). We do this by constructing the states $\ket{\palt_{|u}} \in \mathrm{span}\{\Psi_\star(u)\}$ from \eq{pot-reduced}, where once again we add $\ket{\palt_{|s}}$ for completeness:

\begin{small}
\begin{align*}
  &\ket{\palt_{|s}} = \frac{47}{9} \ket{s,v_1} + \frac{47}{9} \ket{s,v_2}, &&\ket{\palt_{|v_1}} = - \frac{43}{9} \ket{v_1,s} + \frac{43}{9} \ket{v_1,v_3},\\
  &\ket{\palt_{|v_2}} = - \frac{42}{9} \ket{v_2,s} + \frac{38}{9}\sqrt{\frac{1}{4}} \ket{v_2,v_4} + \frac{46}{9}\sqrt{\frac{1}{4}}\ket{v_2,v_5}, 
  &&\ket{\palt_{|v_3}} = -\frac{39}{9}\ket{v_3,v_1} + \frac{52}{9} \sqrt{\frac{1}{4}}\ket{v_3,v_7} + \frac{26}{9} \sqrt{\frac{1}{4}}\ket{v_3, v_6},\\
   &\ket{\palt_{|v_4}} = - 2\sqrt{\frac{1}{4}} \ket{v,u} + 2\sqrt{\frac{1}{4}} \ket{v,t}, 
   &&\ket{\palt_{|v_5}} = - 4 \sqrt{\frac{1}{4}}\ket{v_5,v_8} - 4 \sqrt{\frac{1}{4}}\ket{v_5,v_7}-4\sqrt{\frac{1}{4}}\ket{v_5,v_2},\\
  &\ket{\palt_{|v_6}} = - 2\sqrt{\frac{1}{4}} \ket{v_6,v_8} - 2\sqrt{\frac{1}{4}} \ket{v_6,v_4}- 2 \sqrt{\frac{1}{4}} \ket{v_6,v_3}, &&\ket{\palt_{|v_7}} = - \frac{44}{9}\sqrt{\frac{1}{4}}\ket{v_7,v_3} + \frac{44}{9} \sqrt{\frac{1}{4}}\ket{v_7,v_5},\\
  &\ket{\palt_{|v_8}} = - \ket{v_8,t} + 0 \sqrt{\frac{1}{4}} \ket{v_8,v_6})+ 2\sqrt{\frac{1}{4}} \ket{v_8,v_5}, 
  &&\ket{\palt_{|t}} = 0 \ket{t,v_8}.
  \end{align*}
\end{small}

It is straightforward to verify that these states indeed satisfy Alternative Ohm's Law as well as the equations $\palt_{s,v_1} = \palt_{s,v_2} = 47/9$, $\palt_{t,v_8} = 0$. It is also clear that $\ket{\palt_{|u}} \in \mathrm{span}\{\Psi_\star(u)\}$ for every $u$ without additional alternative neighbourhoods, i.e. $u \in \{s,v_1,v_4,v_5,v_6,v_7,t\}$, since all edge potentials are the same. For $u \in \{v_2,v_3,v_8 \}$, we can confirm that $\ket{\palt_{|u}} \in \mathrm{span}\{\Psi_\star(u)\}$ by calculating that all the amplitudes of $\ket{\palt_{|u}}$ sum to $0$. 

\subsection{Example graph $G_2$}\label{sec:exG2}
The second example graph $G_2 = (V,E,\w)$ (see \fig{pathgraph2}) is build by combining the graph $G_1$ (see \fig{pathgraph1}) with three welded tree graphs $W_1,W_2,W_3$ (see \fig{welded-tree}). The ``starting'' root of these three welded tree graphs are $w_1, w_4$ and $w_6$ respectively. In the next section we will compose our final graph for the pathfinding example from multiple such $G_2$ graphs. 

As discussed in \sec{weexample}, the welded tree graph is an example of a one-dimensional random hierarchical graph with nodes $\{S_0, S_1,\cdots, S_n\}$. We additionally saw that for the weight assignments, edge directions and alternative neighbourhoods in \sec{weexample}, we ended up with an \st electrical flow that matched the \st alternative electrical flow, as it also satisfied Alternative Kirchhoff's Law. From the perspective of electrical networks, we can therefore interpret each welded tree graph $W_i$ as an edge of resistance ${\cal R}_i$, but we will formalise this intuition shortly. The weights and directions of $W_1$ in $G_2$ match those from \sec{weexample}, so ${\cal R}_1= {\cal R}$, where ${\cal R}$ is the effective resistance of a welded tree graph of depth $n$ (see \eq{eff-resis}). The weights of $W_2$ and $W_3$ have been multiplied by a factor of $1/4$, and their edge directions are reversed (because their respective roots are $w_4$ and $w_6$), so we have ${\cal R}_2={\cal R}_3 = 4{\cal R}$.

In $G_2$, the motivation for the alternative neighbourhoods, edge directions and weight assignments in the network $G_1$ become clear. Just like for the one-dimensional random hierarchical graphs in \sec{1d}, these assignments induces a partition of $V$ into $V_{\mathrm{even}}$ and $V_{\mathrm{odd}}$ (visualised by blue vertices in \fig{pathgraph2}). For each vertex $u \in V_{\mathrm{even}}$, all adjacent edges have the same weight and direction, allowing us to easily generate the star state $\ket{\psi_u}$. For each $u \in V_{\mathrm{odd}} \backslash \{s,t\}$, we have $\ket{\psi_u} \in \Psi_{\star}(u) = \hat{\Psi}_\star(u)$. Like in \sec{1d}, we can assume without loss of generality that we know for any $u \in V$ whether it belongs to $V_{\mathrm{even}}$ or $V_{\mathrm{odd}}$ by keeping track of the parity of the distance from $s$ that is initially 0, and flips every time the algorithm takes a step.

Since the welded tree graph sends through all flow coming into one root to the other, any \st alternative unit flow on $G_2$ is equivalent to a \st alternative unit flow on $G_1$, with the addition that we also have flow running through each welded tree graph. By \fig{pathgraph1} and \fig{pathgraph2} we therefore see that the energy of a \st alternative unit flow $\falt$ can be decomposed by the energy in $G_1$ in addition with the energy on these welded tree graphs and is hence given by
$${\cal E} (\falt)= 5y^2+4x^2+3 + {\cal R}_1y^2 + {\cal R}_2\left(\frac{x}{2}\right)^2+ {\cal R}_3\left(\frac{y}{2}\right)^2 = (5 + 2{\cal R})y^2+(4 + {\cal R})x^2+3.$$
This is minimised by taking $x = (5+ 2{\cal R})/(9 + 3{\cal R})$, meaning $y = 1-x = (4 + {\cal R})/(9 + 3{\cal R})$. For readability, we actually keep $x$ in the resulting alternative effective resistance, but simplify it slightly by making use of that for these values of $x$ and $y$ we have $(5+ 2{\cal R})y = (4 + {\cal R})x$:
$${\cal R}_{s,t}^{\alt} = (5 + 2{\cal R})y^2+(4 + {\cal R})x^2+3 = (4 + {\cal R})(x^2 + xy) + 3 = (4 + {\cal R})x + 3.$$

We now explicitly construct the alternative potential $\palt$ corresponding to this \st alternative electrical flow, that satisfies $\palt_{s,w_1} = \palt_{s,v_2} = {\cal R}_{s,t}^{\alt} = (4 + {\cal R})x + 3$, $\palt_{t,v_5} = 0$ and Alternative Ohm's Law. We do this by constructing the states $\ket{\palt_{|u}} \in \mathrm{span}\{\Psi_\star(u)\}$ from \eq{pot-reduced}. We slightly abuse notation however and only show the edges visible in \fig{pathgraph2}, meaning we will not explicitly write down the amplitudes and basis states for edges inside the welded tree graphs:

\begin{small}
\begin{align*}
  &\ket{\palt_{|s}} = (3+5y+2{\cal R}y) \ket{s,w_1} + (3+4x+{\cal R}x) \ket{s,v_2}, &&\ket{\palt_{|w_1}} = - (3+4y+2{\cal R}y) \ket{w_1,s},\\
  &\ket{\palt_{|w_2}} = (3+4y+{\cal R}y) \ket{w_2,v_1}, &&\ket{\palt_{|w_3}} = - (2+2x+2{\cal R}x)\sqrt{\frac{1}{4}} \ket{w_3,v_2},\\
  &\ket{\palt_{|w_4}} = (2+2x)\sqrt{\frac{1}{4}} \ket{w_4,v_3}, &&\ket{\palt_{|w_5}} = -(4+2y+2{\cal R}y) \ket{w_5,v_1},\\
  &\ket{\palt_{|w_6}} = (4+2y) \ket{w_6,v_4}, &&\ket{\palt_{|t}} = 0 \ket{t,v_5},
\end{align*}
\begin{align*}
  &\ket{\palt_{|v_1}} = -(3+3y+{\cal R}y)\ket{v_1,w_2} + (4+4y+2{\cal R}y) \sqrt{\frac{1}{4}}\ket{v_1,w_5} + (2+2y)\sqrt{\frac{1}{4}}\ket{v_1, v_3},\\
  &\ket{\palt_{|v_2}} = (2+4x+2{\cal R}x)\sqrt{\frac{1}{4}} \ket{v_2,w_3}+(4+2x)\sqrt{\frac{1}{4}}\ket{v_2,v_4}-(3+3x+{\cal R}x)\ket{v_2,s}, \\
  &\ket{\palt_{|v_3}} = - 2\sqrt{\frac{1}{4}}\ket{v_3,v_5} - 2\sqrt{\frac{1}{4}} \ket{v_3,w_4}- 2 \sqrt{\frac{1}{4}} \ket{v_3,v_1}, \\
  &\ket{\palt_{|v_4}} = - 4\sqrt{\frac{1}{4}}\ket{v_4,v_5} - 4 \sqrt{\frac{1}{4}}\ket{v_4,v_2}- 4 \sqrt{\frac{1}{4}}\ket{v_4,w_6},\\
  &\ket{\palt_{|v_5}} = - \ket{v_5,t} + 0 \sqrt{\frac{1}{4}} \ket{v_5,v_3})+ 2\sqrt{\frac{1}{4}} \ket{v_5,v_4}. \\  
\end{align*}
\end{small}

It is straightforward to verify that these states indeed satisfy Alternative Ohm's Law for all edges outside the welded tree graphs as well as the equations $\palt_{s,w_1} = \palt_{s,v_2} = {\cal R}_{s,t}^{\alt}$, since $(2 + 5{\cal R})y = (4 + {\cal R})x$ and that $\palt_{t,v_5} = 0$. It is also clear that $\ket{\palt_{|u}} \in \mathrm{span}\{\Psi_\star(u)\}$ for every $u \in \{s,v_3,v_4,t\}$, since all edge potentials are the same. For $u \in \{v_1,v_2,v_5\}$, we can confirm that $\ket{\palt_{|u}} \in \mathrm{span}\{\Psi_\star(u)\}$ by calculating that all the amplitudes of $\ket{\palt_{|u}}$ sum to $0$. For the edges in the welded tree graphs, we have seen in \sec{weexample} that the \st alternative electrical flow through each welded tree graph satisfies Ohm's Law. This means there exist potential values for all vertices (and hence edges), that are smaller than the potential at the root where the flows enters, in at each welded tree graph that satisfy Alternative Ohm's Law. These are consistent with our potential $\palt$ since
$$(\palt_{w_1,s} - \palt_{w_2,v_1})\frac{1}{y} = (\palt_{w_3,v_2} - \palt_{w_4,v_3})\frac{1}{x} = \left(\palt_{w_5,v_1} - \palt_{w_6,v_4}\right)\frac{1}{y} = {\cal R}.$$

Recall from the proof of \thm{welded} that for a welded tree graph of depth $n$ we have ${\cal R} = \Theta(n)$, meaning that ${\cal R}_{s,t}^{\alt} = \Theta(n)$. For the alternative potential, since for each edge potential we have $\palt_{u,v} = O(n)$, we find by \eq{pot-reduced} that
$$||\ket{\palt}||^2 = \frac{2}{{\cal R}_{s,t}^{\alt}}\sum_{(u,v) \in E} (\palt_{u,v})^2\w_{u,v} = O(n) \sum_{(u,v) \in E} \w_{u,v} = O(n^2).$$

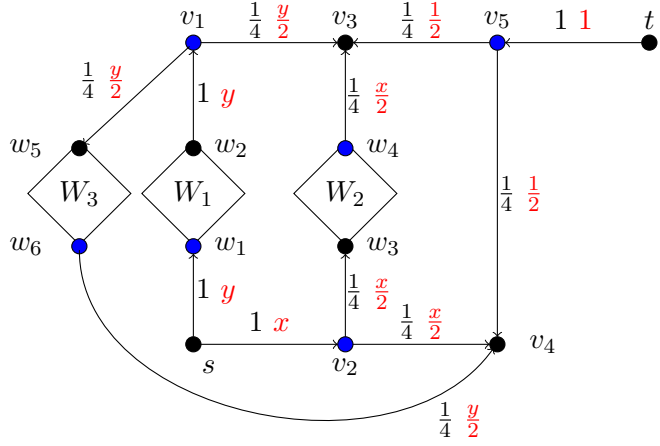
\begin{figure}
\centering
\begin{tikzpicture}

\filldraw (0,0) circle (.1);     \node at (0.2,-0.3) {$s$};
\filldraw[fill=blue] (2,0) circle (.1);          \node at (2,-0.3) {$v_{2}$};
\filldraw (4,0) circle (.1);     \node at (4.6,0) {$v_{4}$};
\filldraw[fill=blue] (0,4) circle (.1);          \node at (0,4.3) {$v_{1}$};
\filldraw (2,4) circle (.1);     \node at (2,4.3) {$v_{3}$};
\filldraw[fill=blue] (4,4) circle (.1);          \node at (4,4.3) {$v_{5}$};
\filldraw (6,4) circle (.1);  \node at (6,4.3) {$t$};

\node[diamond,
  draw = black,
  minimum width = 1cm,
  minimum height = 1.3cm] (d) at (0,2) {$W_1$};
\node[diamond,
  draw = black,
  minimum width = 1cm,
  minimum height = 1.3cm] (d) at (2,2) {$W_2$};
\node[diamond,
  draw = black,
  minimum width = 1cm,
  minimum height = 1.3cm] (d) at (-1.5,2) {$W_3$};

\filldraw[fill=blue](0,1.3) circle (.1);         \node at (0.5,1.3) {$w_{1}$};
\filldraw (2,1.3) circle (.1);    \node at (2.5,1.3) {$w_{3}$};
\filldraw (0,2.6) circle (.1);    \node at (0.5,2.6) {$w_{2}$};
\filldraw[fill=blue] (2,2.6) circle (.1);         \node at (2.5,2.6) {$w_{4}$};
\filldraw[fill=blue] (-1.5,1.3) circle (.1);        \node at (-2.2,1.3) {$w_{6}$};
\filldraw (-1.5,2.6) circle (.1);  \node at (-2.2,2.6) {$w_{5}$};

\draw[->] (0.1,0)--(1.9,0);       \node at (1,0.3) {$1$ {\color{red} $x$}};
\draw[->] (0,0.1)--(0,1.2);       \node at (0.3,0.7) {$1$ {\color{red} $y$}};
\draw[->] (2.1,0)--(3.9,0);       \node at (3,0.3) {$\frac{1}{4}$ {\color{red} $\frac{x}{2}$}};
\draw[->] (2,0.1)--(2,1.2);       \node at (2.3,0.7) {$\frac{1}{4}$ {\color{red} $\frac{x}{2}$}};
\draw[->] (4,3.9)--(4,0.1);       \node at (4.3,2) {$\frac{1}{4}$ {\color{red} $\frac{1}{2}$}};
\draw[->] (0,2.7)--(0,3.9);       \node at (0.3,3.3) {$1$ {\color{red} $y$}};
\draw[->] (2,2.7)--(2,3.9);       \node at (2.3,3.3) {$\frac{1}{4}$ {\color{red} $\frac{x}{2}$}};
\draw[->] (-.05,3.95)--(-1.45,2.65);  \node at (-1.2,3.5) {$\frac{1}{4}$ {\color{red} $\frac{y}{2}$}};
\draw[->] (0.1,4)--(1.9,4);       \node at (1,4.3) {$\frac{1}{4}$ {\color{red} $\frac{y}{2}$}};
\draw[->] (3.9,4)--(2.1,4);       \node at (3,4.3) {$\frac{1}{4}$ {\color{red} $\frac{1}{2}$}};
\draw[->] (5.9,4)--(4.1,4);       \node at (5,4.3) {$1$ {\color{red} $1$}};
\draw[->] (-1.5,1.25) to[out=-90,in=-125] (3.95,-0.05); \node at (3.5,-1) {$\frac{1}{4}$ {\color{red} $\frac{y}{2}$}};

\end{tikzpicture}
\caption{The graph $G_2$ with corresponding edge directions where the blue vertices are the vertices in $V_{\mathrm{odd}}$ and have the alternative neighbourhoods $\Psi_{\star}(u) = \hat{\Psi}_\star(u)$ (see \defin{fourier}). Each diamond, indexed by $i \in [3]$ represents a welded tree graph of depth $n$. For each $(u,v) \in \E$, the weights $\w_{u,v}$ are denoted in black and the flow values $\falt_{u,v}$ in red for any valid unit \st alternative flow are parametrised by $x$ and $y = 1-x$. The black vertices are the vertices in $V_{\mathrm{even}}$, where the edge directions are swapped and where adjacent edges have the same weight and direction.}\label{fig:pathgraph2}
\end{figure}

We can now invoke \thm{qwflows} to approximate the state $\ket{\falt}$. Since the energy along the \st path $(s,v_2),(v_2,v_4),(v_4,v_5),(v_5,t)$ contains a constant fraction of the energy ${\cal R}_{s,t}^{\alt}$, we could then sample from this state to recover a \st path. However, since this path is of constant length, any classical algorithm can also recover this path by an exhaustive search of its neighbours in constant time.

\subsection{The welded tree circuit graph $G$}\label{sec:pathfinding}
In this section, we construct a welded tree circuit graph $G$ by connecting $n$ graphs isomorphic to $G_2$ from \sec{exG2} (see \fig{pathgraph2}) as a path as indicated in \fig{graph} and define a pathfinding problem for this type of graph.

Each layer contains three welded tree graphs $W_1,W_2,W_3$ and the following $7$ vertices 
\begin{align*}
  V_{p,i} := \{v_{p,i,j}: j \in [7]\}.
\end{align*}
These layers are connected through the fact that $v_{p,i,7} = v_{p,(i+1),2}$ for every $i \in [n-1]$. The welded tree graphs structure is shown in \fig{welded-trees} for $j = 1$ (the edge directions are simply reversed for $j \in \{2,3\}$) and the weight assignment and edge directions for these welded tree graphs, as well as for the remaining edges, are the same as for the graph $G_2$ in \fig{pathgraph2}. The complete graph $G$ is shown in \fig{graph}. 
Due to this construction, each vertex have degree $3$ except for the vertices $s = v_{p,1,1}$ and $t = v_{p,n,7}$. It therefore induces the same partition of $V$ into $V_{\mathrm{even}}$ and $V_{\mathrm{odd}}$ as in $G_2$ (visualised by blue vertices in \fig{graph}). For each vertex $u \in V_{\mathrm{even}}$, all adjacent edges have the same weight and direction, allowing us to easily generate the star state $\ket{\psi_u}$. For each $u \in V_{\mathrm{odd}} \backslash \{s,t\}$, we have $\ket{\psi_u} \in \Psi_{\star}(u) = \hat{\Psi}_\star(u)$. 

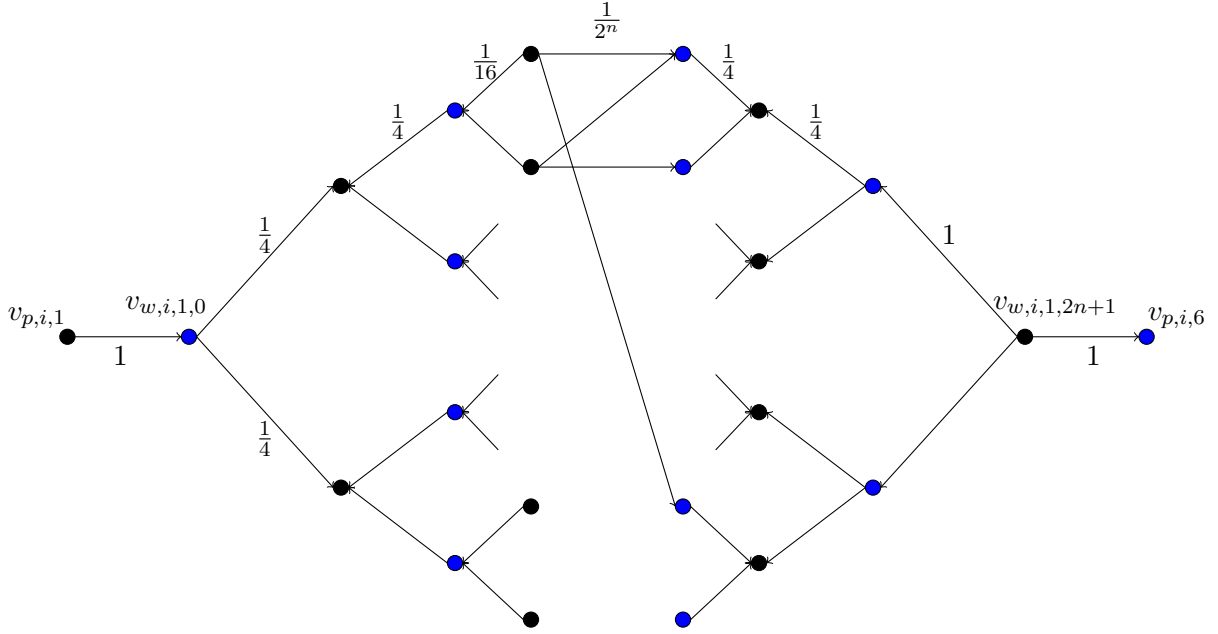
\begin{figure}
\centering
\begin{tikzpicture}
\filldraw (-7.1,0) circle (.1); 		\node at (-7.5,.25) {$v_{p,i,1}$};

	\draw[->] (-7,0)--(-5.6,0);	\node at (-6.4,-.25) {$1$};

\filldraw[fill=blue] (-5.5,0) circle (.1); 	\node at (-5.8,.4) {$v_{w,i,1,0}$};

	\draw[->] (-5.4,0) -- (-3.6,2); 	\node at (-4.5,1.35) {$\frac{1}{4}$};
	\draw[->] (-5.4,0) -- (-3.6,-2);	\node at (-4.5,-1.35) {$\frac{1}{4}$};

\filldraw (-3.5,2) circle (.1); 
\filldraw (-3.5,-2) circle (.1);

	\draw[<-] (-3.4,2) -- (-2.1,3);	\node at (-2.75,2.85) {$\frac{1}{4}$};
	\draw[<-] (-3.4,2) -- (-2.1,1);
	\draw[<-] (-3.4,-2) -- (-2.1,-3);
	\draw[<-] (-3.4,-2) -- (-2.1,-1);

\filldraw[fill=blue] (-2,3) circle (.1);
\filldraw[fill=blue] (-2,1) circle (.1);
\filldraw[fill=blue] (-2,-1) circle (.1);
\filldraw[fill=blue] (-2,-3) circle (.1);

	\draw[<-] (-1.9,3) -- (-1.1,3.75);		\node at (-1.6,3.65) {$\frac{1}{16}$};
	\draw[<-] (-1.9,3) -- (-1.1,2.25);
	 \draw[<-] (-1.9,1) -- (-1.43,1.5);
	 \draw[<-] (-1.9,1) -- (-1.43,.5);
	 \draw[<-] (-1.9,-1) -- (-1.43,-.5);
	 \draw[<-] (-1.9,-1) -- (-1.43,-1.5);
	\draw[<-] (-1.9,-3) -- (-1.1,-3.75);
	\draw[<-] (-1.9,-3) -- (-1.1,-2.25);

\filldraw (-1,3.75) circle (.1);
\filldraw (-1,2.25) circle (.1);
\filldraw (-1,-2.25) circle (.1);
\filldraw (-1,-3.75) circle (.1);

\draw[->] (-.9,3.75) -- (.9,3.75);		\node at (0,4.2) {$\frac{1}{2^{n}}$};
\draw[->] (-.9,3.75) -- (.9,-2.25);
\draw[->] (-.9,2.25) -- (.9,2.25);
\draw[->] (-.9,2.25) -- (.9,3.75);

\filldraw[fill=blue] (7.1,0) circle (.1); 		\node at (7.5,.25) {$v_{p,i,6}$};

	\draw[->] (5.6,0)--(7,0);	\node at (6.4,-.25) {$1$};

\filldraw (5.5,0) circle (.1); \node at (5.9,.4) {$v_{w,i,1,2n+1}$};

	\draw[->] (5.4,0) -- (3.6,2);	\node at (4.5,1.35) {$1$};
	\draw[->] (5.4,0) -- (3.6,-2);

\filldraw[fill=blue] (3.5,2) circle (.1); 
\filldraw[fill=blue] (3.5,-2) circle (.1);

	\draw[->] (3.4,2) -- (2.1,3);	\node at (2.75,2.85) {$\frac{1}{4}$};
	\draw[->] (3.4,2) -- (2.1,1);
	\draw[->] (3.4,-2) -- (2.1,-3);
	\draw[->] (3.4,-2) -- (2.1,-1);

\filldraw (2,3) circle (.1);
\filldraw (2,1) circle (.1);
\filldraw (2,-1) circle (.1);
\filldraw (2,-3) circle (.1);

	\draw[<-] (1.9,3) -- (1.1,3.75);	\node at (1.6,3.65) {$\frac{1}{4}$};
	\draw[<-] (1.9,3) -- (1.1,2.25);
	 \draw[<-] (1.9,1) -- (1.43,1.5);
	 \draw[<-] (1.9,1) -- (1.43,.5);
	 \draw[<-] (1.9,-1) -- (1.43,-.5);
	 \draw[<-] (1.9,-1) -- (1.43,-1.5);
	\draw[<-] (1.9,-3) -- (1.1,-3.75);
	\draw[<-] (1.9,-3) -- (1.1,-2.25);

\filldraw[fill=blue] (1,3.75) circle (.1);
\filldraw[fill=blue] (1,2.25) circle (.1);
\filldraw[fill=blue] (1,-2.25) circle (.1);
\filldraw[fill=blue] (1,-3.75) circle (.1);

\end{tikzpicture}
\caption{The $1$st welded tree graph in the $i$'th layer. For $j \in \{2,3\}$ the edge directions are simply reversed. The black vertices are the vertices in $V_{\mathrm{even}}$, where the edge directions are reversed and where adjacent edges have the same weight and direction.}\label{fig:welded-trees}
\end{figure}

\begin{figure}
\centering
\begin{tikzpicture}

\node at (2,5) {$i=1$};
\filldraw (0,0) circle (.1);     \node at (0.2,-0.3) {$s = v_{p,1,1}$};
\filldraw[fill=blue] (2,0) circle (.1);          \node at (2,-0.3) {$v_{p,1,2}$};
\filldraw (4,0) circle (.1);     \node at (4.6,0) {$v_{p,1,3}$};
\filldraw[fill=blue] (0,4) circle (.1);          \node at (0,4.3) {$v_{p,1,6}$};
\filldraw (2,4) circle (.1);     \node at (2,4.3) {$v_{p,1,5}$};
\filldraw[fill=blue] (4,4) circle (.1);          \node at (4,4.3) {$v_{p,1,4}$};

\node[diamond,
  draw = black,
  minimum width = 1cm,
  minimum height = 1.3cm] (d) at (0,2) {$W_1$};
\node[diamond,
  draw = black,
  minimum width = 1cm,
  minimum height = 1.3cm] (d) at (2,2) {$W_2$};
\node[diamond,
  draw = black,
  minimum width = 1cm,
  minimum height = 1.3cm] (d) at (-1.5,2) {$W_3$};

\filldraw[fill=blue] (0,1.3) circle (.1);         \node at (0.7,1.3) {$v_{w,1,1,0}$};
\filldraw (2,1.3) circle (.1);    \node at (3,1.3) {$v_{w,1,2,2n+1}$};
\filldraw (0,2.6) circle (.1);    \node at (1,2.6) {$v_{w,1,1,2n+1}$};
\filldraw[fill=blue] (2,2.6) circle (.1);         \node at (2.9,2.6) {$v_{w,1,2,0}$};
\filldraw[fill=blue] (-1.5,1.3) circle (.1);        \node at (-2.6,1.3) {$v_{w,1,3,0}$};
\filldraw (-1.5,2.6) circle (.1);  \node at (-2.6,2.6) {$v_{w,1,3,2n+1}$};

\draw[->] (0.1,0)--(1.9,0);       \node at (1,0.3) {$1$};
\draw[->] (0,0.1)--(0,1.2);       \node at (0.2,0.7) {$1$};
\draw[->] (2.1,0)--(3.9,0);       \node at (3,0.3) {$\frac{1}{4}$};
\draw[->] (2,0.1)--(2,1.2);       \node at (2.2,0.7) {$\frac{1}{4}$};
\draw[->] (4,3.9)--(4,0.1);       \node at (4.3,2) {$\frac{1}{4}$};
\draw[->] (0,2.7)--(0,3.9);       \node at (0.2,3.3) {$1$};
\draw[->] (2,2.7)--(2,3.9);       \node at (2.2,3.3) {$\frac{1}{4}$};
\draw[->] (-.05,3.95)--(-1.45,2.65);  \node at (-1,3.5) {$\frac{1}{4}$};
\draw[->] (0.1,4)--(1.9,4);       \node at (1,4.3) {$\frac{1}{4}$};
\draw[->] (3.9,4)--(2.1,4);       \node at (3,4.3) {$\frac{1}{4}$};
\draw[->] (5.9,4)--(4.1,4);       \node at (5,4.3) {$1$};
\draw[->] (-1.5,1.25) to[out=-90,in=-125] (3.95,-0.05); \node at (3.5,-1) {$\frac{1}{4}$};

\node at (8,5) {$i=2$};
\filldraw[fill=blue] (6,0) circle (.1);     \node at (5.7,-0.3) {$v_{p,2,6}$};
\filldraw (8,0) circle (.1);          \node at (8,-0.3) {$v_{p,2,5}$};
\filldraw[fill=blue] (10,0) circle (.1);     \node at (10.6,0.3) {$v_{p,2,4}$};
\filldraw (6,4) circle (.1);          \node at (6,4.3) {$v_{p,2,1}$};
\filldraw[fill=blue] (8,4) circle (.1);     \node at (8,4.3) {$v_{p,2,2}$};
\filldraw (10,4) circle (.1);          \node at (10,4.3) {$v_{p,2,3}$};

\node[diamond,
  draw = black,
  minimum width = 1cm,
  minimum height = 1.3cm] (d) at (6,2) {$W_1$};
\node[diamond,
  draw = black,
  minimum width = 1cm,
  minimum height = 1.3cm] (d) at (8,2) {$W_2$};
\node[diamond,
  draw = black,
  minimum width = 1cm,
  minimum height = 1.3cm] (d) at (11.5,2) {$W_3$};

\filldraw (6,1.3) circle (.1);         \node at (7,1.3) {$v_{w,2,1,2n+1}$};
\filldraw[fill=blue] (8,1.3) circle (.1);    \node at (8.8,1.3) {$v_{w,2,2,0}$};
\filldraw[fill=blue] (6,2.6) circle (.1);    \node at (6.7,2.6) {$v_{w,2,1,0}$};
\filldraw (8,2.6) circle (.1);         \node at (9,2.6) {$v_{w,2,2,2n+1}$};
\filldraw (11.5,1.3) circle (.1);        \node at (12.6,1.3) {$v_{w,2,3,2n+1}$};
\filldraw[fill=blue] (11.5,2.6) circle (.1);  \node at (12.3,2.6) {$v_{w,2,3,0}$};

\draw[->] (6.1,0)--(7.9,0);     \node at (7,0.3) {$\frac{1}{4}$};
\draw[->] (6,1.2)--(6,0.1);     \node at (6.2,0.7) {$1$};
\draw[->] (9.9,0)--(8.1,0);     \node at (9,0.3) {$\frac{1}{4}$};
\draw[->] (8,1.2)--(8,0.1);     \node at (8.3,0.7) {$\frac{1}{4}$};
\draw[->] (10,0.1)--(10,3.9);    \node at (10.3,2) {$\frac{1}{4}$};
\draw[->] (6,3.9)--(6,2.7);     \node at (6.2,3.3) {$1$};
\draw[->] (8,3.9)--(8,2.7);     \node at (8.2,3.3) {$\frac{1}{4}$};
\draw[->] (11.45,2.65)--(10.05,3.95);\node at (11,3.5) {$\frac{1}{4}$};
\draw[->] (6.1,4)--(7.9,4);     \node at (7,4.3) {$1$};
\draw[->] (8.1,4)--(9.9,4);     \node at (9,4.3) {$\frac{1}{4}$};
\draw[->] (11.9,0)--(10.1,0);    \node at (11.5,-0.3) {$1$};
\draw[->] (6,-0.1) to[out=-45,in=-90] (11.5,1.2); \node at (6.5,-1) {$\frac{1}{4}$};

\node at (12.5,0) {$\cdots$};

\node at (5,-2) {$i=n$};
\filldraw (3,-7) circle (.1);     \node at (3,-7.3) {$v_{p,n,1}$};
\filldraw[fill=blue] (5,-7) circle (.1);          \node at (5,-7.3) {$v_{p,n,2}$};
\filldraw (7,-7) circle (.1);     \node at (7.6,-7) {$v_{p,n,3}$};
\filldraw[fill=blue] (3,-3) circle (.1);          \node at (3,-2.7) {$v_{p,n,6}$};
\filldraw (5,-3) circle (.1);     \node at (5,-2.7) {$v_{p,n,5}$};
\filldraw[fill=blue] (7,-3) circle (.1);          \node at (7,-2.7) {$v_{p,n,4}$};
\filldraw (9,-3) circle (.1);          \node at (9.3,-2.7) {$t=v_{p,n,7}$};
\draw[->] (9,-3)--(7.1,-3);       \node at (8,-2.7) {$1$};

\node[diamond,
  draw = black,
  minimum width = 1cm,
  minimum height = 1.3cm] (d) at (3,-5) {$W_1$};
\node[diamond,
  draw = black,
  minimum width = 1cm,
  minimum height = 1.3cm] (d) at (5,-5) {$W_2$};
\node[diamond,
  draw = black,
  minimum width = 1cm,
  minimum height = 1.3cm] (d) at (1.5,-5) {$W_3$};

\filldraw[fill=blue] (3,-5.7) circle (.1);         \node at (3.7,-5.7) {$v_{w,n,1,0}$};
\filldraw (5,-5.7) circle (.1);    \node at (6,-5.7) {$v_{w,n,2,2n+1}$};
\filldraw (3,-4.4) circle (.1);    \node at (4,-4.4) {$v_{w,n,1,2n+1}$};
\filldraw[fill=blue] (5,-4.4) circle (.1);         \node at (5.9,-4.4) {$v_{w,n,2,0}$};
\filldraw[fill=blue] (1.5,-5.7) circle (.1);        \node at (0.4,-5.7) {$v_{w,n,3,0}$};
\filldraw (1.5,-4.4) circle (.1);  \node at (0.4,-4.4) {$v_{w,n,3,2n+1}$};

\draw[->] (3.1,-7)--(4.9,-7);       \node at (4,-6.7) {$1$};
\draw[->] (3,-6.9)--(3,-5.8);       \node at (3.2,-6.3) {$1$};
\draw[->] (5.1,-7)--(6.9,-7);       \node at (6,-6.7) {$\frac{1}{4}$};
\draw[->] (5,-6.9)--(5,-5.8);       \node at (5.2,-6.3) {$\frac{1}{4}$};
\draw[->] (7,-3.1)--(7,-6.9);       \node at (7.3,-5) {$\frac{1}{4}$};
\draw[->] (3,-4.3)--(3,-3.1);       \node at (3.2,-3.7) {$1$};
\draw[->] (5,-4.3)--(5,-3.1);       \node at (5.2,-3.7) {$\frac{1}{4}$};
\draw[->] (2.95,-3.05)--(1.55,-4.35);  \node at (2,-3.5) {$\frac{1}{4}$};
\draw[->] (3.1,-3)--(4.9,-3);       \node at (4,-2.7) {$\frac{1}{4}$};
\draw[->] (6.9,-3)--(5.1,-3);       \node at (6,-2.7) {$\frac{1}{4}$};
\draw[->] (2.9,-7)--(0.9,-7);       \node at (1.5,-7.3) {$1$};
\draw[->] (1.5,-5.75) to[out=-90,in=-125] (6.95,-7.05); \node at (6.5,-8) {$\frac{1}{4}$};

\node at (0.3,-7) {$\cdots$};

\end{tikzpicture}
\caption{The welded tree circuit graph $G$ showing all edge directions and edge weights. The blue vertices are the vertices in $V_{\mathrm{odd}}$ and have the alternative neighbourhoods $\Psi_{\star}(u) = \hat{\Psi}_\star(u)$ (see \defin{fourier}). The black vertices are the vertices in $V_{\mathrm{even}}$, where the edge directions are swapped and where adjacent edges have the same weight and direction. Each diamond, indexed by $j \in [3]$ represents the $j'$-th welded tree graph in that layer. See \fig{welded-trees} for a detailed overview of the welded tree graph's structure.}\label{fig:graph}
\end{figure}
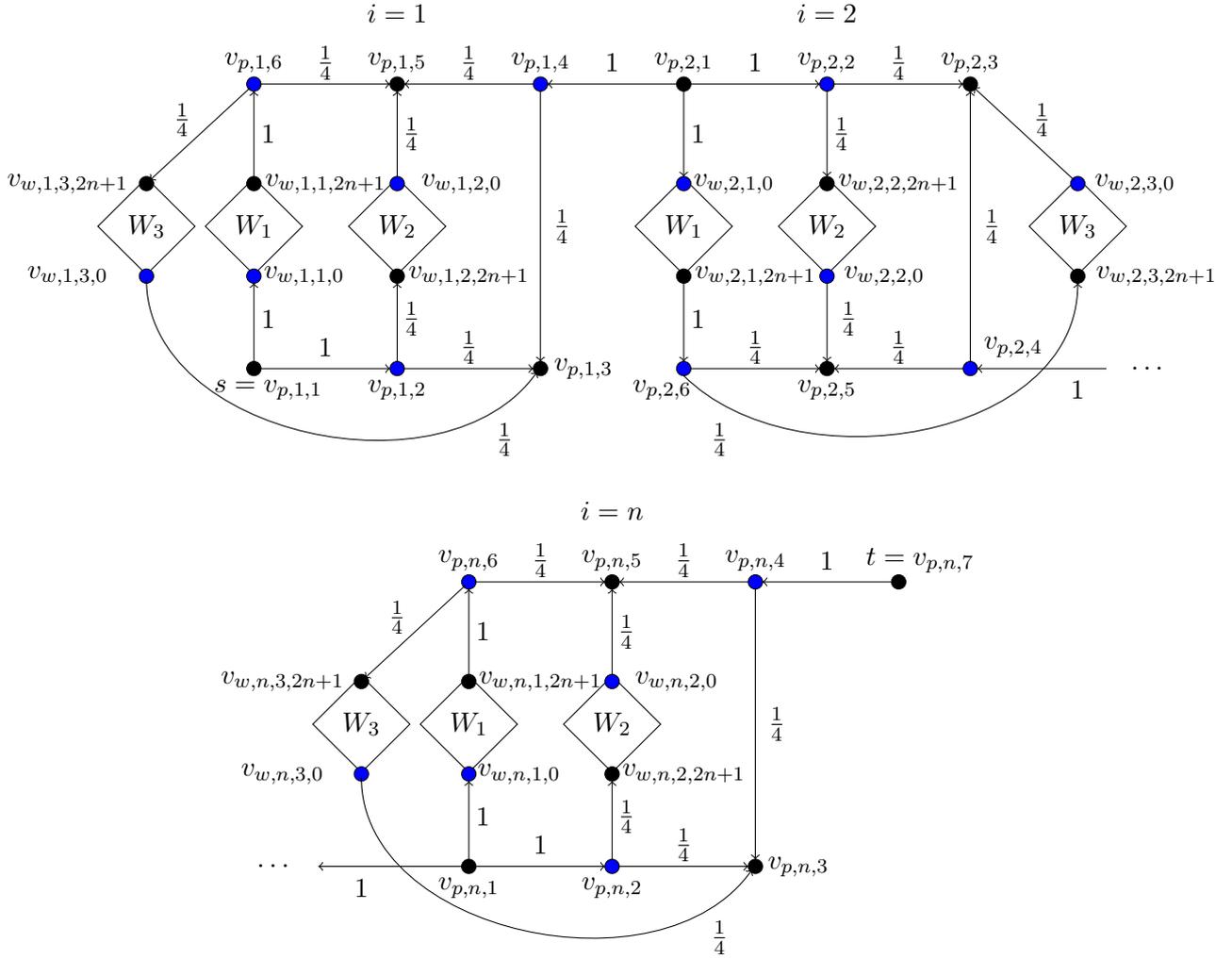

All these names $v_{a,b,c}$ to refer to vertices are simply for notation purposes to properly define the graph. Similar to the setting in \sec{1d}, we assign a random name from the set $\{0,1\}^{3n}$ to each vertex $u \in V$. To access the neighbours of a particular vertex, we are given quantum access to an adjacency list oracle $O_G$ for the graph $G$. Given an $3n$-bit string $\sigma \in \{0,1\}^{3n}$ corresponding to a vertex $u \in V$, the adjacency list oracle $O_G$ provides the bit strings of the neighbouring vertices in $\N(u)$. If $\sigma$ does not correspond to any vertex, which will most often be the case than not since $2^{3n} \gg |V|$, the oracle instead returns $\perp$. 

As the graph $G$ consists of $n$ identical subgraphs isomorphic to $G_2$, the flow and potential vector analysis almost directly follows from \sec{exG2}. Starting with the \st alternative electrical flow $\falt$, we can obtain this flow by simply connecting $n$ \st alternative electrical flows on each copy of $G_2$. This results in an alternative effective resistance ${\cal R}_{s,t}^{\alt} = \Theta(n^2)$. The alternative potential $\palt$ can also be obtained directly from combining all the alternative potentials from each copy of $G_2$, where we add $((4+{\cal R})x + 3)(n-i)$ to each edge potential obtained from the copy of $G_2$ in the $i$'th layer. This way we ensure that for every $i \in [n-1]$
\begin{align*}
  &\ket{\palt_{v_{p,i+1,1}}} = ((4+{\cal R})x + 3)(n-i)\ket{\psi_{v_{p,i+1,1}}},
\end{align*}
meaning $||\ket{\palt}|| = O(n^2).$ We now consider the following problem on the graph $G$, for which we exhibit a quantum algorithm that can solve the given problem exponentially faster than any classical algorithm can. 
\begin{problemo}[The pathfinding problem on a welded tree circuit graph] \label{probl:weldedGpath} Given an adjacency list oracle $O_G$ to the welded tree circuit graph $G$ (as defined in \sec{pathfinding}) and the names of the starting vertex $s=0^{3n}$, the goal is to output the names of vertices of an \st path.
\end{problemo}

\subsection{The algorithm}
In this section, we provide a quantum algorithm that can find the \st shortest path in the welded tree circuit graph $G$ and hence solves \probl{weldedGpath} in polynomial time.

\begin{algorithm}[h!]
\caption{Quantum algorithm for solving \probl{weldedGpath}}
\begin{algorithmic}\label{alg:path}
\REQUIRE Graph $G$ as defined in \sec{pathfinding}, the starting vertex $s = 0^{3n}$, a success probability parameter $\delta >0$.
\ENSURE The labels of an \st path on $G$.
\begin{enumerate}
  \STATE Set $i = 1$, $S = \emptyset$, $T_1 = \Theta(n^2)$ and $T_2 = \Theta(n^2\log(n/\delta))$. 
  \STATE For $j = 0$ to $T_1$, run phase estimation on the multidimensional quantum walk operator $U_{\A\Balt}$ and state $\ket{\psi_s^+}$ to precision $O(\epsilon^2/n^2)$, where $\epsilon=O(1/n^2)$, and measure the phase register. If the output is ``$0$'', return the resulting state $\ket{\f'}$ and immediately continue to Step $3$. 
  \STATE Measure $\ket{\f'}$ to obtain an outcome $\ket{u,v}$, representing the edge $(u,v) \in E$, and add it to $S$. If $i < T_2$, increment $i$ by $1$ and return to Step $2$.
  \STATE Search through $S$ using Breadth First Search for an \st path and output the path if it is found.
\end{enumerate}
\end{algorithmic}
\end{algorithm} 

\begin{theorem}\label{thm:weldedG}
 Given an adjacency list oracle $O_G$ to the welded tree circuit graph, there exists a quantum algorithm that solves \cref{probl:weldedGpath} with success probability $1 - O(\delta)$ and cost
  \begin{align*}
    &O\left(n^{11}\log(n/\delta)\right) \text{ queries}, &O\left(n^{12}\log(n/\delta)\right) \text{ time}.
    \end{align*}
\end{theorem}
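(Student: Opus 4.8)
The plan is to analyse \alg{path} with the same two-stage template used in the proof of \thm{randomgraph}, except that we now generate the \emph{alternative} electrical flow state $\ket{\falt}$ through \thm{qwflows} and sample from it enough times to reconstruct an entire \st path rather than to identify a single marked vertex. The three quantities feeding into \thm{qwflows} were already established for the welded tree circuit graph in \sec{pathfinding}: the alternative effective resistance ${\cal R}_{s,t}^{\alt}=\Theta(n^2)$, the weighted degree $\du_s=\Theta(1)$ (the source is incident to two unit-weight edges), and the alternative potential norm $\norm{\ket{\palt}}=O(n^2)$. Since neither $s$ nor $t$ carries additional alternative neighbourhoods, all hypotheses of \thm{qwflows} are satisfied and we may sample from a state $\epsilon$-close to $\ket{\falt}$.

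First I would bound the cost of a single phase estimation run. Substituting the quantities above into the precision demanded by \thm{qwflows} gives
\[
O\!\left(\frac{\epsilon^2}{\sqrt{{\cal R}_{s,t}^{\alt}\,\du_s}\,\norm{\ket{\palt}}}\right)=O\!\left(\frac{\epsilon^2}{n^3}\right)=O\!\left(\frac{1}{n^{7}}\right),
\]
using $\epsilon=O(1/n^2)$, so each run applies $U_{\A\Balt}$ a total of $O(n^7)$ times. By the direct analogue of \lem{generate-fourier} for this graph — whose proof is unchanged once one observes that $D=3$ and that vertex labels have length $3n$ — every application of $U_{\A\Balt}$ costs $O(1)$ queries and $O(n)$ elementary operations, so one run costs $O(n^7)$ queries and $O(n^8)$ time. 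The algorithm performs at most $T_1T_2=\Theta(n^2)\cdot\Theta(n^2\log(n/\delta))=\Theta(n^4\log(n/\delta))$ runs, which yields the claimed $O(n^{11}\log(n/\delta))$ queries and $O(n^{12}\log(n/\delta))$ time; the initial state preparation and the final breadth-first search are negligible by comparison.

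For correctness I would first invoke \thm{qwflows} (via \lem{qwflows}) to see that each run outputs ``$0$'' with probability $\Theta\!\left(1/({\cal R}_{s,t}^{\alt}\du_s)\right)=\Theta(1/n^2)$, so across the $T_1=\Theta(n^2)$ inner repetitions of Step~2 a ``$0$'' outcome — and hence a state $\ket{\f'}$ with $\tfrac12\norm{\proj{\f'}-\proj{\falt}}_1\le\epsilon$ — is obtained with at least constant probability. The heart of the argument is to fix the explicit \st path $P$ that runs along the spine of the circuit, formed by concatenating the $n$ direct segments $(s,v_2),(v_2,v_4),(v_4,v_5),(v_5,t)$ of the copies of $G_2$ analysed in \sec{exG2}; this path has $O(n)$ edges. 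By the flow computation in \sec{pathfinding}, every edge $(u,v)\in P$ carries flow $\falt_{u,v}=\Theta(1)$ at weight $\w_{u,v}=\Theta(1)$, so measuring the ideal state $\ket{\falt}$ returns the edge $(u,v)$ with probability $\falt_{u,v}^2/({\cal R}_{s,t}^{\alt}\w_{u,v})=\Theta(1/n^2)$. Because the trace-distance bound forces the measurement statistics of $\ket{\f'}$ to differ from those of $\ket{\falt}$ by at most $\epsilon=O(1/n^2)$, taking the hidden constant in $\epsilon$ small enough keeps the probability of sampling each edge of $P$ at $\Omega(1/n^2)$. A coupon-collector argument then shows that $T_2=\Theta(n^2\log(n/\delta))$ samples place all $O(n)$ edges of $P$ into $S$ with probability $1-O(\delta)$: each fixed edge is missed with probability at most $(1-\Omega(1/n^2))^{T_2}\le\delta/O(n)$, and a union bound over the $O(n)$ edges finishes the claim. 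Once $S\supseteq P$, the breadth-first search of Step~4, run on a subgraph with only $O(n^2\log(n/\delta))$ edges, returns an \st path, solving \probl{weldedGpath}.

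The main obstacle lies in the correctness half rather than the accounting. One must (i) pin down the explicit path $P$ and verify the per-edge lower bound $\falt_{u,v}=\Theta(1)$ from the alternative electrical flow and potential constructed in \sec{exG2} and \sec{pathfinding}, and (ii) carefully compose the two independent inverse-polynomial events — the $\Theta(1/n^2)$ per-run phase estimation success and the $\Theta(1/n^2)$ per-edge sampling probability — while ensuring the $\epsilon$-approximation never erases the inverse-polynomial signal. A minor bookkeeping point is that an outer iteration yields a sample only when Step~2 finds a ``$0$'', which happens with constant (not unit) probability; since $T_2$ outer iterations are run, a Chernoff bound guarantees $\Theta(n^2\log(n/\delta))$ genuine samples and only affects constants in the coupon-collector estimate.
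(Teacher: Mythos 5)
Your proposal is correct and follows essentially the same route as the paper's own proof: invoke \thm{qwflows} with ${\cal R}_{s,t}^{\alt}=\Theta(n^2)$, $\du_s=\Theta(1)$, $\norm{\ket{\palt}}=O(n^2)$ to get $O(n^7)$ walk-operator calls per run, lower-bound the probability $\Theta(1/n^2)$ of sampling each of the $O(n)$ spine-path edges from the (approximate) flow state, union-bound over the path, and multiply $T_1T_2$ runs by the per-call cost from \lem{generate-fourier}. Your explicit Chernoff-bound treatment of the fact that only a constant fraction of outer iterations yield a genuine sample is a slightly more careful rendering of a step the paper leaves implicit, but it does not change the argument.
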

\begin{proof}
  The proof consists of a cost and success probability analysis of \cref{alg:path}, where we focus on the success probability that the algorithm outputs the path
  \begin{equation*}
    {\cal P} = ((s,v_{p,1,2}),(v_{p,1,2},v_{p,1,3}),(v_{p,1,3},v_{p,1,4}),\dots,(v_{p,n,4},t)).
  \end{equation*}
  
  We apply \cref{thm:qwflows}, which states that each run of phase estimation in Step $2$ succeeds with a probability of at least $\Theta\left(\frac{1}{{\cal R}_{s,t}^{\alt}}\right) = \Theta\left(\frac{1}{n^2}\right)$. Hence, the probability that at least a single out of the $T_1= \Theta(n^2)$ runs succeed is constant.

  Suppose that we had a perfect copy of $\ket{\falt}$, then after measuring it we would obtain an edge $(u,v) \in {\cal P}$
  with probability at least
  $$\min\limits_{(u,v) \in {\cal P}}\frac{1}{{\cal R}_{s,t}^{\alt}}\frac{(\falt_{u,v})^2}{\w_{u,v}} = \Omega\left(\frac{1}{n^2}\right).$$
  
  \noindent Instead, we have access to a state $\ket{\f'}$, which by \cref{thm:qwflows} satisfies 
  $$ \frac{1}{2}\norm{\proj{\f'} - \proj{\f}}_1 \leq \epsilon = O\left(\frac{1}{n^2}\right).$$
  Hence by measuring $\ket{\f'}$, we obtain an edge $(u,v) \in E$ that contains the vertex $t$ with probability at least $\Omega\left(\frac{1}{n^2}\right)$. The probability that all edges in ${\cal P}$ are present in $S$ after reaching Step $4$ is due to the union bound therefore at least
  $$1 - \abs{{\cal P}}\left(1 - O\left(\frac{1}{n^2}\right)\right)^{T_2} \geq 1 - O(\delta).$$

  \noindent For the cost of Step $2$, each iteration of the phase estimation requires 
  $$O\left(\frac{\|\ket{\palt}\|\sqrt{{\cal R}_{s,t}^{\alt}}}{\epsilon^2}\right) = O(n^7)$$
  calls to $U_{\A\Balt}$. By \cref{lem:generate-fourier}, each such call has a cost of $O(1)$ queries and $O(n)$ elementary operations. Since we can set up the initial state $\ket{\psi_s}$ in the same cost and we run at most $T_1\cdot T_2$ iterations of phase estimation, we find that the total contribution of Step $2$ to the cost is 
  \begin{align*}
    &O\left(n^{11}\log(n/\delta)\right) \text{ queries}, &O\left(n^{12}\log(n/\delta)\right) \text{ time}.
    \end{align*}

  For the cost of Step $4$, we must only do a Breadth First Search to search for any \st path in the subgraph defined by the edges in $S$. Since identifying the vertex $s$ and $t$ can both be done using a single operation due to them having a distinct degrees, the total cost of this step is $O(T_2) = O(n^2\log(n/\delta))$ queries and other basic operations. So the cost of Step $2$ dominates the total cost of the algorithm.
\end{proof}

\subsection{Classical lower bound}
In this section we show that our \alg{path} actually provides an exponential speedup compared to any classical algorithm under the assumption that the following welded tree pathfinding problem is classically hard. To simplify the proof of our lower bound for the pathfinding problem \probl{weldedGpath}, we use the following assumption and the known classical lower bound of the welded tree problem.

\begin{problemo}[The welded tree pathfinding problem]\label{probl:weldedpath} Given an adjacency list oracle $O_G$ to the welded tree graph $G$ and the names of the starting vertex $s$ and the ending vertex $t$, the goal is to output the names of the vertices of an \st path.
\end{problemo}

It is folklore that the welded tree pathfinding problem is classically difficult, however, there is no formal statement as far as we are aware. 
\begin{assumpt}\label{assump:weldedpath} There exist constants $c_1 > 0$ and $c_2 \in (0,2)$ such that any classical algorithm that makes at most $2^{n/6}$ number of queries to $O_G$ to the welded tree graph $G$ solves \probl{weldedpath} with probability at most $c_1 \cdot 2^{-c_2n}$.
\end{assumpt}

\begin{lemma}[Theorem 9 in \cite{childs2003ExpSpeedupQW}] \label{lem:weldedlowerbound} 
For the welded tree problem \probl{welded}, any classical algorithm that makes at most $2^{n/6}$ queries to the oracle $O_G$ finds the ending vertex or a cycle with probability at most $4\cdot 2^{-n/6}$.
\end{lemma}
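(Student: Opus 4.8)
The statement is Theorem~9 of \cite{childs2003ExpSpeedupQW}, so the plan is to reproduce their information-theoretic \emph{lazy-sampling} argument against an arbitrary classical algorithm. First I would fix a deterministic algorithm (randomised ones follow by averaging over their coins) and make the standard observation that, since every vertex carries an independent uniformly random name drawn from a space of size $2^{\Omega(n)} \gg |V|$, the only names the algorithm can ever feed to $O_G$ are names it has already received as the output of an earlier query, starting from $s$. Its entire interaction is therefore summarised by a growing connected \emph{revealed subgraph} $R$ rooted at $s$, and I may sample the random labelling and the two welding matchings \emph{lazily}: each time a new vertex or a new welding edge is exposed, its identity is drawn uniformly from the as-yet-unused possibilities consistent with the welded-tree structure. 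Declare that the algorithm \emph{wins} if at some query it exposes the name of $t$, or exposes an edge both of whose endpoints already lie in $R$ (a cycle). The crux of the setup is that, conditioned on not having won, $R$ is always a tree isometrically embedded in $G$ and the freshly revealed names are uniform over the unused label space, so the algorithm gains \emph{no} information that could bias its choice of which frontier edge to traverse next; its trajectory is thus a blind exploration of the welded tree.

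The cycle probability is the easier of the two bounds, handled by a birthday-type estimate. Inside either binary tree a newly exposed edge always leads to a fresh parent or child, so no cycle can form there; a cycle can only arise when an exposed welding edge lands on an \emph{already-revealed} vertex on the opposite side. By lazy sampling, each such edge hits a uniformly random leaf among the $\Theta(2^{h})$ leaves of that side, of which at most $|R|\le 2^{n/6}$ are revealed, so a single query closes a cycle with probability at most $2^{n/6}/2^{h}$. Since $2^{h}=\Theta(2^{n/2})$, a union bound over the at most $2^{n/6}$ queries gives total cycle-probability $O\!\left(2^{n/6}\cdot 2^{n/6}/2^{n/2}\right)=O(2^{-n/6})$.

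The probability of \emph{reaching} $t$ is the heart of the argument and the step I expect to be the main obstacle. The point is that $t$ can only be exposed as a neighbour of one of the two vertices adjacent to it, i.e. by the frontier of $R$ actually arriving at the far root, and I must show this is exponentially unlikely in $2^{n/6}$ queries. The mechanism is that each crossing of the random welding sends the explorer to a \emph{uniformly random} leaf, deep in the exponentially wide middle columns, while the far root $t$ is the unique vertex of the extreme column. One formalises this by tracking the ``column'' (distance-type) index of the frontier and showing that, conditioned on no cycle, it evolves like an essentially unbiased walk that is repeatedly flung back into the wide middle region, so that blindly climbing the far tree to its unique root behaves like searching for one distinguished vertex in an exponentially large subtree; the walk reaches the extreme column within $2^{n/6}$ steps only with probability $O(2^{-n/6})$. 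Combining this with the cycle bound via a final union bound yields the claimed success probability $4\cdot 2^{-n/6}$. The delicate part, and where I would spend the most care, is making rigorous the claim that no information leaks to guide this climb—that conditioned on the non-winning history the frontier position is genuinely uniform over its column—so that the column process really is the claimed undirected walk rather than something the algorithm can steer toward $t$.
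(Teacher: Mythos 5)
Note first that the paper itself does not prove \lem{weldedlowerbound}: it is imported as Theorem 9 of \cite{childs2003ExpSpeedupQW}, so your proposal has to be judged against that original argument (and against the paper's adaptation of it in the proof of \thm{lower-bound}). Your scaffolding matches it: reducing name-finding to the ``find $t$ or find a cycle'' game via random names, lazy sampling, and the random embedding of the query tree is exactly the sequence of games in \cite{childs2003ExpSpeedupQW}, and your birthday-type estimate for cycles (welding edges landing on uniformly random leaves, of which only $|R|\le 2^{n/6}$ out of $\Theta(2^{n/2})$ are revealed) is the right mechanism for that part, up to bookkeeping. (Your side claim that ``no cycle can form'' inside a tree is slightly too strong --- two revealed welding landings can, e.g., turn out to be siblings, and this collision probability is part of the improperness bound in \cite{childs2003ExpSpeedupQW} --- but it is controlled by the same birthday estimate, so this is minor.)

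The genuine gap is in the step you yourself flag as the heart of the argument: the probability of reaching $t$. The mechanism you propose --- the column index of the frontier ``evolves like an essentially unbiased walk'' and therefore reaches the extreme column within $2^{n/6}$ steps with probability only $O(2^{-n/6})$ --- is quantitatively false. An unbiased walk of length $L = 2^{n/6}$ typically wanders $\sqrt{L} = 2^{n/12}$ levels, far more than the $\Theta(n)$ levels separating the far leaves from $t$; by the reflection principle it reaches height $\Theta(n)$ within $L$ steps with probability $1-o(1)$, so your bound collapses. The correct mechanism, used both in \cite{childs2003ExpSpeedupQW} and in the paper's own proof of \thm{lower-bound}, is structural rather than a concentration bound: conditioned on no cycle, the image of each root-to-leaf path of the query tree $T$ is a \emph{simple} path in $G$ (equivalently, a non-backtracking walk), and in a binary tree a non-backtracking walk that ever steps away from the root can never again step toward it. Hence after entering the far tree at a leaf, reaching its root $t$ requires picking the unique parent edge at \emph{every} one of the $\Theta(n)$ consecutive internal vertices, each pick being correct with probability $1/2$ under the random embedding; one attempt succeeds with probability $2^{-\Theta(n)}$, and a union bound over the at most $2^{n/6}$ welding crossings per path and $2^{n/6}$ paths gives the exponentially small exit probability. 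This is precisely the ``$\pi$ must follow the corresponding path $\ldots$ which has probability $2^{-2n}$'' step in \thm{lower-bound}. Your concern about information leakage is legitimate but is already disposed of by the game reductions; the idea actually missing from your sketch is this simple-path/non-backtracking observation, and without it the exit bound cannot be salvaged.
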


We follow the proof of the lower bound proof in \cite{li2023exponential}, which in turn is based on the lower bound proof in \cite{childs2003ExpSpeedupQW}. To prove the lower bound, we analyse the difficulty of any classical algorithm ${\cal A}$ winning a simpler game:

  \paragraph{Game A} Let $n$ be odd and let $G$ be the graph as defined in \sec{pathfinding}. Let Game A be the game where any classical algorithm $\mathcal{A}$ wins if it outputs the name of one of the vertex $v_{p,(n+1)/2,1}$, or if the vertices visited by $\mathcal{A}$ contain a cycle. Following \cite{childs2003ExpSpeedupQW}, the additional cycle condition that allows $\mathcal{A}$ to win in Game A allows us to analyse the success probability of ${\cal A}$ winning. This analysis involves determining whether a random embedding of a random rooted binary tree into the random graph $G$ contains a cycle or the vertex $v_{p,(n+1)/2,1}$. 
  
  Given the starting vertex $s$, the random embedding of a rooted binary tree $T$ into the graph $G$ is defined as a function $\pi$ from the vertices of $T$ to the vertices of $G$ such that $\pi(\textsf{ROOT}) = s$ and such that for any $(u,v) \in E$, we also have that $\pi(u),\pi(v)$ are neighbours in $T$. We say that an embedding $\pi$ is proper if $\pi(u)\neq \pi(v)$ for $u\neq v$. We say that $T$ exits under $\pi$ if $\pi(v) = v_{p,(n+1)/2,1}$. The random embedding can be obtained as follows: 
  \begin{enumerate}
    \item Set $\pi(\textsf{ROOT}) = s$.
    \item Let $i$ and $j$ be the two neighbours of $\textsf{ROOT}$ in $T$ and let $u$ and $v$ be the neighbours of $s$ in $G$. With probability $1/2$ set $\pi(i)= u$ and $\pi(j)=v$, and with probability $1/2$ set $\pi(i)= v$ and $\pi(j)=u$. 
    \item For any vertex $i$ in $T$, if $i$ is not a leaf and $\pi(v) \notin \{s,v_{p,(n+1)/2,1}\}$, let $j$ and $k$ denote the children of vertex $i$, and let $\ell$ denote its parent. Let $u$ and $v$ be the two neighbours of $\pi(i)$ in $G$ other than $\pi(\ell)$. With probability $1/2$ set $\pi(j)= u$ and $\pi(k)=v$, and with probability $1/2$ set $\pi(j)= v$ and $\pi(k)=u$.
  \end{enumerate}

\begin{theorem}\label{thm:lower-bound}
  Let $G$ be the graph defined in \sec{pathfinding}. Let $c_1,c_2$ be the constants from \assump{weldedpath} and assume that this assumption is true. Then any classical algorithm that makes at most $2^{n/6}$ queries to $O_G$ solves \probl{weldedGpath} with probability at most $(5+c_1)\cdot 2^{-\min\{c_2,1/6\}n}$.
\end{theorem}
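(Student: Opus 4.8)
The plan is to follow the lower-bound template of \cite{childs2003ExpSpeedupQW}, adapted to the circuit construction exactly as in \cite{li2023exponential}. The first move is to replace the adaptive classical algorithm ${\cal A}$ by a purely combinatorial object. Since every vertex of $G$ carries an independent uniformly random name in $\{0,1\}^{3n}$ and $O_G$ only reveals the names of the neighbours of an already-known vertex, the subgraph that ${\cal A}$ uncovers after at most $2^{n/6}$ queries is, conditioned on no query returning a previously seen name, distributed exactly as a random embedding $\pi$ of a rooted binary tree $T$ with $|T|\le 2^{n/6}$ into $G$, built by the three-step procedure described before Game A. A coincidence between two discovered names is precisely the event that $\pi$ closes a cycle. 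Hence it suffices to analyse this random embedding and, in particular, to bound the probability of winning Game A, i.e.\ the probability that $\pi$ either reaches $v_{p,(n+1)/2,1}$ or contains a cycle.

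Next I would bound $\Pr[{\cal A}\text{ wins Game A}]$. By the layered bottleneck structure of the welded tree circuit graph in \sec{pathfinding}, the marker vertex $v_{p,(n+1)/2,1}$ sits off the direct $p$-path and is separated from $s$ by the welded tree graphs of the first $(n-1)/2$ layers; reaching it under a random embedding therefore forces the embedded tree to cross a depth-$n$ welded tree graph from one root to the opposite root. This is exactly the event controlled by \lem{weldedlowerbound}: with at most $2^{n/6}$ queries, the probability of finding the far root or closing a cycle is at most $4\cdot 2^{-n/6}$. Adding the $O(2^{-n/6})$ contribution from cycles formed elsewhere during the embedding gives a bound of the shape
\[
  \Pr[{\cal A}\text{ wins Game A}] \le 5\cdot 2^{-n/6}.
\]

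It remains to reduce solving the pathfinding problem to the two hard events. Following \cite{li2023exponential}, I would argue that any run of ${\cal A}$ that outputs a valid \st path can be converted into one that either wins Game A --- because producing such a path requires the explored subgraph to connect $s$ through the middle layer, which either discovers $v_{p,(n+1)/2,1}$ or closes a cycle --- or else threads one of the middle welded tree graphs from root to root, thereby solving an instance of the welded tree pathfinding problem \probl{weldedpath}, whose success probability is at most $c_1\cdot 2^{-c_2 n}$ by \assump{weldedpath}. Combining the two bounds through a union bound yields
\[
  \Pr[{\cal A}\text{ solves the pathfinding problem}] \le 5\cdot 2^{-n/6} + c_1\cdot 2^{-c_2 n} \le (5+c_1)\cdot 2^{-\min\{c_2,1/6\}n},
\]
where the last step simply uses $2^{-n/6},\,2^{-c_2 n} \le 2^{-\min\{c_2,1/6\}n}$.

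The main obstacle is the embedding argument underpinning the first two paragraphs: one must verify rigorously that, conditioned on no cycle having yet appeared, the names returned by $O_G$ remain distributed as fresh uniform strings, so that an adaptive ${\cal A}$ gains no ability to steer toward $v_{p,(n+1)/2,1}$ beyond what the random embedding achieves, and that the welded tree crossings encountered are fresh enough for \lem{weldedlowerbound} and \assump{weldedpath} to apply independently. Making the dichotomy of the third paragraph airtight --- that every \st path ${\cal A}$ could output genuinely forces one of these two controlled events --- is where the real care lies; the final combination of the exponential bounds is routine.
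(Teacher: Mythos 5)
Your proposal follows the paper's template (Game A, the random tree embedding, \lem{weldedlowerbound}, \assump{weldedpath}, final union bound), but there is a genuine flaw in how you distribute the two hardness ingredients over the events. Your second paragraph claims $\Pr[\mathcal{A}\text{ wins Game A}] \le 5\cdot 2^{-n/6}$ using only \lem{weldedlowerbound} plus an $O(2^{-n/6})$ allowance for ``cycles formed elsewhere''. This cannot be established: Game A counts \emph{every} cycle as a win, and in the welded tree circuit graph a cycle can pass through \emph{both} roots of a welded tree subgraph. Such a cycle is available to an algorithm that first reaches the far side of a layer through a neighbouring welded tree (so that it legitimately knows the names of both roots of the tree in question) and then finds a root-to-root path inside it; that last event is exactly an instance of \probl{weldedpath}, and the only bound available for it is $c_1\cdot 2^{-c_2 n}$ from \assump{weldedpath}, which may be far larger than $2^{-n/6}$ when $c_2 < 1/6$. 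So the Assumption must enter the analysis of Game A itself, not be deferred to a separate ``threading'' event. This is precisely what the paper does: it splits the cycle event into cycles meeting one root of a welded tree (bounded by $4\cdot 2^{-n/6}$ via \lem{weldedlowerbound}) and cycles meeting both roots (bounded by $c_1\cdot 2^{-c_2 n}$ via \assump{weldedpath}), and only then adds the middle-vertex event. Your final sum happens to contain the right two terms, but as written the intermediate claim is false and the two-root-cycle case inside Game A is left unaccounted for.

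Two further deviations are worth noting. First, for the event of reaching $v_{p,(n+1)/2,1}$ the paper does not invoke \lem{weldedlowerbound} at all: conditioned on no cycle, it directly computes that the embedding $\pi$ must follow a fixed path of length at least $2n$, each step chosen correctly with probability $1/2$, and union-bounds over the at most $2^{n/3}$ path-attempts in $T$, giving $2\cdot 2^{n/3}\cdot 2^{-2n}\le 2^{-5n/3}$. Your Lemma-based alternative is workable in spirit, but it requires a simulation argument (embedding a standalone welded tree instance into the circuit graph) and a union bound over the $3n$ welded trees, which would disturb the stated constant $(5+c_1)$. Second, your reduction in the third paragraph is weaker than needed: since consecutive layers share a single vertex, $v_{p,(n+1)/2,1}$ is a cut vertex, so \emph{every} valid \st path contains it and outputting such a path wins Game A outright; the ``or else threads a middle welded tree'' branch is unnecessary there --- that case belongs inside the cycle analysis of Game A, which is exactly where your proposal omits it.
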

 
  \begin{proof}
  Let $T$ be a random rooted binary tree with $2^{n/6}$ vertices and $\pi(T)$ be the image in the graph $G$ under the random embedding $\pi$. Given the name of the starting vertex $s$, similar to \cite{childs2003ExpSpeedupQW}, the probability of $\mathcal{A}$ winning Game $A$ can be expressed as the probability that $\pi(T)$ contains a cycle or the vertex $v_{p,(n+1)/2,1}$.

First, $\mathcal{A}$ has to enter a welded tree subgraph to find a cycle, as seen in \fig{graph}. There are two possibilities to get a cycle in a welded tree subgraph. One is to find a cycle that contains only one root in one of the welded tree subgraphs. In this case, \lem{weldedlowerbound} states that, in one of the welded tree subgraphs, starting from one root, any classical algorithm that makes at most $2^{n/6}$ queries to the oracle and finds the other root or a cycle with probability at most $ 4\cdot 2^{-n/6}$. The other is to find a cycle that contains two roots of a welded tree subgraph. By \assump{weldedpath}, any classical algorithm that makes at most $2^{n/6}$ queries to the oracle and finds such a cycle with probability at most $ c_1\cdot 2^{-c_2n}$.
  
  We can now assume that $\mathcal{A}$ will not encounter any cycle. Conditioned on this fact, the probability that $\mathcal{A}$ finds the name of the vertex $v_{p,(n+1)/2,1}$ can be expressed as the probability that $\pi(T)$ contains the vertex $v_{p,(n+1)/2,1}$, for which $\pi$ must follow the corresponding path $2n$ times, which has probability $2^{-2n}$. Since there are at most $2^{n/6}$ tries on each path of $T$ and there are at most $2^{n/6}$ paths, the probability of finding the name of the vertex $v_{p,(n+1)/2,1}$ is by the union bound at most $2^{n/3}2^{-2n} \leq 2^{-5n/3}$. We have the same result if the given name is $t$. Therefore, given the name of the starting vertex $s$ and $t$, the probability of $\mathcal{A}$ finding the vertex $v_{p,(n+1)/2,1}$ is $2\cdot 2^{n/3}2^{-2n} \leq 2^{-5n/3}$. 

  By combining the two cases with the union bound, we find that the probability of $\mathcal{A}$ winning Game $A$ is at most $2^{-5n/3} + (4+c_1)\cdot 2^{-\min\{c_2,1/6\}n} \leq (5+c_1)\cdot 2^{-\min\{c_2,1/6\}n}$. Since solving \probl{weldedGpath} automatically wins Game $A$, the theorem follows.
\end{proof}


\subsection*{Acknowledgements}
We thank Andrew Childs for referring the multidimensional quantum walk paper to JL as a flow approach, which led to this wonderful collaboration. We also thank anonymous reviewers for their very helpful and detailed comments. Part of the work was done while the authors were visiting QuICS, and JL was visiting Simons Institute, and we thank them for their hospitality. J.L. acknowledges funding from the National Science Foundation awards CCF-1618287, CNS-1617802, and CCF-1617710, and a Vannevar Bush Faculty Fellowship from the US Department of Defense. SZ is supported by an NWO Veni Innovational Research Grant under project number 639.021.752. JL was supported by NSF
Award FET-2243659 and NSF Career Award FET-2339116, Welch Foundation Grant no. A22-0307, a Microsoft Research Award, an Amazon Research Award, and a Ken Kennedy Research Cluster award (Rice K2I), in part through seed funding from the Ken Kennedy Institute at Rice University, the Rice CS Department, and the George R. Brown School of Engineering and Computing, by Rice University and the Department of Computer Science at Rice University, by the Ken Kennedy Institute and Rice Quantum Initiative, which is part of the Smalley-Curl Institute.

\bibliographystyle{quantum}
\bibliography{references}
 
\newpage
\begin{appendix}

\tocless\section{Proof of \lem{qwflows}}\label{app:qwflows}
Our analysis of the phase estimation algorithm, as in \cite{kitaev1996PhaseEst}, will use elements of the analyses in \cite{jeffery2023multidimensional} and \cite{piddock2019electricfind}, as well as the following lemma:
\begin{lemma}[Effective Spectral Gap Lemma \cite{lee2011QQueryCompStateConv}]\label{lem:effective-spectral-gap}
Fix $\epsilon\in [0,\pi)$, and let $\Lambda_\epsilon$ be the orthogonal projector onto the $e^{i\theta}$-eigenspaces of $U_{\cal AB}$ with $|\theta|\leq \epsilon$.
If $\ket{\phi} \in {\cal B}$, then 
$$\norm{\Lambda_\epsilon(I - \Pi_{\cal A})\ket{\phi}} \leq \frac{\epsilon}{2}\norm{\ket{\phi}}.$$
\end{lemma}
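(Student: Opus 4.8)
The plan is to exploit that $U_{\cal AB} = R_{\cal A}R_{\cal B}$ is a product of the two reflections $R_{\cal A} := 2\Pi_{\cal A}-I$ and $R_{\cal B} := 2\Pi_{\cal B}-I$, both of which are Hermitian and unitary. Since $U_{\cal AB}$ is unitary it admits an orthonormal eigenbasis, so it suffices to prove the bound against a single eigenvector and then sum. Concretely, I would fix a normalised eigenvector $\ket w$ with $U_{\cal AB}\ket w = e^{i\theta}\ket w$ and estimate the single overlap $\langle w|(I-\Pi_{\cal A})\ket\phi$, aiming to show its magnitude is exactly $|\sin(\theta/2)|\,|\langle w|\phi\rangle|$.

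First I would record the reflection identities that drive everything. Writing $I-\Pi_{\cal A} = \tfrac12(I-R_{\cal A})$, the overlap becomes $\langle w|(I-\Pi_{\cal A})\ket\phi = \tfrac12\big(\langle w|\phi\rangle - \langle w|R_{\cal A}|\phi\rangle\big)$. The key manipulation is to rewrite $\langle w|R_{\cal A}$: since $R_{\cal A}U_{\cal AB} = R_{\cal A}R_{\cal A}R_{\cal B} = R_{\cal B}$, the eigenrelation gives $R_{\cal B}\ket w = R_{\cal A}U_{\cal AB}\ket w = e^{i\theta}R_{\cal A}\ket w$; taking adjoints and using that $R_{\cal A},R_{\cal B}$ are Hermitian yields $\langle w|R_{\cal A} = e^{i\theta}\langle w|R_{\cal B}$. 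The hypothesis $\ket\phi\in{\cal B}$, i.e. $R_{\cal B}\ket\phi = \ket\phi$, then collapses this to $\langle w|R_{\cal A}|\phi\rangle = e^{i\theta}\langle w|\phi\rangle$. Substituting back gives $\langle w|(I-\Pi_{\cal A})\ket\phi = \tfrac12(1-e^{i\theta})\langle w|\phi\rangle$, whose magnitude is $|\sin(\theta/2)|\,|\langle w|\phi\rangle|$.

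With the per-eigenvector estimate in hand, I would choose an orthonormal eigenbasis $\{\ket{w_j}\}$ spanning the range of $\Lambda_\epsilon$, all with $|\theta_j|\le\epsilon$, so that $|\sin(\theta_j/2)|\le |\theta_j|/2\le \epsilon/2$. Then
$$\norm{\Lambda_\epsilon(I-\Pi_{\cal A})\ket\phi}^2 = \sum_j \abs{\langle w_j|(I-\Pi_{\cal A})\ket\phi}^2 \le \frac{\epsilon^2}{4}\sum_j \abs{\langle w_j|\phi\rangle}^2 = \frac{\epsilon^2}{4}\norm{\Lambda_\epsilon\ket\phi}^2 \le \frac{\epsilon^2}{4}\norm{\ket\phi}^2,$$
and taking square roots gives the claim.

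The computation is short, so there is no deep obstacle; the points that require care are purely bookkeeping. The first is justifying $\langle w|R_{\cal A} = e^{i\theta}\langle w|R_{\cal B}$ cleanly from Hermiticity of the two reflections together with the eigenrelation $R_{\cal B}\ket w = e^{i\theta}R_{\cal A}\ket w$. The second is ensuring that $\Lambda_\epsilon$ can genuinely be resolved into an orthonormal eigenbasis so that the Pythagorean sum over $j$ is valid, which is immediate because $U_{\cal AB}$ is unitary and hence normal. I expect the only thing to watch is that the bound $|\sin(\theta/2)|\le \epsilon/2$ is applied exclusively on the support of $\Lambda_\epsilon$, where by definition $|\theta|\le\epsilon$.
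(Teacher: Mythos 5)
Your proof is correct, but note that the paper contains no proof of this lemma to compare against: it is imported verbatim from \cite{lee2011QQueryCompStateConv}, and your argument is essentially the standard one from that reference. Every step checks: with $R_{\cal A}=2\Pi_{\cal A}-I$ and $R_{\cal B}=2\Pi_{\cal B}-I$, the identity $R_{\cal A}U_{\cal AB}=R_{\cal B}$ together with Hermiticity of the reflections gives $\bra{w}R_{\cal A}=e^{i\theta}\bra{w}R_{\cal B}$ for any unit eigenvector $\ket{w}$ with phase $\theta$; the hypothesis $R_{\cal B}\ket{\phi}=\ket{\phi}$ then yields the exact per-eigenvector identity $\bra{w}(I-\Pi_{\cal A})\ket{\phi}=\frac{1}{2}\left(1-e^{i\theta}\right)\braket{w}{\phi}$, of magnitude $\abs{\sin(\theta/2)}\,\abs{\braket{w}{\phi}}$; and since $U_{\cal AB}$ is unitary on a finite-dimensional space, the range of $\Lambda_\epsilon$ does admit an orthonormal eigenbasis (degeneracies are harmless, as the identity holds for every vector of a fixed eigenspace), so the Pythagorean summation with $\abs{\sin(\theta_j/2)}\le\epsilon/2$ is valid — you in fact obtain the slightly stronger bound $\frac{\epsilon}{2}\norm{\Lambda_\epsilon\ket{\phi}}$. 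One streamlining worth knowing: the eigenvector bookkeeping can be skipped entirely, because $\ket{\phi}\in{\cal B}$ gives the operator identity $(I-\Pi_{\cal A})\ket{\phi}=\frac{1}{2}(I-R_{\cal A})\ket{\phi}=\frac{1}{2}(I-U_{\cal AB})\ket{\phi}$, and since $\Lambda_\epsilon$ commutes with $U_{\cal AB}$ one gets $\Lambda_\epsilon(I-\Pi_{\cal A})\ket{\phi}=\frac{1}{2}(I-U_{\cal AB})\Lambda_\epsilon\ket{\phi}$, whose norm is at most $\frac{1}{2}\max_{\abs{\theta}\le\epsilon}\abs{1-e^{i\theta}}\,\norm{\ket{\phi}}=\sin(\epsilon/2)\norm{\ket{\phi}}\le\frac{\epsilon}{2}\norm{\ket{\phi}}$; this two-line version is exactly your per-eigenvector computation assembled at the operator level, and your adjoint manipulation $\bra{w}R_{\cal A}=e^{i\theta}\bra{w}R_{\cal B}$ is the coordinate form of it.
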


\begin{lemma}
  Define the unitary $U_{\cal AB} = (2\Pi_{\cal A} - 1)(2\Pi_{\cal B} - 1)$ acting on a Hilbert space ${\cal H}$ for projectors $\Pi_{\cal A},\Pi_{\cal B}$ onto some subspaces ${\cal A}$ and ${\cal B}$ of ${\cal H}$ respectively. Let $\ket{\psi} = \sqrt{p}\ket{\varphi} + (I - \Pi_{\cal A})\ket{\phi}$ be a normalised quantum state such that $U_{\cal AB}\ket{\varphi} = \ket{\varphi}$ and $\ket{\phi}$ is a (unnormalised) vector satisfying $\Pi_{\cal B}\ket{\phi} = \ket{\phi}$. Then performing phase estimation on the state $\ket{\psi}$ with operator $U_{\cal AB}$ and precision $\delta$ outputs ``$0$'' with probability $p' \in [p,p + \frac{17\pi^2\delta\norm{\ket{\phi}}}{16}]$, leaving a state $\ket{\psi'}$ satisfying
  $$ \frac{1}{2}\norm{\proj{\psi'} - \proj{\varphi}}_1 \leq \sqrt{\frac{17\pi^2\delta\norm{\ket{\phi}}}{16p}}. $$
  Consequently, when the precision is $O\left(\frac{p\epsilon^2}{\norm{\ket{\phi}}}\right)$, the resulting state $\ket{\psi'}$ satisfies
  $$ \frac{1}{2}\norm{\proj{\psi'} - \proj{\varphi}}_1 \leq \epsilon. $$
\end{lemma}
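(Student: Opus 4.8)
The plan is to run the standard analysis of phase estimation \cite{kitaev1996PhaseEst} in the eigenbasis of $U_{\cal AB}$ and to control the mass of $(I-\Pi_{\cal A})\ket\phi$ away from phase $0$ using the Effective Spectral Gap Lemma (\lem{effective-spectral-gap}). Fix an orthonormal eigenbasis $\{\ket{v_k}\}$ of $U_{\cal AB}$ with eigenphases $\theta_k$ and expand $\ket\psi = \sum_k c_k\ket{v_k}$. Coherent phase estimation to precision $\delta$ acts as an isometry $\ket{v_k}\ket0 \mapsto \ket{v_k}\ket{\mu(\theta_k)}$ on a fresh ancilla, where the estimate state satisfies $g(\theta):=\braket0{\mu(\theta)}$ with $g(0)=1$ and window $w(\theta):=|g(\theta)|^2$ concentrated on $|\theta|\lesssim\delta$. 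Measuring the ancilla and post-selecting on outcome ``$0$'' then gives $p' = \sum_k |c_k|^2 w(\theta_k)$ and normalised residual state $\ket{\psi'} = \tfrac1{\sqrt{p'}}\sum_k c_k g(\theta_k)\ket{v_k}$, so the whole problem reduces to estimating these two quantities.

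First I would isolate the exact phase-$0$ eigenspace $\mathcal V_0$. Applying \lem{effective-spectral-gap} with $\epsilon=0$ gives $\Lambda_0(I-\Pi_{\cal A})\ket\phi = 0$, so the component $\ket\chi := (I-\Pi_{\cal A})\ket\phi$ carries no weight in $\mathcal V_0$; since $U_{\cal AB}\ket\varphi=\ket\varphi$ places $\ket\varphi\in\mathcal V_0$, the phase-$0$ part of $\ket\psi$ is exactly $\sqrt p\ket\varphi$ and $\ket\varphi\perp\ket\chi$. Feeding this into the window expression and using $w(0)=1$ yields $p' = p + \bra\chi\, w(U_{\cal AB})\,\ket\chi$, where every remaining term is nonnegative, which already gives the lower bound $p'\geq p$. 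The same splitting shows $\braket{\varphi}{\psi'} = \sqrt{p}\,g(0)/\sqrt{p'} = \sqrt{p/p'}$, since $g(0)=1$ and $\ket\varphi$ is orthogonal to every non-phase-$0$ eigenvector.

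The main technical step, and the one I expect to be the real obstacle, is the upper bound on $p'-p = \bra\chi\, w(U_{\cal AB})\,\ket\chi = \int w(\theta)\,d\mu_\chi(\theta)$, where $\mu_\chi$ is the spectral measure of $\ket\chi$. Here \lem{effective-spectral-gap} supplies the tail bound $\mu_\chi([-\epsilon,\epsilon]) = \norm{\Lambda_\epsilon\ket\chi}^2 \leq \tfrac{\epsilon^2}{4}\norm{\ket\phi}^2$ for every $\epsilon$, saying $\ket\chi$ is essentially supported on phases $\gtrsim\delta$ where $w$ is already decaying. I would integrate $w$ against this cumulative bound (integration by parts, splitting the spectrum into dyadic bands around $0$ and using the explicit decay of the phase-estimation kernel), producing a bound of the shape $C\,\delta^{a}\norm{\ket\phi}^{b}$. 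Reproducing exactly the constant $17\pi^2/16$ and the precise powers of $\delta$ and $\norm{\ket\phi}$ requires fixing the operational meaning of ``precision $\delta$'' together with the matching kernel $w$, and I expect this bookkeeping to be where the scaling discrepancy with \cite{piddock2019electricfind,apers2022elfs} that the authors flag actually originates.

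Finally the trace-distance and ``consequently'' claims follow at no extra cost from the probability bound. Since $\ket{\psi'}$ and $\ket\varphi$ are pure, $\tfrac12\norm{\proj{\psi'}-\proj\varphi}_1 = \sqrt{1-|\braket\varphi{\psi'}|^2} = \sqrt{(p'-p)/p'} \leq \sqrt{(p'-p)/p}$; substituting the claimed $p'-p \leq \tfrac{17\pi^2\delta\norm{\ket\phi}}{16}$ gives $\sqrt{17\pi^2\delta\norm{\ket\phi}/(16p)}$, and taking $\delta = O(p\epsilon^2/\norm{\ket\phi})$ makes the right-hand side at most $\epsilon$, completing the argument.
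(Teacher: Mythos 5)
Your overall architecture matches the paper's own proof: the same Fej\'er-kernel expression for the probability of outcome ``$0$'', the same application of \lem{effective-spectral-gap} at $\epsilon=0$ to kill the phase-$0$ component of $\ket{\chi}=(I-\Pi_{\cal A})\ket{\phi}$ (yielding $p'\geq p$ and $\braket{\varphi}{\psi'}=\sqrt{p/p'}$), and the same pure-state trace-distance computation plus substitution of the precision at the end. All of those steps are correct as you state them.

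The gap is exactly where you flag it, and it is not mere bookkeeping: you never establish $p'-p\leq \frac{17\pi^2\delta\norm{\ket{\phi}}}{16}$, and the method you sketch does not naturally produce a bound of that form. Integrating the kernel $w(\theta)\leq \pi^2\min\left\{\frac{1}{4},\frac{1}{T^2\theta^2}\right\}$ against the cumulative ESG bound $\norm{\Lambda_\epsilon\ket{\chi}}^2\leq\frac{\epsilon^2}{4}\norm{\ket{\phi}}^2$ over dyadic bands gives a bound of order $\delta^2\norm{\ket{\phi}}^2\log\left(1/(\delta\norm{\ket{\phi}})\right)$ --- quadratic in $\norm{\ket{\phi}}$ and in $\delta$ up to a logarithm --- which is a genuinely different expression from the claimed one; converting it into the stated inequality would require a separate comparison argument with explicit constants that you do not supply, since neither bound dominates the other identically. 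What the paper does instead is a single split of the spectrum at the threshold $\epsilon_0=\sqrt{1/(T\norm{\ket{\phi}})}=\sqrt{\delta/\norm{\ket{\phi}}}$, chosen precisely to balance the two regimes: for $0<|\theta_j|<\epsilon_0$ it uses the uniform kernel bound $\frac{1}{T^2}\frac{\sin^2(T\theta_j/2)}{\sin^2(\theta_j/2)}\leq\frac{\pi^2}{4}$ together with the ESG mass bound $\frac{\epsilon_0^2}{4}\norm{\ket{\phi}}^2=\frac{\norm{\ket{\phi}}}{4T}$, contributing $\frac{\pi^2\delta\norm{\ket{\phi}}}{16}$; for $|\theta_j|\geq\epsilon_0$ it discards ESG entirely and pairs the kernel decay $\frac{\pi^2}{T^2\theta_j^2}\leq\frac{\pi^2\norm{\ket{\phi}}}{T}$ with total mass at most $1$, contributing $\pi^2\delta\norm{\ket{\phi}}$. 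Summing gives the constant $\frac{17\pi^2}{16}$ and, crucially, scaling that is \emph{linear} in both $\delta$ and $\norm{\ket{\phi}}$, which is exactly what the ``consequently'' clause with precision $O\left(p\epsilon^2/\norm{\ket{\phi}}\right)$ relies on. Without identifying this split (or an equivalent device), your argument proves a lemma with unspecified exponents, not the lemma in question.
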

\begin{proof}

By the promise that $\ket{\psi} = \sqrt{p}\ket{\varphi} + (I - \Pi_{\cal A})\ket{\phi}$ with $\Pi_{\cal B}\ket{\phi} = \ket{\phi}$, we can apply \lem{effective-spectral-gap} to obtain
\begin{equation}\label{eq:proj-diff}
  \norm{\Lambda_\epsilon\left(\ket{\psi} - \sqrt{p}\ket{\varphi}\right)} = \norm{\Lambda_\epsilon(I - \Pi_{\cal A})\ket{\phi}} \leq \frac{\epsilon}{2}\norm{\ket{\phi}}.
\end{equation}

Let $\{\theta_j\}_{j\in J}\subset (-\pi,\pi]$ be the set of phases of $U_{\cal AB}$, and let $\Pi_j$ be the orthogonal projector onto the $e^{i\theta_j}$-eigenspace of $U_{\cal AB}$, so we can write
\begin{equation*}
  U_{\cal AB} = \sum_{j\in J} e^{i\theta_j}\Pi_j.
\end{equation*}

Phase estimation starts by making a superposition over $t$ from $0$ to $T-1$ in the phase register, where $T = 1/\delta$, and conditioned on this register we apply $U_{\cal AB}^t$ to $\ket{\psi}$, creating
\begin{equation*}
\begin{split}
  \sum_{t=0}^{T-1}\frac{1}{\sqrt{T}}\ket{t}U_{\cal AB}^t\ket{\psi} &= \sum_{j\in J}\sum_{t=0}^{T-1}\frac{1}{\sqrt{T}}\ket{t}e^{it\theta_j}\Pi_j\ket{\psi}.
\end{split}
\end{equation*}

The phase estimation algorithm then proceeds by applying an inverse Fourier transform, $F_T^\dagger$, to the first register, and then measuring the result. The probability $p'$ of measuring $0$ is
\begin{equation}
\begin{split}
p' &:= \norm{\bra{0}F_T^\dagger \otimes I\left( \sum_{j\in J} \sum_{t=0}^{T-1}\frac{1}{\sqrt{T}}\ket{t}e^{it\theta_j}\Pi_j\ket{\psi} \right)}^2
= \norm{\sum_{t=0}^{T-1}\frac{1}{\sqrt{T}}\bra{t}\otimes I \left( \sum_{j\in J} \sum_{t=0}^{T-1}\frac{1}{\sqrt{T}}\ket{t}e^{it\theta_j}\Pi_j\ket{\psi} \right)}^2\\
&= \frac{1}{T^2}\norm{\sum_{j\in J}\sum_{t=0}^{T-1}e^{it\theta_j}\Pi_j\ket{\psi}}^2
= \frac{1}{T^2}\sum_{j\in J:\theta_j\neq 0}\left| \frac{1-e^{i\theta_j T}}{1-e^{i\theta_j}} \right|^2 \norm{\Pi_j\ket{\psi}}^2 + \norm{\Lambda_0\ket{\psi}}^2\\
&=\frac{1}{T^2}\sum_{j\in J:\theta_j\neq 0}\frac{\sin^2(T\theta_j/2)}{\sin^2(\theta_j/2)}\norm{\Pi_j\ket{\psi}}^2+\norm{\Lambda_0\ket{\psi}}^2,
\end{split}\label{eq:phase-prob}
\end{equation}
since $\abs{\sum_{t=0}^{T-1}e^{i t\theta}} = \abs{\frac{1-e^{i\theta T}}{1-e^{i\theta}}}$,
and $|1-e^{i\theta}|^2=4\sin^2\frac{\theta}{2}$ for any $\theta\in\mathbb{R}$.

\noindent For the lower bound on $p'$, we know by \eq{phase-prob} that
\begin{equation*}
    \begin{split}
        p' &\geq \norm{\Lambda_0\ket{\psi}}^2 \geq |\brakett{\psi}{\varphi}|^2= p.
    \end{split}
\end{equation*}
We note that the second inequality in the above equation is in fact an equality, due to \eq{proj-diff}:
\begin{equation}
    \begin{split}
        \norm{\Lambda_{0}\ket{\psi}}^2 \leq \norm{\Lambda_{0}\left(\ket{\psi} - \sqrt{p}\ket{\varphi}\right)}^2 + \norm{\Lambda_{0}\sqrt{p}\ket{\varphi}}^2 =  \norm{\Lambda_{0}\sqrt{p}\ket{\varphi}}^2 \leq p.
    \end{split}\label{eq:bound-0space}
\end{equation}
For the upper bound, we continue where we left off in \eq{phase-prob}:
\begin{align}\label{eq:upper-p-total}
    p' &= \frac{1}{T^2}\sum_{j\in J:\theta_j \neq 0}\frac{\sin^2(T\theta_j/2)}{\sin^2(\theta_j/2)}\norm{\Pi_j\ket{\psi}}^2 
    + \norm{\Lambda_0\ket{\psi}}^2.
\end{align}
We split this sum into two parts, depending on whether the value of $\theta_j$ is smaller or larger than $\sqrt{1/(T\norm{\ket{\phi}})}$. In both parts, we make use of the identity $\sin^2\theta\leq \min\{1,\theta^2\}$ for all $\theta$. By inserting the same resolution of the identity as in \eq{bound-0space}, we upper bound the first sum as
\begin{equation}\label{eq:upper-p-1}
\begin{split}
    &\frac{1}{T^2}\sum_{j\in J:0 < \theta_j < \sqrt{1/(T\norm{\ket{\phi}})}}\frac{\sin^2(T\theta_j/2)}{\sin^2(\theta_j/2)}\norm{\Pi_j\ket{\psi}}^2\\
    &\leq \frac{1}{T^2}\sum_{j\in J:0 < \theta_j < \sqrt{1/(T\norm{\ket{\phi}})}}\frac{\sin^2(T\theta_j/2)}{\sin^2(\theta_j/2)}\left(\norm{\Pi_j\left(\ket{\psi} - \sqrt{p}\ket{\varphi}\right)}^2 
    + \norm{\Pi_j\sqrt{p}\ket{\varphi}}^2\right)\\
    &= \frac{1}{T^2}\sum_{j\in J:0 < \abs{\theta_j} < \sqrt{1/(T\norm{\ket{\phi}})}}\frac{\sin^2(T\theta_j/2)}{\sin^2(\theta_j/2)}\norm{\Pi_j\left(\ket{\psi} - \sqrt{p}\ket{\varphi}\right)}^2\\
    &\leq \frac{1}{T^2}\sum_{j\in J:0 < \abs{\theta_j} < \sqrt{1/(T\norm{\ket{\phi}})}}\frac{\pi^2T^2}{4}\frac{\norm{\ket{\phi}}}{4T} \hspace{5cm}\mbox{by \lem{effective-spectral-gap}} \\
    &\leq \frac{\pi^2\delta\norm{\ket{\phi}}}{16}.
\end{split}
\end{equation}
For the second sum, we additionally use the bound $\sin^2(\theta/2)\geq \theta^2/\pi^2$ whenever $|\theta|\leq \pi$:
\begin{equation}\label{eq:upper-p-2}
    \begin{split}
    &\frac{1}{T^2}\sum_{j\in J:\abs{\theta_j} \geq \sqrt{1/(T\norm{\ket{\phi}})}}\frac{\sin^2(T\theta_j/2)}{\sin^2(\theta_j/2)}\norm{\Pi_j\ket{\psi}}^2
    \leq \frac{1}{T^2}\sum_{j\in J:\abs{\theta_j} \geq \sqrt{1/(T\norm{\ket{\phi}})}}\frac{1}{(\theta_j/\pi)^2}\norm{\Pi_j\ket{\psi}}^2 \\
    &\leq \frac{\pi^2\norm{\ket{\phi}}}{T}\sum_{j\in J:\abs{\theta_j} \geq \sqrt{1/(T\norm{\ket{\phi}})}}\norm{\Pi_j\ket{\psi}}^2 \leq \pi^2\delta\norm{\ket{\phi}}.
\end{split}
\end{equation}

\noindent By substituting the upper bounds from \eq{bound-0space},\eq{upper-p-1} and \eq{upper-p-2} into \eq{upper-p-total}, we conclude that
\begin{align*}
    p' \leq \frac{\pi^2\delta\norm{\ket{\phi}}}{16} + \pi^2\delta\norm{\ket{\phi}} + p = \frac{17\pi^2\delta\norm{\ket{\phi}}}{16} + p.
\end{align*}

Finally, let $\ket{\psi'}$ be the (normalised) post measurement state after measuring $0$. We abbreviate ${\sf PE}$ for the phase estimation algorithm followed by the projection onto measuring $0$, as described in \eq{phase-prob}, such that $\ket{\psi'} = \frac{1}{\sqrt{p'}}{\sf PE}\ket{\psi}$. Note that since $\ket{\varphi}$ is an $1$-eigenvector of $U$, we have $\ket{\varphi} = {\sf PE}\ket{\varphi}$, meaning we can conclude the lemma via the inequality
\begin{equation*}
    \begin{split}
        \frac{1}{2}\norm{\proj{\hat{w}} - \proj{\varphi}}_1 &\leq \sqrt{1 -\abs{\brakett{\hat{w}}{\varphi}}^2} = \sqrt{1 -\frac{\abs{\bra{\psi}{\sf PE}\ket{\varphi}}^2}{p'}}\\
        & = \sqrt{1 -\frac{p}{p'}}\leq \sqrt{\frac{17\pi^2\delta\norm{\ket{\phi}}}{16p}}.
    \end{split}
\end{equation*}
\end{proof}

\end{appendix}

\end{document}

zursebastian@gmail.com